\documentclass[acmsmall,screen,nonacm]{acmart}

\usepackage{amsmath, amssymb}
\usepackage{bbm}
\usepackage{listings}
\usepackage{xcolor}
\usepackage{cleveref}
\usepackage{tikz-cd}
\usepackage{mathpartir}
\lstdefinelanguage{agda}{
    keywords={record, data, =, $\forall$, where, eq/, squash/, _/_, [_], $\lambda$ $\equiv$, coinductive},
    basicstyle=\linespread{0.87}\footnotesize\ttfamily,
    aboveskip=3pt plus 1pt minus 1pt,
    belowskip=3pt plus 1pt minus 1pt,
    keywordstyle=\color{blue},
    commentstyle=\itshape\rmfamily,
    showstringspaces=false,
    columns=flexible,
    breaklines=true,
    texcl=true,
    mathescape=true,
    tabsize=4,
    stringstyle=\color{brown},
    escapeinside={(@}{@)},
}

\usepackage{etoolbox}
\BeforeBeginEnvironment{lstlisting}{\par\noindent\begin{minipage}{\linewidth}}
\AfterEndEnvironment{lstlisting}{\end{minipage}\par}
\makeatletter
\g@addto@macro\normalsize{%
  \setlength\abovedisplayskip{3pt plus 2pt minus 2pt}%
  \setlength\belowdisplayskip{3pt plus 2pt minus 2pt}%
  \setlength\abovedisplayshortskip{1pt plus 1pt}%
  \setlength\belowdisplayshortskip{1pt plus 1pt}%
}
\def\thm@space@setup{%
  \thm@preskip=3pt plus 1pt minus 1pt
  \thm@postskip=\thm@preskip}
\renewcommand\section{\def\@toclevel{1}%
  \@startsection{section}{1}{\z@}%
  {-.55\baselineskip \@plus -2\p@ \@minus -.2\p@}%
  {.15\baselineskip}%
  {\ACM@NRadjust\@secfont}}
\renewcommand\subsection{\def\@toclevel{2}%
  \@startsection{subsection}{2}{\z@}%
  {-.55\baselineskip \@plus -2\p@ \@minus -.2\p@}%
  {.15\baselineskip}%
  {\ACM@NRadjust\@subsecfont}}
\let\ACM@origsection\section
\let\ACM@origsubsection\subsection
\makeatother
\usepackage{enumitem}
\setlist{topsep=2pt,itemsep=1pt,parsep=0pt}
\usepackage{xpatch}
\makeatletter
\xpatchcmd{\proof}{\topsep6\p@\@plus6\p@\relax}{\topsep3\p@\@plus2\p@\relax}
  {}{\PackageWarning{Main}{proof topsep patch failed}}
\makeatother
\AtBeginDocument{%
  }

\usepackage{thmtools}

\theoremstyle{plain}
\newtheorem{theorem}{Theorem}[section]
\declaretheorem[sibling=theorem]{lemma}
\declaretheorem[sibling=theorem]{proposition}

\theoremstyle{definition}
\declaretheorem[sibling=theorem]{definition}
\declaretheorem[sibling=theorem]{example}

\theoremstyle{remark}
\declaretheorem[sibling=theorem]{remark}

\crefname{theorem}{theorem}{theorems}
\Crefname{theorem}{Theorem}{Theorems}

\crefname{lemma}{lemma}{lemmas}
\Crefname{lemma}{Lemma}{Lemmas}

\crefname{proposition}{proposition}{propositions}
\Crefname{proposition}{Proposition}{Propositions}

\crefname{corollary}{corollary}{corollaries}
\Crefname{corollary}{Corollary}{Corollaries}

\crefname{definition}{definition}{definitions}
\Crefname{definition}{Definition}{Definitions}

\crefname{example}{example}{examples}
\Crefname{example}{Example}{Examples}

\crefname{remark}{remark}{remarks}
\Crefname{remark}{Remark}{Remarks}

\newcommand{\outext}[2]{\text{oute}}

\newcommand{\outpart}[1]{\text{outp}}

\newcommand{\fst}[1]{\mathsf{fst}\,#1}

\DeclareMathOperator{\unit}{\mathbbm{1}}
\DeclareMathOperator{\unitt}{\ast}

\newcommand{\Type}{\mathsf{Type}}

\makeatletter
\newcommand{\proofpart}[2]{%
  \par
  \noindent\emph{Part #1: #2}\par\nobreak
  \@afterheading
}
\makeatother
\newcommand{\defineeq}{=_{\text{def}}}

\providecommand{\agdaurl}{https://prt-ema.pages.dev/agda}

\newcommand{\agdalogo}{%
  {\sffamily\bfseries\ensuremath{\checkmark}}%
}

\newcommand{\agda}[2]{\,\href{\agdaurl/#1.html\##2}{\agdalogo}}
\newcommand{\agdamod}[1]{\,\href{\agdaurl/#1.html}{\agdalogo}}
\begin{document}

\title{Proof Repair and (Co)algebra Equivalences}
\author{E.G. Plante III}
\affiliation{%
  \institution{University of Illinois at Urbana-Champaign}
  \country{USA}
}
\author{Eric Paul}
\affiliation{%
  \institution{University of Illinois at Urbana-Champaign}
  \country{USA}
}
\author{Cosmo Viola}
\affiliation{%
  \institution{University of Illinois at Urbana-Champaign}
  \country{USA}
}
\author{G.A. Kavvos}
\affiliation{%
  \institution{University of Bristol}
  \country{United Kingdom}
}
\author{Talia Ringer}
\affiliation{%
  \institution{University of Illinois at Urbana-Champaign}
  \country{USA}
}

\renewcommand{\shortauthors}{E.G. Plante III et al.}
%%
%% The abstract is a short summary of the work to be presented in the
%% article.
\begin{abstract}
Maintaining formal proofs in response to breaking changes in their dependencies 
is a major challenge of proof engineering. \emph{Proof repair} is a kind of proof automation that lessens the burden of this challenge by automatically adapting proofs in response to those changes.
While proof repair is well understood through an algorithmic lens, it has, until now, been poorly understood
through a theoretical lens.
We address this gap by giving the first foundational treatment of proof repair,
focusing in particular on type-theoretic proof repair in response to changes in datatypes.
Our key insight is that the core component of a proof repair algorithm---a ``configuration''---is exactly an algebra isomorphism.
We formalize and prove this insight, answering open questions from prior work.
We then take this one step further and show that, by taking the dual of this representation, we obtain proof repair over coinductive types.
All of our results are formalized in Agda.
\end{abstract}

\begin{CCSXML}
<ccs2012>
   <concept>
       <concept_id>10003752.10003790.10011740</concept_id>
       <concept_desc>Theory of computation~Type theory</concept_desc>
       <concept_significance>500</concept_significance>
       </concept>
   <concept>
       <concept_id>10011007.10010940.10010992.10010998.10010999</concept_id>
       <concept_desc>Software and its engineering~Software verification</concept_desc>
       <concept_significance>500</concept_significance>
       </concept>
 </ccs2012>
\end{CCSXML}

\ccsdesc[500]{Theory of computation~Type theory}
\ccsdesc[500]{Software and its engineering~Software verification}

\NewDocumentCommand{\commentWrapper}{m}{\ifx\commentOut\undefined#1\else\fi}

\definecolor{commentblue}{RGB}{50,50,244}
\definecolor{commentpink}{RGB}{184,25,255}
\definecolor{commentgreen}{RGB}{0, 100, 0}

\newcommand\makenote[4]{\ifx\commentOut\undefined\par\noindent\begin{tcolorbox}
[size=small,colback=#2]\color{#3}\textbf{#4: }#1\end{tcolorbox}\noindent\else\fi}
\newcommand\talia[1]{{\color{commentgreen}\commentWrapper{(Talia: #1)}}}
\newcommand\todo[1]{{\color{commentblue}\commentWrapper{(TODO: #1)}}}
\newcommand\shit[1]{{\color{commentpink}\commentWrapper{[#1]}}}

%%
%% Keywords. The author(s) should pick words that accurately describe
%% the work being presented. Separate the keywords with commas.
\keywords{Proof Repair, Type Theory, Category Theory}

%%
%% This command processes the author and affiliation and title
%% information and builds the first part of the formatted document.
\maketitle

\section{Introduction}
\label{sec:intro}

Formal proof developments in proof assistants like Agda, Rocq, and Lean~\cite{agda,rocq, lean} have grown larger as the technology has grown more accessible.
One timeless challenge in large formal proof developments is the burden of change~\cite{qedatlarge}.
Proof repair has emerged as a promising tool to cope with change by automatically fixing formal proofs in response to breaking changes.

This paper is about a class of tools for coping with changes in datatypes in proof developments that we refer to as \emph{type-theoretic proof repair}~\cite{ringer2021pldi,ringer_proof_2021,viola2025oopsla}: given an old and a new version of a datatype, these tools automatically rewrite the programs and proofs stated over the old version into programs and proofs stated over the new one.
Type-theoretic proof repair leverages the underlying type theory of a proof assistant to establish relationships between datatypes across which programs and proofs can be repaired.
This is, of course, not the only kind of proof repair---other repair tools exist that, for example, repair proofs concerning higher-order imperative code~\cite{gopinathan_mostly_2023}, or leverage non-type-theoretic methods like large language models~\cite{palm} (see \Cref{sec: related} for more).

The objective of this paper is to provide a mathematical foundation for type-theoretic proof repair tools that clarifies the scope of changes contemporary tools can handle and the extensions future repair tools can support.
To this end, we develop the connection between proof repair and the categorical notion of algebras on endofunctors, from which results concerning the limits, scope, and extensions of repair follow directly.
Our treatment characterizes the objects at the heart of repair tools; the syntactic transformations the tools perform over terms are beyond our present scope.

The principal theme of our work is that \textbf{proof repair is an algebra isomorphism}.
We bring this conclusion to light by relating prior proof repair work to the type-theoretic understanding of datatypes; the goal is to connect these two fields and leverage type theory to obtain formal results about proof repair tools.
Once established, the connection between algebra and repair gives a mathematical explanation for why repair works, settles questions left open in prior work, and relaxes assumptions repair was thought to require.
It also points beyond current tools, to repair for types no existing tool can handle.
All of our results and examples are marked with a blue checkmark linking to the corresponding Agda formalization.
\subsubsection*{Contributions}
Concretely, our contributions are:
\begin{itemize}
  \item We give a formal account of proof repair within type theory, which lays the groundwork for its connection to algebras (\Cref{sec: config}).
  In doing so, we formally establish results about the fundamental constructs of repair tools that prior work had only conjectured informally.
  \item We demonstrate that proof repair can be understood through the
notion of initial algebras on a suitable functor (\Cref{sec: alg-repair}). This new viewpoint clarifies the
class of relationships between types that configuration-based repair captures, characterizing them exactly in terms of universal properties. We recast a
case study from the original work on proof repair in terms of algebras, and formalize the
connection to proof repair algorithms. Our main result asserts that configurations are algebra isomorphisms.
  \item We find that repair does not require univalence, as was previously thought, and so applies to type theories that are inconsistent with univalence. 
  This opens the door for new implementations of repair in proof assistants like Lean (\Cref{subsec: non-ua}).
  \item With our new categorical perspective on proof repair, we develop
the foundations for repair algorithms that handle a new class of types: coinductive types (\Cref{sec: coinductive}).
  
\end{itemize}

\subsection*{Motivation}

We introduce proof repair using a classic case study \agdamod{inductive-repair.examples.nat}.
Suppose that a proof engineer has previously developed a verified library using the \texttt{Nat} datatype. They want to change the library to use a more efficient binary representation, \texttt{Bin}:

\begin{minipage}{0.48\textwidth}
\begin{lstlisting}
data Nat : Type where
    zero : Nat
    succ : Nat -> Nat
\end{lstlisting}
\end{minipage}
\hfill
\begin{minipage}{0.48\textwidth}
\begin{lstlisting}
data Pos : Type  where
 one : Pos
 pb0 : Pos -> Pos
 pb1 : Pos -> Pos

data Bin : Type where
 bzero : Bin
 bpos  : Pos -> Bin
\end{lstlisting}
\end{minipage}
The \texttt{Bin} datatype has a constructor for zero, \texttt{bzero}, and a constructor for positive numbers, \texttt{bpos}, which according to the \texttt{Pos} constructors is either $1$, $2n$, or $2n+1$ for some positive number $n$, itself represented by a term of type \texttt{Pos}.
Proof repair is the study of algorithms that replace one of these representations with the other. To understand what such an algorithm must accomplish, consider the following definitions:

\noindent
\begin{minipage}{0.43\textwidth}
\begin{lstlisting}
isZeroN : Nat -> Bool
isZeroN zero     = true
isZeroN (succ _) = false
\end{lstlisting}
\end{minipage}
\hfill
\begin{minipage}{0.55\textwidth}
\begin{lstlisting}
zUnit : (n : Nat) -> n + zero = n
zUnit zero     = refl
zUnit (succ m) = ap succ (zUnit m)
\end{lstlisting}
\end{minipage}
While making a datatype change in a verified library, a proof engineer may encounter many such programs and proofs, requiring broad changes to their development.
A simple way of achieving that is to compose each old term with a conversion function \texttt{toNat : Bin -> Nat}: 

\noindent
\begin{minipage}{0.43\textwidth}
\begin{lstlisting}
isZeroB-nat : Bin -> Bool
isZeroB-nat b = isZeroN (toNat b)
\end{lstlisting}
\end{minipage}
\hfill
\begin{minipage}{0.55\textwidth}
\begin{lstlisting}
zUnitB-nat : (b : Bin) -> toNat b + zero = toNat b
zUnitB-nat b = zUnit (toNat b)
\end{lstlisting}
\end{minipage}
However, this output is unsatisfying because the transferred terms still route every operation back through \texttt{Nat} via \texttt{toNat}, so the old representation remains a dependency of the library.

Repair algorithms aim to do better. When given a \emph{configuration} (\Cref{sec: config}), i.e., a decomposition of a type equivalence into a correspondence between the constructors and eliminators of \texttt{Nat} and \texttt{Bin}, the repair algorithm specializes the old code to the new datatype, producing native definitions over \texttt{Bin} that never mention \texttt{Nat}:

\noindent
\begin{minipage}{0.43\textwidth}
\begin{lstlisting}
isZeroB : Bin -> Bool
isZeroB bzero    = true
isZeroB (bpos _) = false
\end{lstlisting}
\end{minipage}
\hfill
\begin{minipage}{0.55\textwidth}
\begin{lstlisting}
zUnitB : (b : Bin) -> b + bzero = b
zUnitB bzero    = refl
zUnitB (bpos p) = ap bpos (pUnit p)
\end{lstlisting}
\end{minipage}

This algorithmic view of proof repair is well understood~\cite{ringer2021pldi}.
Before developing the algebra perspective, we give a formal account of proof repair, both to connect it to our abstract development and to establish results about the mechanics of repair.

\subsection*{Type Theory, Notation, and Conventions}\label{sec: conventions}

We work within an intensional Martin-L\"of type theory with dependent products, dependent sums, and identity types, using Agda as a proxy for this type theory.
Our development uses the homotopy-theoretic reading of the identity type~\cite{hottbook}, but assumes no univalence principles.
The only axioms we assume on the inductive side are function extensionality together with its computation rule on \textsf{happly} and the corresponding uniqueness rule (every path of functions is the extensionality of its \textsf{happly}).
For the coinductive development (\Cref{sec: coinductive}) we work in Cubical Agda~\cite{vezzosi_cubical_agda}, whose path type supports the copattern-style reasoning that coinductive types demand; there, function extensionality is a theorem, and again we make no use of univalence (or of other cubical-specific axioms beyond the path type itself).

We write $=$ for the identity (path) type, $p \bullet q : a = c$ for the composition of paths $p : a = b$ and $q : b = c$, $\mathsf{tpt}\,P\,p : P\,a \to P\,b$ for transport along $p : a = b$, and $\defineeq$ for definitions. We refer to a term of type $\Type$ as a \emph{carrier}.
We write $(a : A) \to B\,a$ for dependent functions and freely use $\to$ for its non-dependent instance, with $g \circ f : A \to C$ the composite of functions $f : A \to B$ and $g : B \to C$.
We write $(a : A) \times B\,a$ for dependent sums and freely use $\times$ for its non-dependent instance, with projections $\mathsf{fst}$ and $\mathsf{snd}$. We write $\unit$ for the unit type and $\unitt : \unit$ for its inhabitant. 
A \emph{type equivalence} $f : A \simeq B$ is a bi-invertible map: a function $f : A \to B$ together with a left inverse and a right inverse.
A type is \emph{contractible} when it has a term to which every term is equal; it is a \emph{proposition} when any two of its terms are equal, and a \emph{set} when its identity types are all propositions.
We suppress universe levels throughout; the formalization is universe-polymorphic.
Finally, when one equation depends on another we display it as a plain equality rather than writing the transport or dependent path explicitly; the formal Agda statements carry the dependency in full.

\section{Proof Repair via Configurations}\label{sec: config}

In this section, we internalize the core machinery of repair algorithms, \emph{configurations}, in type theory. 
The formalization of proof repair serves as the nexus of the connection we draw between repair and (co)algebras, and simultaneously provides a source of the questions the rest of the paper seeks to answer.
The results developed here about configurations are primarily confirmations of facts that were thought to be true of configurations, but were not previously made precise or proven in a formal setting.
We begin by informally working through the execution of a proof repair algorithm, and then give a formal definition of configurations.

The prior work on tools for type-theoretic repair (PUMPKIN Pi~\cite{ringer2021pldi,viola2025oopsla})
develops the notion of a configuration to capture a decomposed version of an equivalence $f : A \simeq B$ at the level of the metalanguage.
The repair tool then operates on the syntax of the type theory, replacing terms of type $A$ with terms of type $B$ according to this specification.
A configuration for the motivating example of \Cref{sec:intro} names each type's constructors by \texttt{DepConstr} and its dependent eliminator by \texttt{DepElim}, whose computation rules are \texttt{Iota}.\footnote{
We omit the \texttt{Eta} data that prior work included because our formalization shows it is redundant.
}
Together these present \texttt{Nat} and \texttt{Bin} through a common interface that pairs each \texttt{Nat} constructor (resp. eliminator) with the \texttt{Bin} constructor (resp. eliminator) it corresponds to under the equivalence, with constructors numbered by the metalanguage indices \texttt{0} and \texttt{1}.
Since \texttt{Bin} has no successor constructor, the configuration pairs \texttt{succ} with the successor function \texttt{binSucc : Bin -> Bin}; the constructors a configuration names need not be the datatype's own.
\begin{lstlisting}
-- DepConstr: constructors, indexed by type and constructor number
DepConstr Nat 0 = zero      DepConstr Bin 0 = bzero
DepConstr Nat 1 = succ      DepConstr Bin 1 = binSucc

-- DepElim: a dependent eliminator for each type
DepElim Nat : (P : Nat -> Type) -> P (DepConstr Nat 0) -> ((n : Nat) -> P n
              -> P (DepConstr Nat 1 n)) -> (n : Nat) -> P n
DepElim Bin : (P : Bin -> Type) -> P (DepConstr Bin 0) -> ((b : Bin) -> P b 
              -> P (DepConstr Bin 1 b)) -> (b : Bin) -> P b

-- Iota: the eliminator computes on each constructor
Iota Nat 0 : DepElim Nat P pz ps (DepConstr Nat 0)   = pz
Iota Nat 1 : DepElim Nat P pz ps (DepConstr Nat 1 n) = ps n (DepElim Nat P pz ps n)
Iota Bin 0 : DepElim Bin P pz ps (DepConstr Bin 0)   = pz
Iota Bin 1 : DepElim Bin P pz ps (DepConstr Bin 1 b) = ps b (DepElim Bin P pz ps b)
\end{lstlisting}

Proof repair tools then use the decomposition of the source and target types' inductive structure to automatically repair code by replacing the original constructors and eliminators component-wise with the new ones.
The repair algorithm replaces every \texttt{DepElim Nat} by \texttt{DepElim Bin}, \texttt{DepConstr Nat} by \texttt{DepConstr Bin}, and \texttt{Iota Nat} by \texttt{Iota Bin}, while the remaining case data is carried over unchanged.
For example, the proof \texttt{zUnit} given below would be repaired into \texttt{zUnitB}:
\begin{minipage}[t]{0.49\textwidth}
\begin{lstlisting}
zUnit : (n : Nat) -> n + zero = n
zUnit = DepElim Nat ($\lambda$n -> n + zero = n)
  (Iota Nat 0)
  ($\lambda$n ih -> Iota Nat 1 $\bullet$ ap succ ih)
\end{lstlisting}
\end{minipage}
\hfill
\begin{minipage}[t]{0.49\textwidth}
\begin{lstlisting}
zUnitB : (b : Bin) -> b + bzero = b
zUnitB = DepElim Bin ($\lambda$b -> b + bzero = b)
  (Iota Bin 0)
  ($\lambda$b ih -> Iota Bin 1 $\bullet$ ap binSucc ih)
\end{lstlisting}
\end{minipage}

The basis of a configuration lies in the constructors of a type, which may take recursive and non-recursive arguments. We specify these constructors by a \emph{signature}, and call an interpretation of a signature on a carrier a \textit{constructor algebra}.
Our definitions here are inspired by the work of Dybjer~\cite{dybjer_inductive} and Chapman et al.~\cite{chapman_levitation}, both of which proved crucial in formalizing the previously informal notion of configurations.
\begin{definition}[Constructor Algebra \agda{inductive-repair.indspec}{ConstrAlgebra}]\label{def: constr-alg}
  To describe the argument list of a single constructor as a telescope, we define a \emph{constructor arity} type generated by:
  \[
    \begin{array}{l@{\;}l}
      \quad\mathsf{Done}   &: \mathsf{ConstrArity}\\
      \quad\mathsf{Nonrec} &: (A : \Type) \to (A \to \mathsf{ConstrArity}) \to \mathsf{ConstrArity}\\
      \quad\mathsf{Rec}    &: (I : \Type) \to \mathsf{ConstrArity} \to \mathsf{ConstrArity}.
    \end{array}
  \]
  Here $\mathsf{Done}$ closes the telescope, $\mathsf{Nonrec}\,A\,k$ records a non-recursive argument of type $A$ on which the remaining arity $k$ may depend, and $\mathsf{Rec}\,I\,\mathit{cs}$ records a batch of recursive arguments, one for each element of $I$. 
  Non-recursive and recursive arguments may be freely interleaved.

  We define the type of \textit{arguments} to a constructor, parametrized by a carrier $X$ standing for the inductive type itself:
  \[
    \begin{aligned}
      &\mathsf{Args} : \mathsf{ConstrArity} \to \Type \to \Type\\
      &\mathsf{Args}\,\mathsf{Done}\,X \defineeq \unit\\
      &\mathsf{Args}\,(\mathsf{Nonrec}\,A\,k)\,X \defineeq (a : A) \times \mathsf{Args}\,(k\,a)\,X\\
      &\mathsf{Args}\,(\mathsf{Rec}\,I\,\mathit{cs})\,X \defineeq (I \to X) \times \mathsf{Args}\,\mathit{cs}\,X.
    \end{aligned}
  \]
  We call an element $\mathit{as} : \mathsf{Args}\,\mathit{cs}\,X$ an \emph{argument list}: a nested tuple with one entry for each step of the telescope $\mathit{cs}$, where a non-recursive entry is a value $a : A$ and a recursive entry is a map $r : I \to X$ giving the subterm at each position. In the clauses below we destructure argument lists as $(a, \mathit{as})$ or $(r, \mathit{as})$, with $\mathit{as}$ the remaining entries.
  A \emph{signature} is a type of constructor labels together with an assignment of an arity to each label:
  \[ 
    \begin{aligned}
      &\mathsf{Signature} \defineeq (\mathsf{Op} : \Type) \times (\mathsf{Op} \to \mathsf{ConstrArity})
    \end{aligned}
  \]
  Finally, a \emph{constructor algebra} for a signature on a carrier $X$ interprets every label as a constructor taking its arguments into the carrier:
  \[
    \begin{aligned}
      &\mathsf{ConstrAlgebra} : \mathsf{Signature} \to \Type \to \Type\\
      &\mathsf{ConstrAlgebra}\,\mathsf{(Op,\,arity)}\,X \defineeq (o : \mathsf{Op}) \to \mathsf{Args}\,(\mathsf{arity}\,o)\,X \to X.
    \end{aligned}
  \]
\end{definition}

For a map $g : X \to Y$ and $\mathit{cs} : \mathsf{ConstrArity}$, we define $\mathsf{mapArgs}\,\mathit{cs}\,g : \mathsf{Args}\,\mathit{cs}\,X \to \mathsf{Args}\,\mathit{cs}\,Y$ by postcomposing each family of recursive subterms $r : I \to X$ with $g$ and leaving non-recursive data untouched:
\[
  \begin{aligned}
    &\mathsf{mapArgs}\,\mathsf{Done}\,g\,\mathit{as} \defineeq \mathit{as}\\
    &\mathsf{mapArgs}\,(\mathsf{Nonrec}\,A\,k)\,g\,(a, \mathit{as})\defineeq (a,\, \mathsf{mapArgs}\,(k\,a)\,g\,\mathit{as})\\
    &\mathsf{mapArgs}\,(\mathsf{Rec}\,I\,\mathit{cs})\,g\,(r, \mathit{as})\defineeq (g \circ r,\, \mathsf{mapArgs}\,\mathit{cs}\,g\,\mathit{as}).
  \end{aligned}
\]
This action is functorial: it preserves identities and composition, and it carries an equivalence $X \simeq Y$ on carriers to an equivalence $\mathsf{Args}\,\mathit{cs}\,X \simeq \mathsf{Args}\,\mathit{cs}\,Y$ on arguments.

A configuration captures the shared inductive structure of two types by equipping each with a constructor algebra for a common signature, an eliminator for that algebra, and a $\beta$ law for the eliminator.
\begin{definition}[Configuration \agda{inductive-repair.config}{Config}]
  \label{def: configuration}
  Fix a signature $\mathit{sig} = (\mathsf{Op}, \mathsf{arity})$, a carrier $C$, and a constructor algebra $\alpha : \mathsf{ConstrAlgebra}\,\mathit{sig}\,C$.
  Relative to a \emph{motive} $P : C \to \Type$, the \textit{inductive hypotheses} at an argument list, $\mathit{as} : \mathsf{Args}\,\mathit{cs}\,C$, provide a proof of $P$ at every recursive child, that is, at $r\,d$ for each recursive entry $r$ of $\mathit{as}$ and each position $d$. Formally,
  \[
    \begin{aligned}
      &\mathsf{IH} : (\mathit{cs} : \mathsf{ConstrArity}) \to (C \to \Type) \to \mathsf{Args}\,\mathit{cs}\,C \to \Type\\
      &\mathsf{IH}\,\mathsf{Done}\,P\,\mathit{as} \defineeq \unit\\
      &\mathsf{IH}\,(\mathsf{Nonrec}\,A\,k)\,P\,(a, \mathit{as}) \defineeq \mathsf{IH}\,(k\,a)\,P\,\mathit{as}\\
      &\mathsf{IH}\,(\mathsf{Rec}\,I\,\mathit{cs})\,P\,(r, \mathit{as}) \defineeq \big((d : I) \to P\,(r\,d)\big) \times \mathsf{IH}\,\mathit{cs}\,P\,\mathit{as}.
    \end{aligned}
  \]
  An \emph{induction principle} for $\alpha$ takes a motive $P$ together with, for each label $o$, a case sending arguments $\mathit{as}$ and their inductive hypotheses to a value of $P\,(\alpha\,o\,\mathit{as})$, and produces a dependent section:
  \begin{align*}
    \mathsf{Ind}\,\mathit{sig}\,\alpha \defineeq&\, (P : C \to \Type) \to \\
      &\Big((o : \mathsf{Op})\;(\mathit{as} : \mathsf{Args}\,(\mathsf{arity}\,o)\,C) \to
        \mathsf{IH}\,(\mathsf{arity}\,o)\,P\,\mathit{as} \to P\,(\alpha\,o\,\mathit{as})\Big) \to\\
      &(x : C) \to P\,x.
  \end{align*}
  Any dependent function $f : (x : C) \to P\,x$ yields inductive hypotheses at every argument list, by applying $f$ to each recursive child (see the full definition in the formalization \agda{inductive-repair.indspec}{mkIHs}):
  \[
    \mathsf{mkIH} : (\mathit{cs} : \mathsf{ConstrArity}) \to ((x : C) \to P\,x) \to (\mathit{as} : \mathsf{Args}\,\mathit{cs}\,C) \to \mathsf{IH}\,\mathit{cs}\,P\,\mathit{as}.
  \]
  An eliminator $\mathsf{ind} : \mathsf{Ind}\,\mathit{sig}\,\alpha$ satisfies the \emph{$\beta$ law} when it computes on constructors: for all $P$, cases, label $o$, and arguments $\mathit{as}$,
  \[
    \mathsf{ind}\,P\,\mathit{cases}\,(\alpha\,o\,\mathit{as}) \;=\; \mathit{cases}\,o\,\mathit{as}\,\big(\mathsf{mkIH}\,(\mathsf{arity}\,o)\,(\mathsf{ind}\,P\,\mathit{cases})\,\mathit{as}\big).
  \]
  We write $\mathsf{Beta}\,\mathit{sig}\,\mathsf{ind}$ for the type of proofs that $\mathsf{ind}$ satisfies the $\beta$ law.
  An \emph{inductive algebra} for $\mathit{sig}$ on $C$ packages a constructor algebra with an induction principle and its $\beta$ law (referred to as $\iota$ in prior proof repair work):
  \[
    \mathsf{IndAlg}\,\mathit{sig}\,C \defineeq
      ({\alpha : \mathsf{ConstrAlgebra}\,\mathit{sig}\,C}) \times 
      ({\mathsf{ind} : \mathsf{Ind}\,\mathit{sig}\,\alpha}) \times
      \mathsf{Beta}\,\mathit{sig}\,\mathsf{ind}.
  \]
  Finally, a \emph{configuration} for $\mathit{sig}$ between carriers $C$ and $D$ is a pair of inductive algebras for $\mathit{sig}$, one on each carrier:
  \[
    \mathsf{Config}\,\mathit{sig}\,C\,D \defineeq \mathsf{IndAlg}\,\mathit{sig}\,C \times \mathsf{IndAlg}\,\mathit{sig}\,D.
  \]
\end{definition}

This definition internalizes the \texttt{DepConstr}/\texttt{DepElim}/\texttt{Iota} data that PUMPKIN Pi externally maintains; the results that follow are theorems about this reconstruction, and we support its faithfulness by recovering the tool's correctness criteria (\Cref{prop: correctness}) and case studies from it.

Though our signatures are non-indexed, our formalization handles the repair of inductive families in a way that is consistent with the proof repair literature.
Our formalization includes the case study from Ringer et al.~\cite{ringer2021pldi} of repairing between $\mathtt{Vec}\,A\,n$ and $\mathtt{Fin}\,n \to A$, where the per-index signature $\mathit{sig}\,n$ has a single non-recursive operation taking the $n$ values \agdamod{inductive-repair.examples.finvec}.
Programs and proofs over vectors repair index-by-index, with the coherence conditions of \Cref{prop: correctness} instantiated at each $n$.
The restriction to non-indexed signatures thus reflects the class of changes contemporary repair tools perform, which does not include repairing types with differing index types.

A configuration can also satisfy correctness criteria~\cite{ringer2021pldi} ensuring that it behaves coherently with the equivalence it is intended to capture.
Our formal definition satisfies these criteria.

\begin{proposition}[Correctness Conditions \agda{inductive-repair.coherence}{Coherence}]\label{prop: correctness}
  Fix $(\mathcal{A}_C, \mathcal{A}_D) : \mathsf{Config}\,\mathit{sig}\,C\,D$ with constructor algebras $\alpha_C, \alpha_D$, eliminators $\mathsf{ind}_C, \mathsf{ind}_D$, and $\beta$ laws $\beta_C, \beta_D$.
  Let $\varphi : C \to D$ be the forward map of the equivalence induced by the configuration (\Cref{prop: le-config-equiv}). The configuration satisfies the three coherence conditions of PUMPKIN~Pi~\cite{ringer2021pldi}.
  \begin{enumerate}
    \item $\mathsf{constr\_ok}$. Constructors commute with $\varphi$: for every label $o$ and arguments $\mathit{as} : \mathsf{Args}\,(\mathsf{arity}\,o)\,C$,
      \[
        \varphi\,(\alpha_C\,o\,\mathit{as}) \;=\; \alpha_D\,o\,\big(\mathsf{mapArgs}\,(\mathsf{arity}\,o)\,\varphi\,\mathit{as}\big).
      \]
      (The formal statement is a heterogeneous strengthening, relating $C$-side and $D$-side arguments across the lifted argument equivalence; the form displayed here is the instance at $\varphi$-mapped arguments.)
    \item $\mathsf{iota\_ok}$. Each eliminator computes on its own constructors: this is exactly the $\beta$ law, $\beta_C$ for $\mathsf{ind}_C$ and $\beta_D$ for $\mathsf{ind}_D$.
    \item $\mathsf{elim\_ok}$. The eliminators commute with $\varphi$: for every motive $P : D \to \Type$ and cases $\mathit{cases}_D$, there are transported $C$-side cases $\varphi^\ast \mathit{cases}_D$ (delegating to $\mathit{cases}_D$ on the $\varphi$-mapped arguments and transporting back along $\mathsf{constr\_ok}$) such that, for every $x : C$,
      \[
        \mathsf{ind}_C\,(P \circ \varphi)\,(\varphi^\ast \mathit{cases}_D)\,x \;=\; \mathsf{ind}_D\,P\,\mathit{cases}_D\,(\varphi\,x).
      \]
  \end{enumerate}
\end{proposition}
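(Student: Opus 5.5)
The plan is to define $\varphi$ by recursion and then obtain all three conditions from the $\beta$ laws and the uniqueness of maps out of an inductive algebra. Following the construction behind \Cref{prop: le-config-equiv}, $\varphi$ is the non-dependent instance of $\mathsf{ind}_C$ with constant motive $\lambda\_.\,D$. Its case at label $o$, arguments $\mathit{as}$ and inductive hypotheses $\mathit{ih}$ returns $\alpha_D\,o\,(\mathit{as}')$. Here $\mathit{as}'$ is $\mathit{as}$ with each recursive family $r : I \to C$ replaced by the corresponding $D$-valued family stored in $\mathit{ih}$. For a constant motive, $\mathsf{IH}\,\mathit{cs}\,(\lambda\_.\,D)\,\mathit{as}$ is exactly the recursive part of $\mathsf{Args}\,\mathit{cs}\,D$. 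A short induction on $\mathit{cs}$ then gives a ``zip'' function into $\mathsf{Args}\,\mathit{cs}\,D$, together with the lemma
\[
\mathsf{zip}\,\mathit{cs}\,\mathit{as}\,(\mathsf{mkIH}\,\mathit{cs}\,g\,\mathit{as}) = \mathsf{mapArgs}\,\mathit{cs}\,g\,\mathit{as}.
\]

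\textbf{constr\_ok.} This is $\beta_C$ at the constant motive, followed by the zip lemma with $g = \varphi$. It gives $\varphi\,(\alpha_C\,o\,\mathit{as}) = \alpha_D\,o\,(\mathsf{mapArgs}\,(\mathsf{arity}\,o)\,\varphi\,\mathit{as})$.

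For the heterogeneous strengthening, I would use the functoriality fact that $\mathsf{mapArgs}$ lifts the equivalence $C \simeq D$ to an equivalence on arguments. Any $D$-side list $\mathit{bs}$ related to $\mathit{as}$ over that lift is equal to $\mathsf{mapArgs}\,\varphi\,\mathit{as}$, so the general statement reduces to the displayed one by path induction.

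\textbf{iota\_ok.} There is nothing to prove: it is the components $\beta_C$ and $\beta_D$ of the configuration.

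\textbf{elim\_ok.} This is the main step. First define $\varphi^\ast\mathit{cases}_D\,o\,\mathit{as}\,\mathit{ih}$. Push $\mathit{ih} : \mathsf{IH}\,\mathit{cs}\,(P\circ\varphi)\,\mathit{as}$ to an element of $\mathsf{IH}\,\mathit{cs}\,P\,(\mathsf{mapArgs}\,\varphi\,\mathit{as})$; this is definitional by induction on $\mathit{cs}$, since $P(\varphi(r\,d)) \equiv P((\varphi\circ r)\,d)$. Apply $\mathit{cases}_D\,o$ to this pushed hypothesis, then transport backwards along $\mathsf{constr\_ok}$ into $P(\varphi(\alpha_C\,o\,\mathit{as}))$.

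To prove the equation, apply $\mathsf{ind}_C$ with motive $\lambda x.\,\mathsf{ind}_C\,(P\circ\varphi)\,(\varphi^\ast\mathit{cases}_D)\,x = \mathsf{ind}_D\,P\,\mathit{cases}_D\,(\varphi\,x)$. The case at $\alpha_C\,o\,\mathit{as}$ is handled as follows:
\begin{itemize}
\item rewrite the left side with $\beta_C$, obtaining the transport of $\mathit{cases}_D$ applied to the pushed $\mathsf{mkIH}$;
\item rewrite the right side by transporting along $\mathsf{constr\_ok}$ and then using $\beta_D$, obtaining $\mathit{cases}_D\,o\,(\mathsf{mapArgs}\,\varphi\,\mathit{as})\,(\mathsf{mkIH}\,(\mathsf{ind}_D\,P\,\mathit{cases}_D)\,(\mathsf{mapArgs}\,\varphi\,\mathit{as}))$;
\item match the two inductive hypotheses componentwise using the induction hypotheses, via function extensionality on each recursive family.
\end{itemize}

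I expect the main obstacle to be the transport bookkeeping in this last case. The equation lives over the path $\mathsf{constr\_ok}$, so it is a dependent path. One must check that transporting $\mathsf{ind}_D$'s value along $\mathsf{ap}\,\varphi$ of the constructor path is coherent with the transport built into $\varphi^\ast$. I would handle this by generalizing: state a lemma for an arbitrary path $q : \varphi(\alpha_C\,o\,\mathit{as}) = \alpha_D\,o\,\mathit{bs}$ and arbitrary related $\mathit{bs}$, then discharge it by path induction on $q$, so the transports cancel definitionally.

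I also need that $\mathsf{mkIH}$ of a function built from funext-pointwise equal families agrees with the pushed hypotheses. Here the stated funext computation and uniqueness rules are needed to commute $\mathsf{happly}$ with the pushforward.
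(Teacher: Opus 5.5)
Your proposal is correct and follows the paper's own argument: $\varphi$ is $\mathsf{ind}_C$ at the constant motive $D$, $\mathsf{constr\_ok}$ is $\beta_C$ followed by functoriality of $\mathsf{mapArgs}$ (your zip lemma), and $\mathsf{iota\_ok}$ is the $\beta$ laws themselves. $\mathsf{elim\_ok}$ is then induction on $x$ via $\mathsf{ind}_C$, rewriting both sides with $\beta_C$ and $\beta_D$ and matching the inductive hypotheses on recursive children by function extensionality. Two small refinements: the transport bookkeeping you flag is just $\mathsf{apd}$ of $\mathsf{ind}_D\,P\,\mathit{cases}_D$ along $\mathsf{constr\_ok}$, so you must abstract the whole $D$-side endpoint (not merely $\mathit{bs}$) before doing path induction; and plain function extensionality suffices here, without its computation or uniqueness rules.
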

\begin{proof}[Proof sketch]
  All three are straightforwardly derivable from the structure carried by the configuration and function extensionality.
  % $\mathsf{constr\_ok}$ is a single application of $\beta_C$ followed by functoriality of $\mathsf{mapArgs}$. $\mathsf{iota\_ok}$ is immediate, being the $\beta$ laws themselves.
  % $\mathsf{elim\_ok}$ follows by induction on $x : C$ using $\mathsf{ind}_C$: at each constructor, $\beta_C$ and $\beta_D$ rewrite both sides to the same case applied to inductive hypotheses, which agree on the recursive children by the inductive hypothesis (lifted pointwise through function extensionality).
\end{proof}

We apply the equivalence explicitly rather than using univalence and transport as in Ringer et al.~\cite{ringer2021pldi}. The first condition, $\mathsf{constr\_ok}$, states that $\varphi$ \emph{respects the constructors}, sending a constructed value to the corresponding construction on the other side.
The remaining two ask nothing of $\varphi$ in isolation: $\mathsf{iota\_ok}$ is just each type computing on its own constructors, and $\mathsf{elim\_ok}$ is recovered from $\mathsf{constr\_ok}$ once each type is known to be generated by its constructors in the strong sense the eliminators capture.
We will see in \Cref{sec: alg-repair} that this is no accident: a structure-respecting map between two such canonically generated types must be an isomorphism, and naming this phenomenon precisely is what the algebraic reformulation buys.
This is the first hint of our central thesis, that proof repair is an algebra isomorphism.
 
Additionally, our formalization enables us to prove a conjecture about configurations stated in Viola et al.~\cite{viola2025oopsla}.
The original proof repair work included extra data for an $\eta$ principle in both the configuration and the correctness conditions; we confirm that this data is not needed, since it is derivable from the rest \agda{inductive-repair.indspec}{IndAlg.eta}.

Configurations were developed as an algorithm-friendly decomposition of type equivalences, which in a tool setting are externalized in some automation language.
Equipped with our formal definitions, we are able to verify that configurations do indeed correspond to type equivalences.

\begin{proposition}[Configurations and Inductive Equivalences are Equivalent \agda{inductive-repair.config}{configEquiv}]
  \label{prop: le-config-equiv}
  Fix a signature $\mathit{sig}$ and carriers $C, D$.
  In the presence of an inductive algebra on $C$, type equivalences $C \simeq D$ and configurations between $C$ and $D$ are equivalent:
      \[ (\mathsf{IndAlg}\,\mathit{sig}\,C \;\times\; (C \simeq D)) \;\simeq\; \mathsf{Config}\,\mathit{sig}\,C\,D. \]
\end{proposition}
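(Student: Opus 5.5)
We construct maps in both directions and show they are mutually inverse. Both sides share the factor $\mathsf{IndAlg}\,\mathit{sig}\,C$, so the goal reduces to a fibrewise equivalence: for a fixed inductive algebra $\mathcal{A}_C = (\alpha_C, \mathsf{ind}_C, \beta_C)$ on $C$, we show $(C \simeq D) \simeq \mathsf{IndAlg}\,\mathit{sig}\,D$. Then we assemble the total equivalence with the standard lemma that a fibrewise equivalence over $\mathsf{IndAlg}\,\mathit{sig}\,C$ induces an equivalence of $\Sigma$-types.

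\emph{Forward direction (transfer).} Given $e : C \simeq D$ with forward map $\varphi$ and inverse $\psi$, we define $\alpha_D\,o\,\mathit{as} \defineeq \varphi\,(\alpha_C\,o\,(\mathsf{mapArgs}\,(\mathsf{arity}\,o)\,\psi\,\mathit{as}))$. We do this by hand rather than by univalence, which we do not assume.

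The eliminator is obtained by pulling back. For a motive $P : D \to \Type$ and cases $\mathit{cases}_D$, we run $\mathsf{ind}_C$ at the motive $P \circ \varphi$. The $C$-side cases are obtained from $\mathit{cases}_D$ by two steps:
\begin{itemize}
\item rewriting $\mathsf{mapArgs}\,\varphi\,(\mathsf{mapArgs}\,\psi\,\ldots)$ using functoriality of $\mathsf{mapArgs}$ and the section/retraction homotopies of $e$;
\item transporting the inductive hypotheses accordingly.
\end{itemize}
Finally, $\mathsf{ind}_D\,P\,\mathit{cases}_D\,y$ is obtained by transporting the value at $\psi\,y$ along $\varphi(\psi\,y) = y$. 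The $\beta$ law for $\mathsf{ind}_D$ then follows from $\beta_C$ together with a transport calculation. To keep the paths coherent, we first promote $e$ to a half-adjoint equivalence.

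\emph{Backward direction.} Given $\mathcal{A}_D$, we obtain $\varphi : C \to D$ from the non-dependent instance of $\mathsf{ind}_C$ with cases $o\,\mathit{as}\,\mathit{ih} \mapsto \alpha_D\,o\,(\text{the recursive results in } \mathit{ih})$. We obtain $\psi : D \to C$ symmetrically from $\mathsf{ind}_D$. Both composites are homotopic to the identity by induction: for $\psi \circ \varphi$ we use $\mathsf{ind}_C$ at the motive $\lambda x.\,\psi(\varphi\,x) = x$, and at each constructor we apply $\beta_C$, $\beta_D$ and function extensionality on the recursive families $r : I \to C$. This is the same computation that underlies $\mathsf{constr\_ok}$ in \Cref{prop: correctness}, which is why we only cite it here.

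\emph{Round trips.} Starting from $e$, the map $\varphi'$ produced by the round trip is computed by $\mathsf{ind}_C$. To show $\varphi' = \varphi$ we argue by induction, using that the transferred $\alpha_D$ satisfies $\varphi(\alpha_C\,o\,\mathit{as}) = \alpha_D\,o\,(\mathsf{mapArgs}\,\varphi\,\mathit{as})$. Since being an equivalence is a proposition, equality of forward maps suffices for equality of equivalences.

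The other round trip is the main obstacle. Starting from $\mathcal{A}_D$, we must identify the transferred inductive algebra with $\mathcal{A}_D$ itself, as elements of a $\Sigma$-type. This means equating:
\begin{itemize}
\item the constructors, which is straightforward via $\beta_D$ and funext;
\item the eliminators, as dependent functions over all motives and cases;
\item the $\beta$ proofs, up to the resulting dependent path.
\end{itemize}
For the eliminators, the plan is to prove uniqueness of eliminators first: any two dependent sections satisfying the same $\beta$ equations on $\alpha_D$ are equal. We prove this by $\mathsf{ind}_D$ at the motive $\lambda y.\,f\,y = g\,y$; this is essentially the $\eta$ principle mentioned earlier. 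The harder part is showing the $\beta$ proofs agree: we need the type of $(\mathsf{ind}, \beta)$ pairs over a fixed $\alpha_D$ to be a proposition, and in fact contractible given $\mathcal{A}_D$. I would try to exhibit this type as equivalent, for each $P$ and set of cases, to a type of sections of the form ``$f$ together with a homotopy $f \circ \alpha = \mathit{cases}\circ\mathsf{mkIH}\,f$''. I would then show this is contractible by induction, mirroring the standard proof that the type of dependent algebra morphisms out of an initial algebra is contractible. This path algebra is where the function-extensionality computation and uniqueness rules assumed in the conventions will be used.
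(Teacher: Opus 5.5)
Your proposal is correct and follows the paper's route. You fix the inductive algebra on $C$, reduce to a fibrewise equivalence $(C \simeq D) \simeq \mathsf{IndAlg}\,\mathit{sig}\,D$, transfer the structure forward along the equivalence by hand without univalence, and go backward through the two folds, whose composites are identities by uniqueness of folds. You also make explicit the step the paper's sketch compresses into ``these two constructions are mutually inverse'': the $(\mathsf{ind}, \beta)$ data over a fixed constructor algebra is a proposition (contractible once inhabited), so the round trip on $\mathsf{IndAlg}\,\mathit{sig}\,D$ reduces to equality of constructor algebras.
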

\begin{proof}[Proof sketch]
    An inductive algebra on $C$ appears on both sides, so, fixing one, it suffices to give an equivalence $(C \simeq D) \simeq \mathsf{IndAlg}\,\mathit{sig}\,D$.
    The inductive algebra transports directly across the equivalence.
    Conversely, a pair of inductive algebras determines an equivalence through their folds. Each inductive algebra sends a constructor algebra $\gamma$ on a type $Y$ to a map $\mathsf{fold}_C\,\gamma : C \to Y$, the eliminator at the constant motive $Y$, characterized by its $\beta$ rule
    \[ \mathsf{fold}_C\,\gamma\,(\alpha_C\,o\,\mathit{as}) \;=\; \gamma\,o\,\big(\mathsf{mapArgs}\,(\mathsf{arity}\,o)\,(\mathsf{fold}_C\,\gamma)\,\mathit{as}\big). \]
    Writing $\varphi = \mathsf{fold}_C\,\alpha_D$ and $\psi = \mathsf{fold}_D\,\alpha_C$, the composite $\psi \circ \varphi$ is a fold of $\alpha_C$ into itself; so is the identity, and the fold into a fixed algebra is unique, so $\psi \circ \varphi = \mathsf{id}_C$, and symmetrically $\varphi \circ \psi = \mathsf{id}_D$.
    These two constructions are mutually inverse, giving the equivalence.
\end{proof}

This is not a surprising result, but it confirms the intuition with which configurations were designed and serves as a benchmark upon which the rest of the paper improves. 
However, this result does not give a satisfying mathematical description of repair, since it simply transports half of the data of a configuration across an equivalence to obtain the other half. 
We show (\Cref{sec: alg-repair}) that P-algebra isomorphisms provide an elegant mathematical answer to the question of what transformations configurations capture, and from this answer we obtain results about possible extensions of repair algorithms and the limits of repair.

In the typical case of repair, a proof engineer has a natural choice of configuration in mind, but, in general, unnatural and useless configurations exist.
For example, a configuration could reuse the same type or trivially specify a single constructor and eliminator.
It is up to the proof engineer and repair tooling to ensure that configurations execute a useful form of repair.
We return to this point in \Cref{sec: alg-repair}, where the algebraic reformulation makes this precise.

Configurations are a useful framework for building proof repair algorithms, but they expose a high level of syntactic detail that is a burden when trying to reason about repair in general. 
The rest of the paper is dedicated to developing an abstract formulation of repair that is more amenable to reasoning about the limits and properties of repair. We proceed by connecting configurations to P-algebras, and then observing the results of this connection in the context of repair algorithms.

\section{Algebraic Foundations for Proof Repair}\label{sec: alg-repair}

We formulate repair in the language of algebras on polynomial endofunctors~\cite{awodey_book,Adámek_Milius_Moss_2025}, which we introduce from scratch for readers unfamiliar with them.
We first introduce P-algebras and the algebraic view of inductive datatypes, providing the necessary background. 
We then establish that configurations are a representation of algebra isomorphisms. Finally, we showcase a purely algebraic form of proof repair.

\subsection{Warming up to Algebras}

We begin with a tour of P-algebras.
Note that we are working within type theory, and so the definitions we use have a type-theoretic flavor and are not the precise definitions one may see in a category theory textbook.
We nonetheless refer to these type-theoretic notions by their usual category-theoretic names. 

\begin{definition}[Polynomial Endofunctor \agda{palg}{P}]\label{def: poly-endo}
  The polynomial endofunctor takes a type $A$ of \emph{shapes} and a family $B : A \to \mathsf{Type}$ assigning to each shape its type of \emph{positions}:
  \begin{align*}
    &P : (A : \mathsf{Type}) \to (A \to \mathsf{Type}) \to \mathsf{Type} \to \mathsf{Type}\\
    &P\,A\,B\,X \;\defineeq\; ({a : A}) \times \big(B\,a \to X\big)
  \end{align*}
  An element of $P\,A\,B\,X$ is a shape $a$ together with a $B\,a$-indexed family of children drawn from $X$. The action of $P\,A\,B$ on a function $f : X \to Y$ is postcomposition on the children, which makes $P\,A\,B$ a functor.
  \begin{align*}
    &\mathsf{P\text{-}map} : (X \to Y) \to P\,A\,B\,X \to P\,A\,B\,Y\\
    &\mathsf{P\text{-}map}\,f\,(a , u) \;\defineeq\; (a , f \circ u)
  \end{align*}
  We call a pair $(A : \Type,\, B : A \to \Type)$ a \textit{polynomial}.
\end{definition}

\begin{definition}[P-algebra \agda{palg}{P-Alg}]\label{def: p-alg}
  Fix $A : \Type$ and $B : A \to \Type$.
  For the polynomial endofunctor $P\,A\,B$, a \emph{$P$-algebra} is a pair $(C, \sigma)$ of a carrier $C$ and a structure map $\sigma : P\,A\,B\,C \to C$.
  We define a P-algebra formally as follows:
  \begin{align*}
  &\mathsf{P\text{-}Alg}\,A\,B : \Type \\
  &\mathsf{P\text{-}Alg}\,A\,B \defineeq (C : \Type) \times (P\,A\,B\,C \to C)
  \end{align*}
\end{definition}

\begin{example}[Nat Algebra \agda{inductive-repair.examples.nat}{sup\%E2\%84\%95}]
  Consider the functor
  \[ P : \Type \to \Type \qquad P\,X \defineeq \unit + X .\]
  A $P$-algebra is a pair $(C, \sigma)$ with a structure map $\sigma : \unit +\, C \to C$.
  By the universal property of the coproduct, such a map is equivalently a point $z : C$ together with an endomap $s : C \to C$, which we bundle as $\sigma = [\,z\,,\,s\,]$.
  Thus, a $P$-algebra is exactly a type equipped with an interpretation of a \texttt{zero} constructor and a \texttt{succ} constructor.
  The type \texttt{Nat} carries the algebra $(\mathtt{Nat}, [\,\mathtt{zero}, \mathtt{succ}\,])$, reading the nullary shape as \texttt{zero} and the unary shape as \texttt{succ}.
  The payoff of the algebraic view is that the same functor has many carriers: \texttt{Bin} carries $(\mathtt{Bin}, [\,\mathtt{bzero}, \mathtt{binSucc}\,])$, a different interpretation of the same two shapes.
\end{example}

\begin{definition}[Algebra Homomorphism \agda{palg}{AlgHom}]
  Given a polynomial $(A, B)$ and two algebras $(C, \sigma)$ and $(D, \tau)$ for $P\,A\,B$, an algebra homomorphism from $(C, \sigma)$ to $(D, \tau)$ is a function $h : C \to D$ such that the following diagram commutes:
  \[
    \begin{tikzcd}
      P C \arrow{r}{P h} \arrow{d}{\sigma} & P D \arrow{d}{\tau} \\
      C \arrow{r}{h} & D
    \end{tikzcd}
  \]
  Formally, we define the type of algebra homomorphisms as the function space between the carriers paired with the proof of the above commutative diagram:  \begin{align*}
    &\mathsf{isAlgHom} : (X\,Y\,: \mathsf{P\text{-}Alg}\,A\,B) \to (\mathsf{Carrier}\, X \to \mathsf{Carrier}\,Y) \to \Type\\
    &\mathsf{isAlgHom}\,(C,\,\sigma)\,(D,\,\tau)\,h \defineeq h \circ \sigma = \tau \circ \mathsf{P\text{-}map}\,h\\[4pt]
    &\mathsf{AlgHom} : \mathsf{P\text{-}Alg}\,A\,B \to \mathsf{P\text{-}Alg}\,A\,B \to \Type\\
    &\mathsf{AlgHom}\,X\,Y \defineeq ({h\,:\,\mathsf{Carrier}\,X \to \mathsf{Carrier}\,Y}) \times (\mathsf{isAlgHom}\,X\,Y\,h)
  \end{align*}
  Algebra homomorphisms compose ($\mathsf{AlgHom\text{-}}\circ$), and every algebra carries the identity homomorphism $\mathsf{id\text{-}hom}$, whose carrier map is the identity function.
\end{definition}

\begin{example}[Nat Homomorphism \agda{inductive-repair.examples.nat}{fromNat-hom}]
  Homomorphisms between $P$-algebras are maps that preserve the constructors of the source algebra.
  The map $\mathtt{fromNat} : \mathtt{Nat} \to \mathtt{Bin}$ is a homomorphism:
  \[\begin{tikzcd}
    {\unit +\,\mathtt{Nat}} && {\unit +\,\mathtt{Bin}} \\
    {\mathtt{Nat}} && {\mathtt{Bin}}
    \arrow["{\unit +\,\mathtt{fromNat}}", from=1-1, to=1-3]
    \arrow["{[\,\mathtt{zero},\,\mathtt{succ}\,]}"', from=1-1, to=2-1]
    \arrow["{[\,\mathtt{bzero},\,\mathtt{binSucc}\,]}", from=1-3, to=2-3]
    \arrow["{\mathtt{fromNat}}"', from=2-1, to=2-3]
  \end{tikzcd}\]
  This commutation is precisely the $\mathsf{constr\_ok}$ condition of \Cref{prop: correctness}: a homomorphism is a map that respects the constructors.
\end{example}

Since the central claim of this paper is that proof repair is an algebra isomorphism, we next fix precisely what an isomorphism of algebras is.

\begin{definition}[Algebra Equivalence \agda{palg}{AlgEquiv}]\label{def: alg-equiv}
  Given a polynomial $(A, B)$ and algebras $X, Y : \mathsf{P\text{-}Alg}\,A\,B$, an algebra homomorphism $f : \mathsf{AlgHom}\,X\,Y$ is an \emph{algebra equivalence} when it has a right inverse and a left inverse:
  \begin{align*}
    &\mathsf{isAlgEquiv} : (X\,Y : \mathsf{P\text{-}Alg}\,A\,B) \to \mathsf{AlgHom}\,X\,Y \to \Type\\
    &\mathsf{isAlgEquiv}\,X\,Y\,f \defineeq
      \big((g : \mathsf{AlgHom}\,Y\,X) \times (\mathsf{AlgHom\text{-}}\!\circ\,f\,g = \mathsf{id\text{-}hom}_X)\big)\\
    &\qquad\qquad\qquad\qquad\quad\times \big((h : \mathsf{AlgHom}\,Y\,X) \times (\mathsf{AlgHom\text{-}}\!\circ\,h\,f = \mathsf{id\text{-}hom}_Y)\big)\\[4pt]
    &\mathsf{AlgEquiv} : \mathsf{P\text{-}Alg}\,A\,B \to \mathsf{P\text{-}Alg}\,A\,B \to \Type\\
    &\mathsf{AlgEquiv}\,X\,Y \defineeq (f : \mathsf{AlgHom}\,X\,Y) \times \mathsf{isAlgEquiv}\,X\,Y\,f
  \end{align*}
  Forgetting the homomorphism witnesses, the carrier maps of an algebra equivalence form a type equivalence $\mathsf{Carrier}\,X \simeq \mathsf{Carrier}\,Y$; an algebra equivalence is exactly a type equivalence whose two maps respect the algebra structure.
  We say \emph{algebra isomorphism} interchangeably.
\end{definition}

\begin{definition}[Homotopy-initial Algebra \agda{palg}{isInitAlg}]\label{def: h-init-alg}
  Given a polynomial $(A, B)$, an algebra $X : \mathsf{P\text{-}Alg}\,A\,B$ is \emph{homotopy-initial} if for every $Y : \mathsf{P\text{-}Alg}\,A\,B$ the type of homomorphisms $X \to Y$ is contractible ($\mathsf{isContr}$).
  Formally,
  \begin{align*}
    &\mathsf{isInitAlg} : \mathsf{P\text{-}Alg}\,A\,B \to \Type\\
    &\mathsf{isInitAlg}\, X \defineeq (Y : \mathsf{P\text{-}Alg}\,A\,B) \to \mathsf{isContr} (\mathsf{AlgHom}\,X\,Y)
  \end{align*}
  For brevity, we may refer to homotopy-initial algebras as initial algebras.
\end{definition}

Equipped with our P-algebra toolbox, we turn back to our motivating example and phrase it in terms of algebras.
In this example we are implicitly relying on the correspondence between homotopy-initial algebras and W-types established by Awodey, Gambino, and Sojakova~\cite{awodey_homotopy-initial_2017}.
Of course, we are using Agda's \texttt{data} declaration instead of W-types, but we are still able to put an algebra structure on the Agda datatypes as they capture the same inductive structure.

\begin{example}[Motivating Example with Initial Algebras \agda{inductive-repair.examples.nat}{recNat}]
  First, we obtain that \texttt{Nat} and \texttt{Bin} are initial algebras for the functor $P\,X \defineeq \unit + X$.
  \begin{lstlisting}
    NatInit : isInitAlg (Nat , [ zero , succ ])
    BinInit : isInitAlg (Bin , [ bzero , binSucc ])
  \end{lstlisting}
  The unique homomorphism out of the initial algebra is the recursor, and its dependent strengthening is the induction principle. 
  Both are derived from \texttt{isInitAlg} alone---the recursor as the underlying function of the unique homomorphism, and the eliminator from the section it determines on the total space.
  \begin{lstlisting}
    recNat : {D : Type} -> D -> (D -> D) -> Nat -> D
    recNat-zero : recNat z s zero     = z
    recNat-succ : recNat z s (succ n) = s (recNat z s n)
    $\vspace{-0.08cm}$
    indNat : (Q : Nat -> Type) $\hspace{-0.08cm}$ -> Q zero $\hspace{-0.08cm}$ -> ((n : Nat) -> Q n -> Q (succ n)) $\hspace{-0.08cm}$ -> (n : Nat) $\hspace{-0.08cm}$ -> Q n
    indNat-zero : indNat Q qz qs zero     = qz
    indNat-succ : indNat Q qz qs (succ n) = qs n (indNat Q qz qs n)
  \end{lstlisting}
  With this, we can write the programs from before entirely in terms of universal properties.
  Here addition is itself defined by the recursor on its first argument, $m + n \defineeq \mathtt{recNat}\,n\,\mathtt{succ}\,m$, so the equations of $+$ used below are the computation rules of \texttt{recNat}:
  \begin{lstlisting}
    isZeroN : Nat -> Bool
    isZeroN = recNat true ($\lambda$ _ -> false)
    $\vspace{-0.08cm}$
    zUnit : (n : Nat) -> n + zero = n
    zUnit = indNat ($\lambda$ n -> n + zero = n) $\hspace{-0.08cm}$ recNat-zero $\hspace{-0.08cm}$ ($\lambda$ n ih -> recNat-succ $\bullet$ ap succ ih)
  \end{lstlisting}
  Because \texttt{Bin} is initial for the same functor, it carries the same universal property, the same recursor and eliminator, now out of \texttt{Bin}. 
  Repair is then a simple instantiation of the universal property at the new initial algebra: $\mathtt{recNat}/\mathtt{indNat}$ become $\mathtt{recBin}/\mathtt{indBin}$ and $\mathtt{zero}/\mathtt{succ}$ become $\mathtt{bzero}/\mathtt{binSucc}$, while the programs' defining data is untouched.
  \begin{lstlisting}
    recBin : {D : Type} -> D -> (D -> D) -> Bin -> D
    indBin : (Q : Bin -> Type) $\hspace{-0.08cm}$ -> Q bzero $\hspace{-0.08cm}$ -> $\hspace{-0.08cm}$ ((b : Bin) $\hspace{-0.08cm}$-> $\hspace{-0.08cm}$Q b $\hspace{-0.08cm}$-> $\hspace{-0.08cm}$Q (binSucc b)) $\hspace{-0.09cm}$ -> $\hspace{-0.08cm}$ (b : Bin) $\hspace{-0.15cm}$ -> $\hspace{-0.15cm}$ Q $\hspace{-0.08cm}$b
    $\vspace{-0.08cm}$
    isZeroB : Bin -> Bool
    isZeroB = recBin true ($\lambda$ _ -> false)
    $\vspace{-0.08cm}$
    zUnitB : (b : Bin) -> b + bzero = b
    zUnitB = indBin $\hspace{-0.08cm}$ ($\lambda$ b -> b + bzero = b) $\hspace{-0.08cm}$ recBin-zero ($\lambda$ b ih -> recBin-succ $\bullet$ ap binSucc ih)
  \end{lstlisting}
\end{example}
The algebraic perspective on proof repair is epitomized by the preceding example; the two very different representations of natural numbers are initial for the same functor and hence must have a unique isomorphism between them.
This is our thesis emerging: \texttt{Nat} and \texttt{Bin} are repaired precisely through this algebra isomorphism. 
We spend the rest of this section cementing this idea and establishing that our slogan (proof repair is an algebra isomorphism) is general and concrete.
\subsection{Connecting Configurations and Algebras}

Type equivalences are an unsatisfactory answer to the question of what configurations capture.
Indeed, because type equivalences do not capture the inductive structure of configurations, our proof of \Cref{prop: le-config-equiv} assumed an inductive algebra to generate a configuration from an equivalence.
We use P-algebras to put our understanding of configurations on firmer ground.
The main idea of this section is that configurations are a syntax-sensitive method for capturing a shared inductive structure on types, which can be represented by algebra isomorphisms.
We begin to develop this idea by coaxing configurations into something less burdened by syntax.

Every configuration has a canonical form that we call a \textit{W-type configuration}. These configurations emulate W-types
by capturing the inductive structure of types with a single constructor that takes a non-recursive and a recursive argument.
\begin{definition}[W-type Configurations \agda{inductive-repair.wtype-config}{WTypeConfig}]\label{def: wtype-config}
  We define a one-constructor \emph{W-type signature}: a single operation whose arity supplies a shape $a : A$ and then a $B\,a$-indexed family of recursive children. 
  \begin{align*}
    &\mathsf{WTypeArity} : (A : \Type) \to (A \to \Type) \to \unit \to \mathsf{ConstrArity}\\
    &\mathsf{WTypeArity}\,A\,B\,\_ \defineeq \mathsf{Nonrec}\, A\, (\lambda a \to \mathsf{Rec}\, (B\, a)\, \mathsf{Done})\\[4pt]
    &\mathsf{WTypeSignature} : (A : \Type) \to (A \to \Type) \to \mathsf{Signature}\\
    &\mathsf{WTypeSignature}\, \_ \, \_ \, \mathsf{.Signature.Op} \defineeq \unit \\
    &\mathsf{WTypeSignature}\,A\,B\, \mathsf{.Signature.arity} \defineeq \mathsf{WTypeArity}\,A\,B\\[4pt]
    &\mathsf{WTypeIndAlg} : (A : \Type) \to (A \to \Type) \to (C : \Type) \to \Type\\
    &\mathsf{WTypeIndAlg}\,A\,B\,C \defineeq \mathsf{IndAlg}\, (\mathsf{WTypeSignature}\,A\,B)\,C
  \end{align*}
  A \emph{W-type configuration} is the special configuration where the signature is a W-type signature, collapsing all per-constructor data into one constructor with one non-recursive argument and one recursive argument:
  \begin{align*}
    &\mathsf{WTypeConfig} : (A : \Type) \to (A \to \Type) \to (C\,D : \Type) \to \Type\\
    &\mathsf{WTypeConfig}\,A\,B\,C\,D \defineeq \mathsf{Config}\, (\mathsf{WTypeSignature}\,A\,B)\,C\,D
  \end{align*}
\end{definition}

Whereas configurations are useful for capturing the syntactic detail of particular representations of inductive types,
W-type configurations collapse that detail, which makes them useful for reasoning about the underlying structure configurations capture.
Indeed, we prove every configuration is represented by a corresponding W-type configuration.

\begin{lemma}[Configurations Faithfully Embed into W-type Configurations \agdamod{inductive-repair.wtype-config}]\label{lem: embed-wtype}
  Every configuration over an arbitrary signature $\mathit{sig}$ determines a W-type configuration on the same carriers. We read off the polynomial whose shapes pair a constructor label with its non-recursive data (an argument list at the carrier $\unit$) and whose positions ($\mathsf{recPos}$) are the recursive arguments occurring in that shape:
  \[ A \;\defineeq\; {(c : \mathsf{Op})} \times \mathsf{Args}(\mathsf{arity}\,c)\,\unit, \qquad B\,(c, s) \;\defineeq\; \mathsf{recPos}(\mathsf{arity}\,c)\,s, \]
  and the configuration transports to a W-type configuration for this $(A, B)$:
  \begin{align*}
    \mathsf{Config}\,\mathit{sig}\,C\,D \to ({A : \Type}) \times (({B : (A \to \Type)}) \times (\mathsf{WTypeConfig}\,A\,B\,C\,D))
  \end{align*}
  The embedding is faithful: writing $\varphi, \psi : C \leftrightarrows D$ for the forward and backward maps of the repair equivalence induced by the configuration (\Cref{prop: le-config-equiv}), and $\varphi_W, \psi_W$ for those induced by its W-type configuration, the maps are equal:
  \[ \varphi_W = \varphi \qquad\text{and}\qquad \psi_W = \psi. \]
\end{lemma}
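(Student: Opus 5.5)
The plan is to factor the construction through a \emph{normal form} equivalence on argument lists, and then transport each component of an inductive algebra along it. For every arity $\mathit{cs}$ and carrier $X$, I would first define $\mathsf{recPos}\,\mathit{cs} : \mathsf{Args}\,\mathit{cs}\,\unit \to \Type$ by recursion on $\mathit{cs}$. It sends $\mathsf{Done}$ to the empty type, it passes through $\mathsf{Nonrec}$ at the chosen $a$, and it sends $\mathsf{Rec}\,I\,\mathit{cs}$ to $I + \mathsf{recPos}\,\mathit{cs}$.

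I would then build an equivalence
\[ \mathsf{Args}\,\mathit{cs}\,X \;\simeq\; (s : \mathsf{Args}\,\mathit{cs}\,\unit) \times (\mathsf{recPos}\,\mathit{cs}\,s \to X), \]
again by recursion on $\mathit{cs}$. The shape $s$ is $\mathsf{mapArgs}\,\mathit{cs}\,(\lambda\_.\,\unitt)$, and the children are read off by case analysis on the coproduct. The inverse reassembles the nested tuple by splitting $I + \mathsf{recPos}\,\mathit{cs}$ into its two summands. The round trips use the universal property of $+$ and $\unit$, and function extensionality for the $\mathsf{Rec}$ case. I would also prove that this equivalence is natural in $X$, meaning it intertwines $\mathsf{mapArgs}\,\mathit{cs}\,g$ with $\mathsf{P\text{-}map}\,g$. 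Summing over $\mathsf{Op}$ then gives $(o : \mathsf{Op}) \times \mathsf{Args}(\mathsf{arity}\,o)\,X \simeq P\,A\,B\,X$, and hence $\mathsf{ConstrAlgebra}\,\mathit{sig}\,X \simeq \mathsf{ConstrAlgebra}\,(\mathsf{WTypeSignature}\,A\,B)\,X$.

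Next comes the inductive hypotheses. I would show that under the normal form equivalence, $\mathsf{IH}\,\mathit{cs}\,P\,\mathit{as}$ is equivalent to $(d : \mathsf{recPos}\,\mathit{cs}\,s) \to P\,(u\,d)$, where $(s,u)$ is the image of $\mathit{as}$. This holds by the same recursion, using that a dependent function out of a coproduct is a pair of dependent functions. I would also check that $\mathsf{mkIH}$ corresponds to $\lambda f.\,f \circ u$. With this, an induction principle for $\alpha$ becomes one for the W-type algebra. Given W-type cases, precompose them with the equivalences to get $\mathit{sig}$ cases, and feed those to $\mathsf{ind}$. The W-type $\beta$ law follows from $\beta$ for $\mathit{sig}$, rewritten along the section and retraction of the normal form equivalence. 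Applying this construction to both inductive algebras yields the W-type configuration.

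Faithfulness should then be nearly free. By \Cref{prop: le-config-equiv}, $\varphi = \mathsf{fold}_C\,\alpha_D$, and $\varphi_W$ is the fold of the transported $C$-algebra into the transported $D$-algebra. Unfolding the definitions, $\varphi_W$ is $\mathsf{ind}_C$ at the constant motive $D$ with cases obtained by transporting $\alpha_D$'s cases forward and back. This makes it judgmentally, or up to a round-trip homotopy, the same eliminator call as $\varphi$. If that unfolding gets stuck, I would fall back on uniqueness of folds. By naturality of the normal form, $\varphi_W$ satisfies the $\mathit{sig}$-side $\beta$ equation characterizing $\mathsf{fold}_C\,\alpha_D$, so the two are equal, and $\psi_W = \psi$ follows by symmetry.

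I expect the main obstacle to be the $\beta$ law transport. The $\beta$ equation involves $\mathsf{mkIH}$ applied to an argument list that has gone through the equivalence and back, so it is stated over a transported term. It therefore requires the coherence (half-adjoint) data of the equivalence together with the identification of $\mathsf{mkIH}$ with precomposition, not just pointwise round-trip equalities. To keep the dependent paths manageable, I would first upgrade the normal form equivalence to a half-adjoint equivalence, and only then do the transport.
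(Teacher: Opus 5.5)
Your proposal is correct and follows the construction the lemma itself outlines. You build the normal form $\mathsf{Args}\,\mathit{cs}\,X \simeq (s : \mathsf{Args}\,\mathit{cs}\,\unit) \times (\mathsf{recPos}\,\mathit{cs}\,s \to X)$, natural in $X$, transport the constructor algebra, inductive hypotheses (with $\mathsf{mkIH}$ becoming precomposition), eliminator and $\beta$ law along it, and reduce faithfulness to the two folds agreeing, which your fallback via uniqueness of folds settles. The paper gives no written proof (only the Agda link), so this is a comparison with that description; you are also right that the half-adjoint triangle identity is the one non-routine ingredient in the $\beta$ transport, and note that the $\mathsf{Done}$ case of the normal form also needs function extensionality, since functions out of the empty type are unique only up to it.
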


Thus, arbitrary configurations translate to W-type configurations with the repair maps preserved.
This canonicalization is essential, as it lets us transfer type-theoretic results about W-types to repair.
In particular, by showing that configurations over arbitrary signatures carry the same repair data as W-type configurations, we connect repair to the study of W-types by Awodey et al.~\cite{awodey_homotopy-initial_2017}.
Because nothing about repair is lost by passing to the canonical form, we state our main theorem over W-type configurations.

\begin{theorem}[W-type Configurations are Unique Algebra Equivalences \agdamod{inductive-repair.config-alg-equiv}]\label{thm: wtype-algequiv}
  For any polynomial $(A, B)$ and carriers $C, D$, a W-type configuration is exactly an algebra equivalence between algebras on $C$ and $D$ whose source is initial:
    \begin{align*}
      \mathsf{WTypeConfig}\,A\,B\,C\,D \;\simeq
        & (\sigma_C : P\,A\,B\,C \to C) \\
        \times\, & (\sigma_D : P\,A\,B\,D \to D) \\
        \times\, & \mathsf{AlgEquiv}\,(C, \sigma_C)\,(D, \sigma_D) \\
        \times\, & \mathsf{isInitAlg}\,(C, \sigma_C)
  \end{align*}
\end{theorem}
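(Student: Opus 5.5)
We will build the equivalence as a chain of simpler equivalences. The first step identifies $\mathsf{WTypeIndAlg}\,A\,B\,C$ with a $P\,A\,B$-algebra together with an induction principle and its $\beta$ law. Unfolding $\mathsf{WTypeArity}$, we get $\mathsf{Args}\,(\mathsf{WTypeArity}\,A\,B\,\unitt)\,C \simeq (a : A)\times((B\,a \to C)\times\unit)$. This is equivalent to $P\,A\,B\,C$, and the $\unit$-indexed operation contributes only a trivial $\unit \to (-)$ factor. Hence a constructor algebra for the W-type signature is exactly a structure map $\sigma_C : P\,A\,B\,C \to C$. Along the same equivalence, $\mathsf{IH}$ at $(a,r)$ becomes $(b : B\,a) \to Q\,(r\,b)$ and $\mathsf{mkIH}$ becomes postcomposition. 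So $\mathsf{Ind}$ and $\mathsf{Beta}$ become the usual W-type style dependent eliminator for $(C,\sigma_C)$ and its computation rule, which is exactly the notion of ``inductive algebra'' in Awodey--Gambino--Sojakova.

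The second step is the key lemma: for a $P\,A\,B$-algebra $(C,\sigma_C)$, the type of eliminators with $\beta$ law is equivalent to $\mathsf{isInitAlg}\,(C,\sigma_C)$. We follow Awodey et al.~\cite{awodey_homotopy-initial_2017}. In one direction, the eliminator at a constant motive gives a homomorphism into any $Y$. Uniqueness comes from proving pointwise equality of two homomorphisms by induction, using $\beta$ and function extensionality. In the other direction, we build the total algebra on $(x:C)\times Q\,x$ from the cases. Initiality gives a homomorphism into it, and uniqueness shows that its first projection composed with it is the identity, which yields a section. The $\beta$ law then comes from the homomorphism square. Because we want an \emph{equivalence} of types and not just logical equivalence, we must also show that the type of eliminators with $\beta$ law is a proposition whenever one exists. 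We do this by showing that any two (eliminator, $\beta$) pairs agree: induct with the first eliminator on the motive ``the two sections agree'', then show the $\beta$ witnesses agree by the coherence of this path. Since $\mathsf{isInitAlg}$ is also a proposition, logical equivalence then suffices. We expect this step to be the main obstacle. The higher coherence of the $\beta$ paths is exactly where the assumed computation and uniqueness rules for function extensionality are needed, because we must rewrite $\mathsf{happly}(\mathsf{funext}\,h)$ inside the dependent path over the motive.

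The third step handles the $D$ side. Applying the first two steps to both algebras gives
\[
\mathsf{WTypeConfig}\,A\,B\,C\,D \simeq (\sigma_C)\times(\sigma_D)\times \mathsf{isInitAlg}(C,\sigma_C)\times\mathsf{isInitAlg}(D,\sigma_D).
\]
It remains to show that, given $\mathsf{isInitAlg}(C,\sigma_C)$, we have $\mathsf{isInitAlg}(D,\sigma_D) \simeq \mathsf{AlgEquiv}\,(C,\sigma_C)\,(D,\sigma_D)$. Forward, we argue as in \Cref{prop: le-config-equiv}: the unique homomorphisms $\varphi$ and $\psi$ compose to endomorphisms of initial algebras, which must therefore equal $\mathsf{id\text{-}hom}$. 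Backward, initiality transports along an algebra equivalence, since composing with the equivalence gives $\mathsf{AlgHom}\,D\,Y \simeq \mathsf{AlgHom}\,C\,Y$, and contractibility is preserved. Both sides are propositions. The left side is a product of contractibility statements. For the right side, $\mathsf{AlgHom}\,(C,\sigma_C)\,(D,\sigma_D)$ is contractible by initiality of $C$, and $\mathsf{isAlgEquiv}$ is a proposition by the standard bi-invertibility argument carried out in the category of algebras: the two $\Sigma$-factors are contractible once inhabited, because pre- and post-composition with an algebra equivalence are equivalences on hom-types. So logical equivalence suffices again. Finally, reassociating the $\Sigma$-types gives the stated form.
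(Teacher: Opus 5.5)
Your proposal is correct and follows the paper's route: the carrier-level equivalence $\mathsf{WTypeIndAlg}\,A\,B\,X \simeq \mathsf{InitAlgOn}\,A\,B\,X$ from Awodey et al.\ is applied on both carriers, and then a pair of initial algebras is traded for one initial algebra together with an algebra equivalence to the other. Your propositionality arguments are what turn both logical equivalences into type equivalences; the one point to state carefully is that deriving the $\beta$ law from the total algebra uses the full path $\mathsf{fst}\circ k = \mathsf{id}$ in $\mathsf{AlgHom}$, including its witness component, and not just equality of the underlying functions.
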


\begin{proof}[Proof sketch]
  The carrier-level equivalence $\mathsf{WTypeIndAlg}\,A\,B\,X \simeq \mathsf{InitAlgOn}\,A\,B\,X$ (with $\mathsf{InitAlgOn}$ as in \Cref{def: alg-config}) is given by the homotopy-initiality and W-type correspondence of Awodey et al.~\cite{awodey_homotopy-initial_2017}, which we then use to get an equivalence between the corresponding pairs.
  We then construct an equivalence between a pair of initial algebras and a single initial algebra together with an algebra equivalence to the other.
  Composing these equivalences yields the desired conclusion.
  That the equivalence is unique \agdamod{inductive-repair.alg-iso} is given by the universal property and shown by Awodey et al.~\cite{awodey_homotopy-initial_2017}.
\end{proof}

This result gives a bridge between repair and algebras that justifies our slogan that repair is an algebra isomorphism.
Once configurations are translated into W-type configurations the theorem is unsurprising, but it is valuable for the conceptual connection it cements between repair and P-algebras.
Indeed, the algebraic perspective clarifies properties of repair that the purely syntactic view offered by configurations (\Cref{sec: config}) does not.
What is remarkable about this correspondence is that it shows how configurations realize algebra isomorphisms as syntactic transformations that remove old dependencies (\Cref{prop: fold-swap}), so that repair carries out a mathematically defined transformation without ever referring to the old type.

We now know that a configuration packages precisely the data of a universal construction on two carriers, namely the property of both carriers being initial for a common functor.
For a proof engineer, the question of when repair between two datatypes is natural can now be posed precisely and formally as whether there is a natural functor for which both types present an initial algebra.
Given a positive answer, the rest of repair is determined; the transformation the algorithm executes is exactly the algebra isomorphism that initiality forces to exist uniquely.

Additionally, we can state that configurations can handle exactly those datatype changes representable by an algebra isomorphism.
This class is broad in practice, since the initial algebra of a functor pins down the data structure the programmer has in mind, so changes of representation that preserve that structure fall within its reach.
This characterization has force only relative to a fixed polynomial, since any equivalence is an algebra isomorphism for a constant polynomial functor.

\begin{example}[Motivating Example with Algebra Equivalence \agda{inductive-repair.examples.nat}{nat-bin-eqv}]
    Because \texttt{Nat} and \texttt{Bin} are initial for the same functor, initiality hands us a homomorphism each way and forces their two composites to be endomorphisms of an initial algebra, hence the identities.
    This gives the unique algebra isomorphism $\mathtt{toNat/fromNat} : \mathtt{Nat} \simeq \mathtt{Bin}$.
    Now the swap of initial algebras from the previous example is the action of this isomorphism.
    The new recursor $\mathtt{recBin}\,z\,s$ precomposed with $\mathtt{fromNat}$ is again a homomorphism, so by initiality of \texttt{Nat} it must coincide with $\mathtt{recNat}\,z\,s$; the swap is forced, not chosen.
    Each old program is thus equal to the repaired program precomposed with the isomorphism:
    \begin{lstlisting}
      recNatSwap : recNat z s = recBin z s $\circ$ fromNat 
      $\vspace{-0.08cm}$
      isZeroCoh : (n : Nat) -> isZeroN n = isZeroB (fromNat n)
      isZeroCoh n = happly recNatSwap n             
    \end{lstlisting}
    The same story plays out one level up for proofs: the dependent eliminators commute with the isomorphism ($\mathsf{elim\_ok}$), so the repaired proof transports back to the original along the round-trip witness $\mathtt{toNatFromNat}$.
    Writing $\mathtt{zUnitB\text{-}nat}$ for the composed \texttt{Bin}-side proof from the introduction (ranging over $\mathtt{toNat}\,b$, so the predicate stays on the \texttt{Nat} side), we have
    \begin{lstlisting}
      zUnitB-nat : (b : Bin) -> toNat b + zero = toNat b
      zUnitCoh : (n : Nat) $\hspace{-0.08cm}$-> $\hspace{-0.08cm}$tpt $\hspace{-0.08cm}$($\lambda$ m -> m + zero = m) $\hspace{-0.08cm}$(toNatFromNat n) $\hspace{-0.08cm}$(zUnitB-nat (fromNat n)) $\hspace{-0.08cm}$= $\hspace{-0.08cm}$zUnit $\hspace{-0.08cm}$n
    \end{lstlisting}
    Repair across \texttt{Nat} and \texttt{Bin} is, in this precise sense, the unique algebra isomorphism between them: the proof engineer's swap of constructors and eliminators is exactly its action on terms, and the coherence conditions of~\Cref{prop: correctness} are precisely the witnesses that this action sends each original term to its repair.
    However, the naive use of an algebra isomorphism is not repair since it does not rid the code of the old dependency.
    Shortly, we characterize the ability to get rid of old datatypes in terms of P-algebras.
\end{example}
\subsection{Algebraic Repair}
Now, we can formulate repair entirely in P-algebraic terms.
\begin{definition}[Algebra Configuration \agda{inductive-repair.palg-config}{AlgConfig}]\label{def: alg-config}
  Given a polynomial $(A, B)$ and carriers $C, D$, an \emph{initial algebra structure} on $C$ packages an algebra with a proof of initiality:
  \[ \mathsf{InitAlgOn}\,A\,B\,C \;\defineeq\; (\sigma : P\,A\,B\,C \to C) \times \mathsf{isInitAlg}\,(C, \sigma). \]
  An \emph{algebra configuration} between $C$ and $D$ is a pair of initial algebra structures, one on each carrier:
  \[ \mathsf{AlgConfig}\,A\,B\,C\,D \;\defineeq\; \mathsf{InitAlgOn}\,A\,B\,C \;\times\; \mathsf{InitAlgOn}\,A\,B\,D. \]
\end{definition}

An algebra configuration is deliberately the algebraic analogue of a configuration (\Cref{def: configuration}): where a configuration carries an inductive algebra on each carrier, an algebra configuration carries an initial P-algebra on each carrier. 

An algebra configuration determines an algebra isomorphism between its two carriers by the initiality of each side, which we refer to as the repair map.
Forgetting the homomorphism witnesses and keeping the underlying carrier maps recovers exactly the type equivalence $C \simeq D$ of~\Cref{prop: le-config-equiv}; the algebraic statement is the upgrade, an equivalence whose maps preserve the inductive structure rather than a bare bijection of carriers.

This algebraic reformulation is more than a change of vocabulary.
The algebraic picture replaces the syntax-heavy bookkeeping of a configuration with a single mathematical object, a pair of initial algebras for one polynomial functor.
This puts the foundations of proof repair on well-trodden mathematical ground, and clarifies what repair computes. 
It equally clarifies the proof engineer's task.
The engineer needs only the common algebraic structure shared by the source and target; the repair, together with its correctness, is then determined.
The payoff of the slogan that proof repair is an algebra isomorphism is that it names the goal of a repair algorithm precisely and grounds it in a mathematical fact independent of any particular implementation.

The coherence conditions are subsumed by this viewpoint.
In the configuration setting of~\Cref{prop: correctness}, the three conditions are properties one must establish of the induced map. From an algebra configuration they fall out of initiality, and our development derives all three from the two initial algebras alone \agda{inductive-repair.palg-coherence}{AlgCoherence}.
Commutation with the constructors ($\mathsf{constr\_ok}$) is given by the homomorphism square and the remaining two rest on the source being canonically generated by its constructors, which for an initial algebra is exactly its dependent eliminator and the eliminator's computation rule.
This is the precise content of the foreshadowing in~\Cref{sec: config}.

\begin{proposition}[Algebra Config iff Configuration \agda{inductive-repair.conversion}{configToPalg}]\label{prop: alg-config-iff}
  Over carriers $C, D$, write
  \begin{align*}
    &\mathsf{SigConfig}\,C\,D \defineeq (\mathit{sig} : \mathsf{Signature}) \times \mathsf{Config}\,\mathit{sig}\,C\,D\\ 
    &\mathsf{PAlgConfig}\,C\,D \defineeq (A : \Type) \times (B : A \to \Type) \times \mathsf{AlgConfig}\,A\,B\,C\,D
  \end{align*}
  for a configuration and algebra configuration over a signature. There are maps in both directions,
  \[
    \mathsf{SigConfig}\,C\,D \;\leftrightarrow\; \mathsf{PAlgConfig}\,C\,D,
  \]
  that do not form an equivalence because the forward map always produces a W-type signature and the map back recovers the polynomial $(A, B)$ only up to equivalence.
\end{proposition}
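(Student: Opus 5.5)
The forward map $\mathsf{SigConfig}\,C\,D \to \mathsf{PAlgConfig}\,C\,D$ is a composite of earlier results. Given $(\mathit{sig}, \kappa)$, I first apply \Cref{lem: embed-wtype}. This produces the polynomial $A \defineeq (c : \mathsf{Op}) \times \mathsf{Args}(\mathsf{arity}\,c)\,\unit$ and $B\,(c,s) \defineeq \mathsf{recPos}(\mathsf{arity}\,c)\,s$, together with a W-type configuration in $\mathsf{WTypeConfig}\,A\,B\,C\,D$. I then use the carrier-level equivalence $\mathsf{WTypeIndAlg}\,A\,B\,X \simeq \mathsf{InitAlgOn}\,A\,B\,X$, the Awodey--Gambino--Sojakova step used in the proof of \Cref{thm: wtype-algequiv}. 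Applying it componentwise, at $X = C$ and at $X = D$, turns the pair of W-type inductive algebras into a pair of initial algebra structures, which is an element of $\mathsf{AlgConfig}\,A\,B\,C\,D$.

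The backward map uses the same equivalence in reverse, and is in fact simpler. Given $(A, B, \mathcal{I}_C, \mathcal{I}_D)$, I take $\mathit{sig} \defineeq \mathsf{WTypeSignature}\,A\,B$. I invert $\mathsf{WTypeIndAlg}\,A\,B\,X \simeq \mathsf{InitAlgOn}\,A\,B\,X$ on each component to get a pair of inductive algebras, that is, an element of $\mathsf{WTypeConfig}\,A\,B\,C\,D = \mathsf{Config}\,(\mathsf{WTypeSignature}\,A\,B)\,C\,D$. Concretely, the constructor algebra sends $(\unitt, (a, (r, \unitt)))$ to $\sigma(a, r)$. The eliminator is the dependent eliminator derived from initiality via the section of the total-space algebra, as in the \texttt{indNat} example, and the $\beta$ law is its computation rule. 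The only thing I need to check is that $\mathsf{Args}$ at the W-type arity is equivalent to $P\,A\,B\,X$ up to the trailing units. This is a definitional reshuffling.

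For the negative part I will argue by an explicit counterexample rather than prove a general non-equivalence. Start from the backward image of some $(A, B, \ldots)$ and go forward. The recovered polynomial is $A' = \unit \times ((a : A) \times ((B\,a \to \unit) \times \unit))$ with $B'$ built from $B$ via $\mathsf{recPos}$. This is equivalent to $(A, B)$ but not equal to it in general without univalence, since the setting assumes only function extensionality. In the other direction, any $\mathit{sig}$ whose $\mathsf{Op}$ is not $\unit$, for example the two-constructor signature of \texttt{Nat}, cannot be reached by going backward after forward, because the output signature always has $\mathsf{Op} = \unit$. The main obstacle I expect is making this last claim rigorous. Distinguishing $\mathsf{Op} = \mathsf{Bool}$ from $\mathsf{Op} = \unit$ as elements of $\Type$ is not provable in plain MLTT without extra assumptions, since it needs some invariant such as "has two distinct elements" transported along a path of types. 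I would therefore phrase the statement as the existence of the two maps, together with the observation that the round trips hold only up to an equivalence of polynomials. The non-equivalence is then an informal remark rather than a formal theorem, and the bulk of the formal work is the two composite constructions above.
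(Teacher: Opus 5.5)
Your construction is the paper's. Both directions are the Awodey--Gambino--Sojakova carrier-level bridge $\mathsf{WTypeIndAlg}\,A\,B\,X \simeq \mathsf{InitAlgOn}\,A\,B\,X$, applied on each carrier. The forward map first passes through \Cref{lem: embed-wtype}, and the backward map lands in $\mathsf{WTypeSignature}\,A\,B$. The paper also leaves the failure of equivalence as an informal remark. It only notes that restricting to W-type signatures upgrades the maps to the equivalence of \Cref{thm: wtype-algequiv}.

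One correction to your last paragraph: the obstacle you anticipate does not exist. Transporting along a path of types needs only $J$, not univalence. Univalence is about \emph{constructing} paths between types, not eliminating them. Given $p : \mathsf{Bool} = \unit$, you get $\mathsf{tpt}\,(\lambda X.\,X)\,p : \mathsf{Bool} \to \unit$. By induction on $p$, this has left inverse $\mathsf{tpt}\,(\lambda X.\,X)\,p^{-1}$. Since $\mathsf{true}$ and $\mathsf{false}$ have equal images in $\unit$, you get $\mathsf{true} = \mathsf{false}$. That contradicts $\mathsf{true} \neq \mathsf{false}$, which the universe gives by large elimination. So $\neg(\mathsf{Bool} = \unit)$ is provable in MLTT with a universe.

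Consequently, take carriers that admit a configuration over a non-W-type signature, for example $C = D = \mathtt{Nat}$ with the two-constructor signature. Apply $\mathsf{fst}$ twice to a putative path between the round trip and the original configuration. This yields $\unit = \mathsf{Bool}$, so that backward-then-forward identity is formally refutable, not merely informally different. Only the other round trip is genuinely ``up to equivalence''. It would need $\unit \times ((a : A) \times ((B\,a \to \unit) \times \unit)) = A$ in $\Type$, which MLTT with function extensionality does not prove.
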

\begin{proof}[Proof sketch]
  Both maps are the carrier-level bridge of Awodey et al.~\cite{awodey_homotopy-initial_2017} applied on each carrier, which identifies a W-type inductive algebra with a homotopy-initial algebra.
  Restricting to a W-type signature upgrades the maps to the type equivalence of~\Cref{thm: wtype-algequiv}, with both round-trips proved.
\end{proof}

In our example and other repair case studies, the repaired program is always the \textit{same} program, with the universal property instantiated at the target and the rest of the data untouched.
The algebraic framework lets us formulate this as a general statement.

\begin{proposition}[Repair by Instantiation \agda{inductive-repair.palg-config}{foldSwap}]\label{prop: fold-swap}
  Given a polynomial $(A, B)$, fix an algebra configuration between $C$ and $D$ with repair map $\varphi$, and $(Y, \gamma) : \mathsf{P\text{-}Alg}\,A\,B$. Writing $\mathsf{fold}_C\,\gamma : C \to Y$ and $\mathsf{fold}_D\,\gamma : D \to Y$ for the unique homomorphisms out of each initial algebra,
  \[ \mathsf{fold}_C\,\gamma \;=\; \mathsf{fold}_D\,\gamma \circ \varphi. \]
\end{proposition}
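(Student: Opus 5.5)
We argue by uniqueness of homomorphisms out of the initial algebra $(C,\sigma_C)$. Write the algebra configuration as $((\sigma_C, i_C), (\sigma_D, i_D))$, where $i_C : \mathsf{isInitAlg}\,(C,\sigma_C)$ and $i_D : \mathsf{isInitAlg}\,(D,\sigma_D)$. The repair map $\varphi$ is the carrier map of the centre of contraction of $i_C\,(D,\sigma_D)$. It therefore comes with a witness $h_\varphi : \mathsf{isAlgHom}\,(C,\sigma_C)\,(D,\sigma_D)\,\varphi$, that is, $\varphi \circ \sigma_C = \sigma_D \circ \mathsf{P\text{-}map}\,\varphi$. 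Likewise $\mathsf{fold}_D\,\gamma$ is the carrier of the centre of $i_D\,(Y,\gamma)$, with witness $h_D : \mathsf{fold}_D\,\gamma \circ \sigma_D = \gamma \circ \mathsf{P\text{-}map}\,(\mathsf{fold}_D\,\gamma)$. Similarly $\mathsf{fold}_C\,\gamma$ is the carrier of the centre of $i_C\,(Y,\gamma)$.

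First, we form the composite homomorphism $\mathsf{AlgHom\text{-}}\!\circ$ of $(\varphi, h_\varphi)$ and $(\mathsf{fold}_D\,\gamma, h_D)$, an element of $\mathsf{AlgHom}\,(C,\sigma_C)\,(Y,\gamma)$ with carrier map $\mathsf{fold}_D\,\gamma \circ \varphi$. Concretely, its witness is the path
\[ \mathsf{fold}_D\,\gamma \circ \varphi \circ \sigma_C \;=\; \mathsf{fold}_D\,\gamma \circ \sigma_D \circ \mathsf{P\text{-}map}\,\varphi \;=\; \gamma \circ \mathsf{P\text{-}map}\,(\mathsf{fold}_D\,\gamma) \circ \mathsf{P\text{-}map}\,\varphi. \]
The first step whiskers $h_\varphi$ by $\mathsf{fold}_D\,\gamma$, and the second whiskers $h_D$ by $\mathsf{P\text{-}map}\,\varphi$. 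These are then composed with the functoriality equation $\mathsf{P\text{-}map}\,(g \circ f) = \mathsf{P\text{-}map}\,g \circ \mathsf{P\text{-}map}\,f$, which holds definitionally for $\mathsf{P\text{-}map}\,f\,(a,u) = (a, f\circ u)$ (pointwise, and judgmentally in Agda up to eta). The paper already asserts that homomorphisms compose, so this step can simply invoke $\mathsf{AlgHom\text{-}}\!\circ$.

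Second, since $i_C\,(Y,\gamma)$ makes $\mathsf{AlgHom}\,(C,\sigma_C)\,(Y,\gamma)$ contractible, any two of its elements are equal. So the composite equals the centre $(\mathsf{fold}_C\,\gamma, h_C)$. Applying $\mathsf{ap}\,\mathsf{fst}$ to this path gives $\mathsf{fold}_C\,\gamma = \mathsf{fold}_D\,\gamma \circ \varphi$, as required.

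There is essentially no obstacle. The only care needed is bookkeeping: we must check that the composite's witness typechecks against the definition of $\mathsf{isAlgHom}$ as an equality of functions. This uses function extensionality only implicitly, through the path algebra on function-valued paths, and no coherence beyond that is needed because we discard the homomorphism component with $\mathsf{fst}$. Note also that initiality of $D$ is used only to define $\mathsf{fold}_D\,\gamma$ as a homomorphism; the argument rests on initiality of $C$.
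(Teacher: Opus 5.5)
Your proposal is correct and follows the paper's own one-line proof: $\mathsf{fold}_C\,\gamma$ and the composite homomorphism $\mathsf{fold}_D\,\gamma \circ \varphi$ both inhabit the contractible type $\mathsf{AlgHom}\,(C,\sigma_C)\,(Y,\gamma)$, so their carrier maps agree. Your extra bookkeeping only spells out details the paper leaves implicit: the whiskered witness, the functoriality of $\mathsf{P\text{-}map}$, and projecting the path with $\mathsf{fst}$.
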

\begin{proof}
  Both sides are algebra homomorphisms $(C, \sigma_C) \to (Y, \gamma)$, and the hom-space out of an initial algebra is contractible.
\end{proof}

A program written against the universal property consists of defining data $\gamma$ together with an instantiation of $\mathsf{fold}$ at one initial algebra.
By \Cref{prop: fold-swap}, repair is re-instantiation: replace $\mathsf{fold}_C$ by $\mathsf{fold}_D$, keeping $\gamma$.
The repaired term $\mathsf{fold}_D\,\gamma$ is built from $D$'s universal property and $\gamma$ alone, so it mentions the source type $C$ exactly when $\gamma$ does.
When the program is written generically, its defining data is carrier-free (like the recursor cases \texttt{true} and $\lambda\_\to\mathtt{false}$), and the repaired term then mentions $C$ nowhere.
When $\gamma$ does mention $C$, those occurrences are further uses of $C$'s algebra structure, repairable by another instance of the same swap.
Whether a term mentions $C$ is a property of its syntax, which the type theory cannot express internally.

\Cref{prop: fold-swap} folds into a fixed algebra $(Y, \gamma)$, covering recursion but not induction, whose motive $P : C \to \Type$ mentions the carrier.
It applies anyway, since induction is a fold into the total-space algebra on $(x : C) \times P\,x$, with the eliminator its first-projection section \agda{inductive-repair.bridge}{bridgeBwd}.
Because the motive mentions $C$, the target is transported along the repair map (unique by \Cref{thm: wtype-algequiv}) rather than shared.
Projecting the resulting equation of pairs onto the proof component is exactly $\mathsf{elim\_ok}$ (\Cref{prop: correctness}), which is why that condition carries transported cases $\varphi^\ast\mathit{cases}$.
The transported motive still mentions $C$, but only through constructors and folds, themselves repairable by further swaps; iterating over motive and cases yields the native $D$-side proof, as in \texttt{zUnitB} of \Cref{sec: config}.

We view \Cref{prop: fold-swap} as stating that repair exploits a kind of \textit{representation independence}~\cite{angiuli_representation_2021} for algebras in type theory.
Programs and proofs written over a specific datatype are independent of the specific implementation of that datatype and are only sensitive to the underlying algebra, which is the key idea behind repair.
This is why repair tools can operate by syntactic instantiation, supplying the programs' defining data to the same universal property at a different carrier.
Thus, the algebraic perspective additionally gives a mathematical account of why repair can eliminate the source type as a dependency.

\subsection{Discussion: Repair without Univalence}\label{subsec: non-ua}

While our results have made real use of the homotopy interpretation of the intensional identity type, nowhere have we relied upon univalence: the entire inductive development is carried out in intensional Martin-L\"of type theory with function extensionality as the only axiom (\Cref{sec: conventions}).
This is notable because prior proof repair work was largely framed within a univalent context, where the correctness conditions were thought to require univalence~\cite{ringer2021pldi}.
Our development thus broadens the class of type theories in which this kind of proof repair is known to be justified. 
For example, our results apply to settings where univalence is inconsistent, such as Lean, whose type theory has uniqueness of identity proofs~\cite{lean_propositional_2026}. 
% (The coinductive development of \Cref{sec: coinductive} is carried out in Cubical Agda for its treatment of coinductive equality, but it, too, makes no use of univalence; we return to this point there.)
% What we have brought to the discussion on proof repair is a viewpoint in terms of universal constructions (h-initial algebras), which have a rich literature.
% We hope that this opens the door to developing more kinds of repair for different algebras/datatypes (see \Cref{sec: coinductive}), and convinces the reader that type-theoretic proof repair is a tool with mathematical justification, in addition to its practical implementations.

\section{Dualizing Repair: Coinductive Types and Final Coalgebras}
\label{sec: coinductive}
With the algebraic perspective on proof repair in hand, we are able to dualize our development to coinductive types and final coalgebras.
Mechanized developments model infinite data and processes via coinductive types, which face the same burden of representation change that motivated repair for inductive types. 
Yet no existing repair tool addresses them.
In this section, we work within Cubical Agda~\cite{vezzosi_cubical_agda}, which supports a notion of equality that is amenable to coinductive types and bisimulation.

Coinductive types are defined in terms of destructors, so programs interact with coinductive data principally by observation.
Moreover, coinductive types enable reasoning about infinite objects through the principle of coinduction~\cite{KOZEN_SILVA_2017}.
A classic infinite object is a stream, an infinite sequence over a given type:
\[ (a_0, a_1, a_2, \ldots) : \mathsf{Stream}\, A \qquad A : \Type \qquad a_i : A .\]
The coinductive type of streams is defined as follows:
\begin{lstlisting}
  record Stream (A : Type) : Type where
    coinductive
    field
      hd : A
      tl : Stream A
\end{lstlisting}

We define programs that construct streams by specifying the observations that can be made of the stream---its head and its tail---a style known as copattern matching~\cite{abel_copattern}.
Here is the program that adjoins an element to the beginning of a stream:
\begin{lstlisting}
  cons : A -> Stream A -> Stream A
  hd (cons a s) = a
  tl (cons a s) = s
\end{lstlisting}

Using the principle of coinduction, we may verify that any stream can be decomposed into \texttt{cons} applied to its head and tail.
In Cubical Agda even this proof is built by copattern matching, a point we return to at the end of \Cref{subsec: warmup-coalg}:
\begin{lstlisting}
  streamComp : (s : Stream A) -> cons (hd s) (tl s) = s
  hd (streamComp s i) = hd s
  tl (streamComp s i) = tl s
\end{lstlisting}

Just like inductive types, coinductive types that capture the same data structure can have different definitions. 
For example, a library may also define the type of streams with a single destructor that packages the head and tail, along with its own \texttt{cons'}:

\noindent
\begin{minipage}[t]{0.52\textwidth}
\vspace{0pt}
\begin{lstlisting}
  record Stream' (A : Type) : Type where
    coinductive
    field
      obs : A $\times$ Stream' A
\end{lstlisting}
\end{minipage}
\hfill
\begin{minipage}[t]{0.44\textwidth}
\vspace{0pt}
\begin{lstlisting}
  cons' : A -> Stream' A -> Stream' A
  obs (cons' a s) = (a , s)
\end{lstlisting}
\end{minipage}

Our goal in this section is to lay the foundations for  proof repair algorithms that can handle changes between coinductive types, such as the change from \texttt{Stream} to \texttt{Stream'}.
We will work backward, utilizing the coalgebraic perspective on coinductive types to develop repair in terms of universal properties, and then extract a notion of configurations for coinductive types that exposes syntactic detail for an implementation.
\subsection{Warming up to Coalgebras}\label{subsec: warmup-coalg}

We begin by dualizing the algebraic setup in \Cref{sec: alg-repair} to coalgebras.
We internalize the definitions given in standard sources~\cite{aczel_final_1989, rutten_universal_2000,Jacobs_Rutten_2011} on P-coalgebras within the type theory of Cubical Agda.

\begin{definition}[Coalgebra \agda{coinductive-repair.coalg}{P-Coalg}]
  Fix $A : \Type$ and $B : A \to \Type$.
  For the polynomial endofunctor $P\,A\,B$, a \textit{coalgebra} is a pair $(X, \gamma)$ of a carrier $X$ and a structure map $\gamma : X \to P\,A\,B\,X$.
  \begin{align*}
    &\mathsf{P\text{-}Coalg}\,A\,B : \Type\\
    &\mathsf{P\text{-}Coalg}\,A\,B \defineeq (X : \Type) \times (X \to P\,A\,B\,X)
  \end{align*}
\end{definition}

\begin{example}[Stream Coalgebra \agdamod{coinductive-repair.examples.stream}]
  Fix $A : \Type$ and consider the functor
  \[ P : \Type \to \Type \qquad P\,X \defineeq A \times X. \]
  A coalgebra for this functor is a pair $(Z, \gamma)$ where $Z : \Type$ and $\gamma : Z \to A \times Z$.
  We can understand this as a collection of streams over $A$ with dynamics given by $\gamma$, which, by the universal property of products, amounts to a pair \[\langle \gamma_1 , \gamma_2 \rangle : (Z \to A) \times (Z \to Z).\]
  Thus, we have a ``stream system'' given by a collection of streams, an output function $\gamma_1$ (which gives the head of a stream), and a transition function $\gamma_2$ (which gives the tail of a stream).
\end{example}

\begin{remark}
  Fix $A : \Type$ and  $B : A \to \Type$.
  Because the definition of polynomial endofunctor (\Cref{def: poly-endo}) is the dependent sum given by $A$ and $B$, we can always present a coalgebra structure map $\gamma : Z \to P\, A\, B\, Z$ as a
  \textit{shape observation} $\gamma_1 : Z \to A$ and a \textit{position observation} $\gamma_2 : (z : Z) \to B\,(\gamma_1\, z) \to Z$.
\end{remark}

\begin{definition}[Coalgebra Homomorphism \agda{coinductive-repair.coalg}{isCoalgHom}]
  Given a polynomial $(A, B)$ and two coalgebras $(X, \gamma)$ and $(Y, \delta)$ for $P\,A\,B$, a \textit{coalgebra homomorphism} from $(X, \gamma)$ to $(Y, \delta)$ is a function $f : X \to Y$ such that the following diagram commutes:
  \[
    \begin{tikzcd}
      X \arrow{r}{\gamma} \arrow{d}{f} & P X \arrow{d}{P f} \\
      Y \arrow{r}{\delta} & P Y
    \end{tikzcd}
  \]
  Formally, we define the type of coalgebra homomorphisms as the function space between the carriers paired with the proof of the above commutative diagram:
  \begin{align*}
    &\mathsf{isCoalgHom} : (X\,Y : \mathsf{P\text{-}Coalg}\,A\,B) \to (\mathsf{Carrier}\, X \to \mathsf{Carrier}\, Y) \to \Type\\
    &\mathsf{isCoalgHom}\, (X , \gamma)\, (Y , \delta)\, f \defineeq \mathsf{P\text{-}map}\, f \circ \gamma = \delta \circ f \\[4pt]
    &\mathsf{CoalgHom} : \mathsf{P\text{-}Coalg}\,A\,B \to \mathsf{P\text{-}Coalg}\,A\,B \to \Type\\
    &\mathsf{CoalgHom}\,X \, Y \defineeq (f : \mathsf{Carrier}\, X \to \mathsf{Carrier}\, Y) \times \mathsf{isCoalgHom}\, X\, Y\, f
  \end{align*}
\end{definition}

\begin{example}[Stream Homomorphism \agdamod{coinductive-repair.examples.stream}]
  Homomorphisms between coalgebras are maps that preserve and reflect the dynamics of the source coalgebra. 
  This means a homomorphism of stream coalgebras preserves outputs (heads) and transitions (tails):
  % https://q.uiver.app/#q=WzAsNCxbMCwwLCJaIl0sWzAsMiwiQSDDlyBaIl0sWzIsMCwiWSJdLFsyLDIsIkEgw5cgWSJdLFswLDEsIuKfqCBcXGdhbW1hXzEsIFxcZ2FtbWFfMiDin6kgIl0sWzAsMiwiaCAiLDJdLFsxLDMsIjFfQSDDlyBoIl0sWzIsMywi4p+oIFxcc2lnbWFfMSwgz4NfMiBcXHJhbmdsZSIsMl1d
  \[\begin{tikzcd}
    Z && Y \\
    {A \times Z} && {A \times Y}
    \arrow["{h }"', from=1-1, to=1-3]
    \arrow["{\langle \gamma_1, \gamma_2 \rangle }", from=1-1, to=2-1]
    \arrow["{\langle \sigma_1, \sigma_2 \rangle}"', from=1-3, to=2-3]
    \arrow["{1_A \times h}", from=2-1, to=2-3]
  \end{tikzcd}\]
\end{example}

Coalgebras support a notion of finality dual to the initiality of algebras.
We introduce two notions that will become useful for our development: a {final} coalgebra, whose universal property pins down only the underlying function of the unique homomorphism, and a {homotopy-final} coalgebra, which additionally pins down the homomorphism witness.

\begin{definition}[Final Coalgebra \agda{coinductive-repair.coalg}{isFinal}]
  Given a polynomial $(A, B)$, a coalgebra $X : \mathsf{P\text{-}Coalg}\,A\,B$ is \textit{final} if for every $Y : \mathsf{P\text{-}Coalg}\,A\,B$ there exists a coalgebra homomorphism $Y \to X$, and the {underlying function} of that homomorphism is unique.
  \begin{align*}
    &\mathsf{isFinal} : \mathsf{P\text{-}Coalg}\,A\,B \to \Type\\
    &\mathsf{isFinal}\, X \defineeq (Y : \mathsf{P\text{-}Coalg}\, A\, B) \to (h : \mathsf{CoalgHom}\, Y\, X) \\
    &\qquad\qquad\qquad \times \big((h' : \mathsf{CoalgHom}\, Y \, X) \to \fst{h} = \fst{h'}\big)
  \end{align*}
\end{definition}

\begin{definition}[Homotopy-Final Coalgebra \agda{coinductive-repair.coalg}{isHFinal}]
  Given a polynomial $(A, B)$, a coalgebra $X : \mathsf{P\text{-}Coalg}\,A\,B$ is \textit{homotopy-final} if for every $Y : \mathsf{P\text{-}Coalg}\,A\,B$ the type of coalgebra homomorphisms $Y \to X$ is contractible.
  \begin{align*}
    &\mathsf{isHFinal} : \mathsf{P\text{-}Coalg}\, A\, B \to \Type\\
    &\mathsf{isHFinal}\, X \defineeq (Y : \mathsf{P\text{-}Coalg}\, A\, B) \to \mathsf{isContr}(\mathsf{CoalgHom}\, Y\, X)
  \end{align*}
\end{definition}

The reason for distinguishing between these two notions of finality is that proof assistant support for coinductive types that have homotopy-finality is not fully developed.
Cubical Agda formalizations have confirmed finality~\cite{damato_formalising_2025} for containers, but whether coinductive records in Agda have homotopy-finality is open. 
As far as we know, whether coinductive types in Rocq have finality is open, and Lean would trivially have homotopy-finality (given finality) since it supports the uniqueness of identity proofs axiom.
State-of-the-art results~\cite{ahrens_nonwellfounded_2015, gylterud_terminal_2026} formulate M-types internally to type theory, but not in a fashion that is suitable for repair of native definitions.
We return to this point when developing repair for coinductive types.

\begin{example}[Final Stream Coalgebra \agdamod{coinductive-repair.examples.stream}]\label{ex: stream-final}
  The \texttt{Stream} type is the final coalgebra for the stream functor $P\,S \defineeq A \times S$.
  Its coalgebra structure is just the two destructors bundled up, unfolding a stream into its head and tail:
  \begin{lstlisting}
    outStream : Stream X -> P X $\unit$ (Stream X)
    outStream s = (hd s , $\lambda$ _ -> tl s)
  \end{lstlisting}

  The universal property of final coalgebras is what enables the construction of coprograms. 
  The homomorphism from a coalgebra to a final coalgebra gives coinductive types a corecursion principle, a method for defining functions from any coalgebra to a final coalgebra by specifying how the function interacts with the destructors of the coalgebra.
  For \texttt{Stream}, the head is read off from $\gamma$, and the tail corecurses on the unfolded position:
  \begin{lstlisting}
    Stream-corec : (C : P-Coalg X $\unit$) -> CoalgCarrier C -> Stream X
    hd (Stream-corec C c) = fst (CoalgOut C c)
    tl (Stream-corec C c) = Stream-corec C (snd (CoalgOut C c) tt)
  \end{lstlisting}
  That \texttt{Stream-corec}\,$C$ is a homomorphism holds definitionally and it is the unique such function.
\end{example}

The verification of finality for \texttt{Stream} requires reasoning about the equality of two streams to prove the uniqueness of a coalgebra homomorphism.
Equality for the infinite systems given by coalgebra (or coinductive types) requires a notion of indistinguishable behavior that is not captured by the usual intensional identity type: that type is inductively generated by $\mathsf{refl}$, so its only closed proofs identify terms that are already definitionally equal, and it offers no way to {build} an identification of two infinite objects from an observation-by-observation comparison.
To understand equality on coalgebras, we must define an adequate behavior-preserving relation on coalgebras.

\begin{definition}[Bisimulation \agda{coinductive-repair.coalg}{Bisim}]\label{def: bisim}
  Given a polynomial $(A, B)$, fix a coalgebra $(X, \langle\gamma_1, \gamma_2\rangle)$ for $P\,A\,B$ and a span $R : X \to X \to \Type$. The terms $x, y : X$ satisfy the one-step condition $\mathsf{Bisim}\,R\,x\,y$ when they have equal shape observations and $R$-related position observations.
  \begin{align*}
    &\mathsf{Bisim}\,R : (x\,y : X) \to \Type\\
    &\mathsf{Bisim}\,R\,x\,y \defineeq (\text{b-shape} : \gamma_1\,x = \gamma_1\,y) \times \big((b : B\,(\gamma_1\,x)) \to R\,(\gamma_2\,x\,b)\,(\gamma_2\,y\,b^{*})\big)
  \end{align*} where $b^{*}$ indicates a transport across b-shape.

  The span $R$ is a \textit{bisimulation} when every $R$-related pair satisfies this one-step condition.
  \begin{align*}
    &\mathsf{isBisim} : (R : X \to X \to \Type) \to \Type\\
    &\mathsf{isBisim}\,R \defineeq \forall \{x\,y\} \to R\,x\,y \to \mathsf{Bisim}\,R\,x\,y
  \end{align*}
  Terms $x, y : X$ are then {bisimilar} when they are related by some bisimulation, some $R$ with $\mathsf{isBisim}\,R$ and $R\,x\,y$.
  That this notion of bisimulation is a bisimulation in the categorical sense of Rutten~\cite{rutten_universal_2000} is proven in the subsequent proposition.
\end{definition}

The definition of bisimulation given here is the single-carrier specialization of the definition typically given~\cite{rutten_universal_2000}. 
We use this specialization for simplicity; our use of the definition always transports relations over one carrier to another, so we have no need for a heterogeneous relation. 
\begin{proposition}[Coinduction Principle for Final Coalgebras\agda{coinductive-repair.coalg}{coind-from-isFinal}]\label{prop: coind}
  Fix a final coalgebra $(X, \gamma)$.
  The coinduction principle says that bisimilar elements are equal. Given a relation $R$ that is a bisimulation, every $R$-related pair is identified by a path.
  \begin{align*}
    &\mathsf{coind} : (R : X \to X \to \Type) \to \mathsf{isBisim}\,R \\
    &\qquad\qquad \to \forall \{x\,y\} \to R\,x\,y \to x = y
  \end{align*}
\end{proposition}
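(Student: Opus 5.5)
The plan is the standard ``bisimulation as a coalgebra'' argument. Given a bisimulation $R$ with $\mathsf{isBisim}\,R$, I form the total space of related pairs
\[ \mathcal{R} \defineeq (x : X) \times (y : X) \times R\,x\,y \]
and equip it with a coalgebra structure $\rho : \mathcal{R} \to P\,A\,B\,\mathcal{R}$. For a triple $(x, y, r)$, the bisimulation witness $\mathsf{isBisim}\,R\,r$ gives a path $\text{b-shape} : \gamma_1\,x = \gamma_1\,y$ and, for each $b : B\,(\gamma_1\,x)$, a proof of $R\,(\gamma_2\,x\,b)\,(\gamma_2\,y\,b^{*})$. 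I set the shape of $\rho\,(x,y,r)$ to $\gamma_1\,x$. The child at position $b$ is the triple $(\gamma_2\,x\,b,\ \gamma_2\,y\,b^{*},\ \text{that proof})$.

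Next I check that both projections $\pi_1, \pi_2 : \mathcal{R} \to X$ are coalgebra homomorphisms $(\mathcal{R}, \rho) \to (X, \gamma)$. For $\pi_1$ this holds on the nose: $\mathsf{P\text{-}map}\,\pi_1\,(\rho\,(x,y,r)) = (\gamma_1\,x, \lambda b.\,\gamma_2\,x\,b) = \gamma\,x$. Function extensionality, a theorem in Cubical Agda, turns this pointwise equation into the required equation of functions.

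For $\pi_2$ I must show $(\gamma_1\,x, \lambda b.\,\gamma_2\,y\,b^{*}) = (\gamma_1\,y, \gamma_2\,y)$ in the dependent sum $(a : A) \times (B\,a \to X)$. The first component is the path $\text{b-shape}$. The second is a dependent path over it, and it is exactly the canonical path between a function precomposed with a transport and the function itself. I would build it with $\mathsf{transport\text{-}filler}$ or, equivalently, by path induction on $\text{b-shape}$ after generalizing.

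Finality of $(X, \gamma)$ then gives $\pi_1 = \pi_2$ as functions, since they are underlying functions of homomorphisms into $X$; this uses only the weak $\mathsf{isFinal}$ and not homotopy-finality. Applying this equation at $(x, y, r)$ yields $x = y$, which is $\mathsf{coind}\,R\,\mathit{isB}\,r$.

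The main obstacle is the $\pi_2$ homomorphism square, namely the dependent-path bookkeeping for the transported position $b^{*}$. To keep it manageable I would state a small lemma: for $p : a = a'$ and $f : B\,a' \to X$, we have $(a, f \circ \mathsf{tpt}\,B\,p) = (a', f)$ in $P\,A\,B\,X$. I would prove this lemma by $J$ on $p$, where it reduces to a transport along $\mathsf{refl}$. The $\pi_2$ square is then this lemma at $p = \text{b-shape}$ and $f = \gamma_2\,y$.
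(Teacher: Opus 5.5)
Your proposal is correct and follows the paper's argument, which is Rutten's argument as repeated by Ahrens et al.: make the total space $(x : X) \times (y : X) \times R\,x\,y$ into a coalgebra using $\mathsf{isBisim}\,R$, show both projections are coalgebra homomorphisms into $(X, \gamma)$, and use the function-level uniqueness in $\mathsf{isFinal}$ to identify them and then evaluate at $(x, y, r)$. Your transport lemma for the second projection's homomorphism square supplies the dependent-path bookkeeping that the paper's sketch leaves implicit.
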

\begin{proof}[Proof sketch]
  This is the argument of Rutten~\cite{rutten_universal_2000}, repeated in type theory by Ahrens et al.~\cite{ahrens_nonwellfounded_2015}.
  The argument goes by forming the coalgebra on  $\widehat{R} \defineeq \Sigma\,x\,y.\,R\,x\,y$, whose projections are coalgebra homomorphisms into $(X, \gamma)$, which must be the same, giving the desired identity by evaluating at $x, y$.
\end{proof}

The proof of \Cref{prop: coind} relies on $\mathsf{isBisim}\,R$ giving a coalgebra structure on the total space $(\widehat{R}, \gamma_R)$ whose projections $\mathsf{fst}, \mathsf{fst} \circ \mathsf{snd} : \widehat{R} \to X$ are coalgebra homomorphisms into $(X, \gamma)$.
Our lifting of the span $R$ is the canonical one for a polynomial endofunctor $P\,A\,B$, in the sense that $\mathsf{Bisim}$ is the weakest one-step condition from which this follows: $\mathsf{Bisim}$ yields such a $\gamma_R$, and any other condition that does implies $\mathsf{Bisim}$~\agda{coinductive-repair.coalg}{Bisim-weakest}.

\begin{example}[Bisimulation on Streams \agdamod{coinductive-repair.examples.stream}]
  \label{ex: stream-bisim}
  For streams the bisimulation data specializes to a relation $R$ relating streams whose heads agree and whose tails are $R$-related:
  \begin{lstlisting}
    record StreamBis (R : Stream X -> Stream X -> Type) (s t : Stream X) where
      field
        bhd : hd s = hd t
        btl : R (tl s) (tl t)
  \end{lstlisting}
  Coinduction then turns any such bisimulation into a path, defined by copattern matching on the path's head and tail:
  \begin{lstlisting}
    Stream-coind : (R : Stream X -> Stream X -> Type)
                 -> ($\forall$ {s t} -> R s t -> StreamBis R s t)
                 -> $\forall$ {s t} -> R s t -> s = t
    hd (Stream-coind R isBis r i) = bhd (isBis r) i
    tl (Stream-coind R isBis r i) = Stream-coind R isBis (btl (isBis r)) i
  \end{lstlisting}
\end{example}

The ability to construct proofs of identity via copattern matching as in \Cref{ex: stream-final,ex: stream-bisim} is the principal reason for our choice of Cubical Agda.
The path type of cubical type theory is not an inductively generated type; it is defined as a function space out of the interval, which facilitates introduction by means other than $\mathsf{refl}$.
On the other hand, the usual identity type is amenable to the sort of breaking-down of structure that pattern matching on inductive types provides.
Indeed, the bisimulation and the path type are equivalent in Cubical Agda (coind is an equivalence)~\cite{vezzosi_cubical_agda}, giving a type theory where equality is suited for coalgebraic reasoning.

\subsection{Coalgebraic Repair}

Finally, we have the tools in hand to cast coinductive repair in terms of final coalgebras.
The goal of repair is to establish a relationship between types across which programs and proofs can be repaired.
Coalgebra gives us the definitive notion of a relationship between coalgebras: our goal is to define a homomorphism between the coalgebras corresponding to \texttt{Stream} and \texttt{Stream'} so that we can transfer the programs and proofs between the two.
However, as in the inductive case, the proof engineer would like an efficient repair mechanism that removes the source type as a dependency.
We develop this in two parts: first with a coalgebra configuration, and then with an algorithm-friendly configuration.

\begin{definition}[(Homotopy) Coalgebra Configuration \agda{coinductive-repair.coconfig-coalg-equiv}{FinalCoalgOn}]
  Given a polynomial $(A, B)$, a {final coalgebra structure} on a carrier $C$ packages a destructor map with a proof that the resulting coalgebra is final:
  \[ \mathsf{FinalCoalgOn}\,A\,B\,C \defineeq (\gamma : C \to P\,A\,B\,C) \times \mathsf{isFinal}\,(C, \gamma). \]
  A {coalgebra configuration} between carriers $C$ and $D$ is a pair of final coalgebra structures for the same polynomial, one on each carrier:
  \[ \mathsf{CoalgConfig}\,A\,B\,C\,D \defineeq \mathsf{FinalCoalgOn}\,A\,B\,C \;\times\; \mathsf{FinalCoalgOn}\,A\,B\,D. \]
  A coalgebra configuration determines a coalgebra isomorphism: each side's finality yields a unique homomorphism into the other, and the two round-trips are forced to be the identities by uniqueness.
  A homotopy coalgebra configuration is given by replacing the finality condition with $\mathsf{isHFinal}$.
\end{definition}
Now, we can use such a configuration to do the swap induced by the underlying equivalence on programs and proofs, rather than compose with the equivalence.
\begin{example}[Coalgebraic Repair with Stream and Stream' \agdamod{coinductive-repair.examples.stream}]
  The configuration equips each carrier with the same universal property, so a program written with the corecursor and a proof written with the coinduction principle are repaired by swapping the universal property at which they are instantiated.
  The defining data of \texttt{cons} mentions its carrier only through the structure map, so it can be stated once, against an abstract coalgebra $(E, \mathsf{out})$:
  \begin{lstlisting}
    consSeed : P-Coalg X $\unit$ -> P-Coalg X $\unit$
    consSeed (E , out) = ((X $\times$ E) $\uplus$ E , $\gamma$)
      where $\gamma$ (inl (y , t)) = (y , $\lambda$ _ -> inr t)
            $\gamma$ (inr t)       = (fst (out t) , $\lambda$ _ -> inr (snd (out t) tt))

    cons-M  : X -> Stream  X -> Stream  X
    cons-M  x s = Stream-corec  (consSeed (Stream  X , outStream )) (inl (x , s))

    cons'-M : X -> Stream' X -> Stream' X
    cons'-M x s = Stream'-corec (consSeed (Stream' X , outStream')) (inl (x , s))

    streamComp-M  : (s : Stream  X) -> cons  (hd s) (tl s) = s
    streamComp-M  s = Stream-coind  etaR  etaStep  (refl , refl)

    streamComp'-M : (s : Stream' X) -> cons' (fst (obs s)) (snd (obs s)) = s
    streamComp'-M s = Stream'-coind etaR' etaStep' (refl , refl)
  \end{lstlisting}
  The coalgebra \texttt{consSeed} starts in state \texttt{inl (x , s)}, emits \texttt{x}, and then moves to \texttt{inr s}, from which it replays \texttt{s} through \texttt{out}.
  This mirrors the observation after \Cref{prop: fold-swap}: occurrences of the source type in defining data are further uses of its (co)algebra structure, repaired by the same swap.
  The bisimulations $\mathtt{etaR}/\mathtt{etaR'}$ likewise read their carriers only through the structure maps.
  The repaired program and proof carry no reference to the source type.
\end{example}

\subsection{Coinductive Repair via Configurations}
With the coalgebraic perspective on coinductive types, we can now develop the framework for repair algorithms for coinductive types. Akin to the standard presentation of W-types, we give coinductive types a set of type-theoretic rules, and refer to types satisfying these rules as M-types.
The state-of-the-art work on M-types (in the formulation of rules) has not established their homotopy-finality.
In fact, the contemporary work has focused on proving the finality and homotopy-finality for various formulations of M-types~\cite{ahrens_nonwellfounded_2015,damato_formalising_2025,gylterud_terminal_2026}, none of which are rule-based presentations.
We are primarily interested in developing repair for coinductive types, and thus are concerned with rule-based presentations that facilitate programming and proving.
To develop repair for contemporary and future type theories, our development handles both a notion of M-types that corresponds to final coalgebras and a stronger notion that corresponds to homotopy-final coalgebras.

An M-type for a polynomial $(A, B)$ is a carrier, written $\mathsf{M}\,A\,B$, equipped with destructors
\[ \mathsf{shape} : \mathsf{M}\,A\,B \to A \qquad\qquad \mathsf{pos} : (x : \mathsf{M}\,A\,B) \to B\,(\mathsf{shape}\,x) \to \mathsf{M}\,A\,B \]
satisfying the rules $\mathsf{corec}$, $\beta_s$, $\beta_p$, and $\eta$ of \Cref{fig: mrules}.
The corecursor $\mathsf{corec}$ maps any coalgebra $(C, \langle s, p \rangle)$ into $\mathsf{M}\,A\,B$, and its two computation ($\beta$) rules say that destructing a corecursive value computes: $\beta_s$ for the shape and $\beta_p$ for the positions, which lives over $\beta_s$ and is hence a dependent path.
The uniqueness ($\eta$) rule states that any map agreeing with the source coalgebra under the destructors, that is, any coalgebra homomorphism, equals the corecursor.
Strengthening $\eta$ from the function level to the $\Sigma$ level gives $\eta_\Sigma$, which we use for homotopy M-types (\Cref{def: hmtypes}).
Every M-type also enjoys a coinduction principle $\mathsf{coind}$, which says that any bisimulation entails a path; unlike the other rules, it is not assumed but derived (\Cref{def: mtypes}).

\begin{figure}
  \centering
  \begin{mathpar}
    \inferrule*[right=$\mathsf{corec}$]{s : C \to A \\ p : (c : C) \to B\,(s\,c) \to C}
              {\mathsf{corec}\,s\,p : C \to \mathsf{M}\,A\,B}
    \and
    \inferrule*[right=$\beta_s$]{ }{\beta_{s} : \mathsf{shape}\,(\mathsf{corec}\,s\,p\,c) = s\,c}
    \and
    \inferrule*[right=$\beta_p$]{ }
              {\beta_{p} : \mathsf{pos}\,(\mathsf{corec}\,s\,p\,c) \;=_{\beta_s}\; \mathsf{corec}\,s\,p \circ (p\,c)}
    \and
    \inferrule*[right=$\eta$]
      {h : C \to \mathsf{M}\,A\,B \\\\ \beta^h_{s} : (c : C) \to \mathsf{shape}\,(h\,c) = s\,c \\ \beta^h_{p} : (c : C) \to \mathsf{pos}\,(h\,c) =_{\beta^h_s\,c} h \circ (p\,c)}
      {\eta : h = \mathsf{corec}\,s\,p}
    \and
    \inferrule*[right=$\eta_\Sigma$]
      {h : C \to \mathsf{M}\,A\,B \\ \beta_h : {(c : C)} \to \big(\mathsf{shape}\,(h\,c) = s\,c\big) \times \big(\mathsf{pos}\,(h\,c) = h \circ (p\,c)\big)}
      {(h, \beta_h) \;=\; (\mathsf{corec}\,s\,p,\ \beta)}
    \and
    \inferrule*[right=$\mathsf{coind}$]
      {R : \mathsf{M}\,A\,B \to \mathsf{M}\,A\,B \to \Type \\ b : \mathsf{isBisim}\, R\, \langle \mathsf{shape}, \mathsf{pos} \rangle \\ r : R\,x\,y}
      {\mathsf{coind}\,R\,b\,r : x = y}
  \end{mathpar}
  \caption{Rules of the (homotopy) M-type structure over a polynomial $(A, B)$: the corecursor $\mathsf{corec}$, its computation rules $\beta_s$ and $\beta_p$, the function-level uniqueness rule $\eta$ and its $\Sigma$-level strengthening $\eta_\Sigma$, and the derived coinduction principle $\mathsf{coind}$.}
  \label{fig: mrules}
\end{figure}

% Presentation of HasMRules: \mrulesrecordtrue for the record form, \mrulesrecordfalse for the Sigma form.
\newif\ifmrulesrecord
\mrulesrecordfalse
\begin{definition}[M-Types \agda{coinductive-repair.coalg}{HasMRules}]\label{def: mtypes}
 The type $\mathsf{HasMRules}$ bundles the rules of M-types~(\Cref{fig: mrules}):
\ifmrulesrecord
  \begin{align*}
    &\mathbf{record}\ \mathsf{HasMRules}\,X\,\mathsf{shape}\,\mathsf{pos} : \Type\ \mathbf{where}\\
    &\quad \mathsf{corec} : \{C : \Type\}\,(s : C \to A)\,(p : (c : C) \to B\,(s\,c) \to C) \to C \to X\\
    &\quad \beta_s : \forall \{C\,s\,p\,c\} \to \mathsf{shape}\,(\mathsf{corec}\,s\,p\,c) = s\,c\\
    &\quad \beta_p : \forall \{C\,s\,p\,c\} \to \mathsf{pos}\,(\mathsf{corec}\,s\,p\,c) =_{\beta_s} \mathsf{corec}\,s\,p \circ (p\,c)\\
    &\quad \eta : \forall \{C\,s\,p\}\,(h : C \to X) \to (\beta^h_s : (c : C) \to \mathsf{shape}\,(h\,c) = s\,c)\\
    &\quad\qquad \to ((c : C) \to \mathsf{pos}\,(h\,c) =_{\beta^h_s\,c} h \circ (p\,c)) \to h = \mathsf{corec}\,s\,p
  \end{align*}
\else
  \begin{align*}
    &\mathsf{HasMRules} : (X : \Type) \to (\mathsf{shape} : X \to A) \to ((x : X) \to B\,(\mathsf{shape}\,x) \to X) \to \Type\\
    &\mathsf{HasMRules}\,X\,\mathsf{shape}\,\mathsf{pos} \defineeq\\
    &\qquad\phantom{\times}\ (\mathsf{corec} : \{C : \Type\}\,(s : C \to A)\,(p : (c : C) \to B\,(s\,c) \to C) \to C \to X)\\
    &\qquad \times (\beta_s : \forall \{C\,s\,p\,c\} \to \mathsf{shape}\,(\mathsf{corec}\,s\,p\,c) = s\,c)\\
    &\qquad \times (\beta_p : \forall \{C\,s\,p\,c\} \to \mathsf{pos}\,(\mathsf{corec}\,s\,p\,c) =_{\beta_s} \mathsf{corec}\,s\,p \circ (p\,c))\\
    &\qquad \times \big(\eta : \forall \{C\,s\,p\}\,(h : C \to X) \to (\beta^h_s : (c : C) \to \mathsf{shape}\,(h\,c) = s\,c)\\
    &\qquad\qquad \to ((c : C) \to \mathsf{pos}\,(h\,c) =_{\beta^h_s\,c} h \circ (p\,c)) \to h = \mathsf{corec}\,s\,p\big)
  \end{align*}
\fi
  An \textit{M-type} is a carrier with destructors for which $\mathsf{HasMRules}$ is inhabited. 
  As we will see (\Cref{thm: mtypes-final}), M-types are carriers of final coalgebras, and, as such, enjoy a coinduction principle, so we do not include a coinduction principle as a field of $\mathsf{HasMRules}$.
\end{definition}

We emphasize that M-types are not a primitive of our ambient theory: $\mathsf{HasMRules}$ is a {property} that a given carrier with given destructors may or may not enjoy.
There exists an M-type for any polynomial $(A, B)$, and any M-type has a coalgebra given by $\langle \mathsf{shape}, \mathsf{pos} \rangle$, which enables the formation of bisimulations on M-types, as in the derived $\mathsf{coind}$ principle.

Concretely, we use the coinductive records of Cubical Agda to model these M-types and verify that they satisfy the given rules \agda{coinductive-repair.coalg}{M-HasMRules}.
\begin{lstlisting}
  record M (A : Type) (B : A -> Type) : Type where
    coinductive
    field
      shape : A
      pos   : B shape -> M A B

  corec : (C : Type) (s : C -> A) (p : (c : C) -> B (s c) -> C) -> C -> M A B
  shape (corec C s p c)   = s c
  pos   (corec C s p c) b = corec C s p (p c b)

  M-HasMRules : HasMRules (M A B) shape pos
  M-HasMRules = isFinal$\to$HasMRules M-isFinal

  M-coind = HasMRules-coind M-HasMRules
\end{lstlisting}
Now, we develop M-types that directly correspond to homotopy-final coalgebras.
\begin{definition}[Homotopy M-Types \agda{coinductive-repair.coalg}{HasHMRules}]\label{def: hmtypes}
  A {homotopy M-type} strengthens the $\eta$ rule of an M-type from the function level to the $\Sigma$ level: a homomorphism $(h, \beta_h)$ agreeing with the source coalgebra under the destructors equals the corecursor bundle $(\mathsf{corec}\,s\,p, \beta)$ by a $\Sigma$-path ($\eta_\Sigma$ in \Cref{fig: mrules}).
  Exactly as $\mathsf{HasMRules}$ bundles the plain rules, we bundle the M-rules with $\eta_\Sigma$ in place of $\eta$ into a single type $\mathsf{HasHMRules}\,X\,\mathsf{shape}\,\mathsf{pos}$.
\end{definition}

A homotopy M-type differs from an M-type only in this one strengthened rule.
Note that while the existence of homotopy M-types in HoTT is known via the limit construction of Ahrens et al.~\cite{ahrens_nonwellfounded_2015}, our development and prior work do not prove that they exist within Cubical Agda.
Proving that the native $\texttt{M}$ record is a homotopy M-type requires characterizing the path types of $\texttt{M}$ at higher dimensions, which the state of the art~\cite{damato_formalising_2025,gylterud_terminal_2026} does not do; indeed, our attempted formalization ran into a documented issue~\cite{altenkirch_cubicalissue_2020}.
The obstruction disappears under additional hypotheses on the polynomial. For example, $\texttt{Stream}\,A$ is a set whenever $A$ is a set \agdamod{coinductive-repair.examples.stream}. The same fact holds for M-types whose shape type is a set, but we do not pursue the general statement here.

Configurations expose the syntactic detail of a particular representation that a repair algorithm walks over. 
We give a configuration framework for coinductive types by focusing on destructors and outputs. 
We introduce the destructor-level signature and algebra, then define a single carrier's coinductive structure, and finally give a configuration as a pair of these.

\begin{definition}[Destructor Algebra \agda{coinductive-repair.config}{DestrAlgebra}]
  The output telescope of a single destructor is described by a \textit{destructor arity} $\mathsf{DestrArity}$, generated by the same constructors $\mathsf{Done}$, $\mathsf{Nonrec}$, and $\mathsf{Rec}$ as $\mathsf{ConstrArity}$ (\Cref{def: constr-alg}).
  Reading an arity off at a carrier $X$ gives the \textit{outputs} $\mathsf{Outputs}\,\mathit{cs}\,X$ of such a destructor, defined clause for clause as $\mathsf{Args}$, with functorial action $\mathsf{mapOutputs}$ as for $\mathsf{mapArgs}$.
  A {CoSignature} $\mathit{sig}$ is a type $\mathsf{Op}$ of destructor labels together with an arity assignment $\mathsf{arity} : \mathsf{Op} \to \mathsf{DestrArity}$.
  Where a constructor algebra consumes arguments, a \textit{destructor algebra} for $\mathit{sig}$ on a carrier $X$ interprets every label as a destructor exposing its outputs:
  \[ \mathsf{DestrAlgebra}\,\mathit{sig}\,X \defineeq \textstyle (op : \mathsf{Op}) \to \big(X \to \mathsf{Outputs}\,(\mathsf{arity}\,op)\,X\big). \]
\end{definition}

\begin{definition}[Coinductive Configuration \agda{coinductive-repair.config}{CoConfig}]\label{def: coind-config}
   A \textit{coinductive coalgebra} carries a destructor algebra equipped with corecursion and computation rules. 
   Formally, a $\mathsf{CoindCoalg}$ for $\mathit{sig}$ on $X$ packages:
  \begin{itemize}
    \item $\mathsf{destr} : \mathsf{DestrAlgebra}\,\mathit{sig}\,X$, the syntactic coalgebra on $X$;
    \item $\mathsf{corec} : \mathsf{DestrAlgebra}\,\mathit{sig}\,C \to C \to X$ for any source carrier $C$, the corecursor;
    \item $\mathsf{corec\text{-}\beta}$, the per-destructor computation rule: for every source coalgebra $\delta$ and label $op$,
      \[ \mathsf{destr}\,op\,(\mathsf{corec}\,\delta\,c) \;=\; \mathsf{mapOutputs}\,(\mathsf{arity}\,op)\,(\mathsf{corec}\,\delta)\,(\delta\,op\,c); \]
    \item $\mathsf{corec\text{-}\eta}$, function-level uniqueness: any $h : C \to X$ whose destructors agree with $\delta$ agrees with $\mathsf{corec}\,\delta$.
  \end{itemize}
  Again, in Cubical Agda the coinduction principle $\mathsf{coind}$ is derivable from these fields, so $\mathsf{CoindCoalg}$ already carries it implicitly. 
  A \textit{coinductive configuration} for $\mathit{sig}$ between carriers $C$ and $D$ is a pair of coinductive coalgebras, one on each carrier:
  \[ \mathsf{CoConfig}\,\mathit{sig}\,C\,D \defineeq \mathsf{CoindCoalg}\,\mathit{sig}\,C \;\times\; \mathsf{CoindCoalg}\,\mathit{sig}\,D. \]
\end{definition}

Coinductive configurations admit a homotopy strengthening just like the M-type formulations.

\begin{definition}[Homotopy Coinductive Configuration \agda{coinductive-repair.config}{HCoConfig}]
  An {$\mathsf{HCoindCoalg}$} replaces the function-level $\mathsf{corec\text{-}\eta}$ with the {$\Sigma$-level} uniqueness rule $\eta_\Sigma$ of a homotopy M-type ($\mathsf{corec\text{-}uniq\text{-}}\Sigma$).
  A {homotopy coinductive configuration} is a pair of these:
  \[ \mathsf{HCoConfig}\,\mathit{sig}\,C\,D \defineeq \mathsf{HCoindCoalg}\,\mathit{sig}\,C \;\times\; \mathsf{HCoindCoalg}\,\mathit{sig}\,D. \]
\end{definition}

As with inductive configurations, we can verify that the coinductive configurations satisfy the correctness conditions a proof engineer would expect of a repair algorithm. 
These conditions are new correctness criteria for coinductive repair, but they follow the idea that the configuration behaves coherently with the underlying equivalence.

\begin{proposition}[Correctness Conditions \agda{coinductive-repair.coherence}{CoCoherence}]\label{prop: co-correctness}
  Fix a coinductive configuration $\mathsf{CoConfig}\,\mathit{sig}\,C\,D$ with destructor coalgebras $\mathsf{destr}_C, \mathsf{destr}_D$, corecursors $\mathsf{corec}_C, \mathsf{corec}_D$, and $\beta$ laws.
  Let $\varphi : C \to D$ be the forward map of the equivalence induced by the configuration. It satisfies the coinductive duals of the PUMPKIN~Pi coherence conditions.
  \begin{enumerate}
    \item $\mathsf{destr\_ok}$. The destructors commute with $\varphi$: for every label $op$ and $c : C$,
      \[ \mathsf{destr}_D\,op\,(\varphi\,c) \;=\; \mathsf{mapOutputs}\,(\mathsf{arity}\,op)\,\varphi\,(\mathsf{destr}_C\,op\,c). \]
      Since $\varphi = \mathsf{corec}_D\,\mathsf{destr}_C$ by construction, this is exactly its per-destructor $\beta$ rule.
    \item $\mathsf{iota\_ok}$. Each corecursor computes on its own destructors: this is exactly $\mathsf{corec\text{-}\beta}$, one law for $\mathsf{corec}_C$ and one for $\mathsf{corec}_D$.
    \item $\mathsf{corec\_ok}$. The corecursors commute with $\varphi$: for every source coalgebra $\mathit{ec} : \mathsf{DestrAlgebra}\,\mathit{sig}\,E$ and $e : E$,
      \[ \varphi\,(\mathsf{corec}_C\,\mathit{ec}\,e) \;=\; \mathsf{corec}_D\,\mathit{ec}\,e. \]
    \item $\mathsf{coind\_ok}$. The coinduction principles commute with $\varphi$: writing $\mathsf{coind}_C, \mathsf{coind}_D$ for coinduction on each side, for every bisimulation $R$ on $C$ and $r : R\,x\,y$,
      \[ \mathsf{ap}\,\varphi\,(\mathsf{coind}_C\,r) \;=\; \mathsf{coind}_D\,(\varphi_\ast r), \]
      an equality of the parallel paths $\varphi\,x = \varphi\,y$, where $\varphi_\ast$ transports the $C$-bisimulation to a $D$-bisimulation.
      It only holds for a homotopy configuration ($\mathsf{HCoConfig}$).
  \end{enumerate}
\end{proposition}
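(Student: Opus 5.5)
The plan is to fix $\varphi \defineeq \mathsf{corec}_D\,\mathsf{destr}_C : C \to D$ and $\psi \defineeq \mathsf{corec}_C\,\mathsf{destr}_D : D \to C$, mirroring the inductive construction in \Cref{prop: le-config-equiv}. The same uniqueness argument shows $\psi\circ\varphi = \mathsf{id}_C$ and $\varphi\circ\psi=\mathsf{id}_D$. In the first case both $\psi\circ\varphi$ and $\mathsf{id}_C$ agree with $\mathsf{destr}_C$ under the destructors, by two uses of $\mathsf{corec\text{-}\beta}$ and functoriality of $\mathsf{mapOutputs}$, so each equals $\mathsf{corec}_C\,\mathsf{destr}_C$ by $\mathsf{corec\text{-}\eta}$. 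With this fixed, conditions (1) and (2) need no further work. Condition (1) is literally $\mathsf{corec\text{-}\beta}_D$ at the source coalgebra $\mathsf{destr}_C$. Condition (2) is the pair of fields $\mathsf{corec\text{-}\beta}$ of the two coinductive coalgebras.

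For (3) I would argue as in \Cref{prop: fold-swap}, dually. Both $\varphi\circ\mathsf{corec}_C\,\mathit{ec}$ and $\mathsf{corec}_D\,\mathit{ec}$ are maps $E \to D$ whose destructors agree with $\mathit{ec}$. For the composite, compute
\[ \mathsf{destr}_D\,op\,(\varphi\,(\mathsf{corec}_C\,\mathit{ec}\,e)) = \mathsf{mapOutputs}\,\varphi\,(\mathsf{destr}_C\,op\,(\mathsf{corec}_C\,\mathit{ec}\,e)) = \mathsf{mapOutputs}\,(\varphi\circ\mathsf{corec}_C\,\mathit{ec})\,(\mathit{ec}\,op\,e), \]
using (1), then $\mathsf{corec\text{-}\beta}_C$, then functoriality of $\mathsf{mapOutputs}$. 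Applying $\mathsf{corec\text{-}\eta}_D$ to both maps gives two equalities with $\mathsf{corec}_D\,\mathit{ec}$. Composing them and applying $\mathsf{happly}$ at $e$ yields (3).

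Condition (4) is the main obstacle. First I would define $\varphi_\ast R\,u\,v \defineeq (x\,y : C)\times R\,x\,y \times (\varphi\,x = u)\times(\varphi\,y = v)$ and show it is a bisimulation on $D$ using (1), which needs only transport bookkeeping. Next I would unfold both coinduction principles as in \Cref{prop: coind}. The path $\mathsf{coind}_C\,r$ is obtained by comparing the two projections $\widehat{R} \to C$, both of which are homomorphisms into the final coalgebra on $C$. Postcomposing with $\varphi$ gives two homomorphisms $\widehat{R} \to D$. The path $\mathsf{coind}_D\,(\varphi_\ast r)$ instead comes from the projections out of $\widehat{\varphi_\ast R}$, and there is an evident homomorphism $\widehat{R} \to \widehat{\varphi_\ast R}$ relating the two. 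The difficulty is that identifying the two resulting paths requires more than equality of underlying functions: it requires that the contractible \emph{space} of homomorphisms identify the whole homomorphism witnesses coherently. That is exactly what $\eta_\Sigma$ provides and plain $\mathsf{corec\text{-}\eta}$ does not. So for an $\mathsf{HCoConfig}$ I would use contractibility of $\mathsf{CoalgHom}\,\widehat{R}\,D$ to obtain a path between the composite homomorphisms. I would then take $\mathsf{ap}$ of evaluation at $(x,y,r)$ and check that it agrees with $\mathsf{ap}\,\varphi$ applied to the $C$-side path, via naturality of $\mathsf{ap}$ over composition. I expect the higher-path bookkeeping relating $\mathsf{coind}_D$ on $\widehat{\varphi_\ast R}$ to its restriction along $\widehat{R}\to\widehat{\varphi_\ast R}$ to be the hardest part. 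The first thing I would try is to prove a general lemma: $\mathsf{coind}$ is natural with respect to homomorphisms of bisimulation total spaces whenever the target is homotopy-final.
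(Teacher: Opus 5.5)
Your arguments for (1)--(3) are fine and match the paper. Items (1) and (2) are the $\mathsf{corec\text{-}\beta}$ fields, and (3) is function-level uniqueness applied to $\varphi\circ\mathsf{corec}_C\,\mathit{ec}$ and $\mathsf{corec}_D\,\mathit{ec}$.

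For (4) you take a different route from the paper. The paper proves a two-dimensional coinduction principle on $D$: a bisimulation on parallel paths yields a square. It gets this by applying $\eta_\Sigma$ to the homomorphisms $(a,b,p,q,r)\mapsto p\,k$ and $(a,b,p,q,r)\mapsto q\,k$, uniformly in $k:\mathbb{I}$. It then applies the principle to $\mathsf{ap}\,\varphi\,(\mathsf{coind}_C\,r)$ and $\mathsf{coind}_D\,(\varphi_\ast r)$, which are related by a lift of $R$. You instead compare paths in the contractible $\mathsf{CoalgHom}\,\widehat R\,D$ through a comparison homomorphism $k:\widehat R\to\widehat{\varphi_\ast R}$. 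That route can work, and it avoids the interval-indexed family of homomorphisms. As written, though, it has a gap at exactly the step you defer.

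\textbf{The naturality lemma needs a stronger hypothesis.} Your lemma is false if $k$ only has to commute with the projections as functions. Contractibility identifies two paths in $\mathsf{CoalgHom}\,\widehat R\,D$ only when they have the same endpoints. Your endpoints are $\varphi\circ\pi_i$ and $\pi'_i\circ k$ \emph{as homomorphisms}, and their witnesses are different composites. For fixed $f$, $\mathsf{isCoalgHom}\,\widehat R\,D\,f$ is equivalent to a path type in $\widehat R\to D$, so it is not a proposition once $D$ is not a set.

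\textbf{A counterexample.} Take $A = S^1$ and $B\,a = \bot$. The homotopy-final coalgebra is $S^1$ itself, a bisimulation is just a relation with a map into paths, and $\mathsf{coind}$ returns that path. Let $R\,x\,y = (x = y)$ with the identity bisimulation, and set $k(x,y,r) = (x,y,r\bullet\ell_y)$, where $\ell:(y:S^1)\to y = y$ is the nontrivial section. Then $k$ is a homomorphism and $\pi_i\circ k\equiv\pi_i$, yet the two coinduction paths differ by $\ell_y$.

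\textbf{The bisimulation on $\varphi_\ast R$ is not mere bookkeeping.} In the same example with $C = D = S^1$ and $\varphi = \mathsf{id}$, insert $\ell_y$ into the shape component of the bisimulation on $\varphi_\ast R$. The result is still a valid bisimulation, but it changes $\mathsf{coind}_D\,(\varphi_\ast r)$ and breaks $\mathsf{coind\_ok}$.

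\textbf{What is actually needed.} The correct lemma assumes $\pi'_i\circ k = \varphi\circ\pi_i$ in $\mathsf{CoalgHom}\,\widehat R\,D$, lying over the definitional equality of underlying maps, for $i = 1,2$. With that hypothesis the lemma is immediate, and all the work moves into meeting it:
\begin{itemize}
\item The shape part of the bisimulation on $\varphi_\ast R$ must be $\beta_s\bullet\mathit{bs}\bullet\beta_s^{-1}$, built from $\varphi$'s computation rule and conjugated by the paths $\varphi\,x = u$ and $\varphi\,y = v$.
\item The homomorphism witness of $k$ must be built from $\beta_s$ and $\beta_p$, not $\mathsf{refl}$.
\item You must then check the two resulting 2-cells. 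For $i = 2$ this includes the cancellation $\beta_s^{-1}\bullet\beta_s = \mathsf{refl}$ and the matching coherence on position components.
\end{itemize}
That verification plays the role of the paper's ``lift of $R$'', and it is missing from your proposal.
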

\begin{proof}[Proof sketch]
  The first three follow directly from the structure the configuration carries; the interesting part is $\mathsf{coind\_ok}$.
  % \begin{enumerate}
  %   \item $\mathsf{destr\_ok}$ is the per-destructor $\beta$ rule of $\varphi = \mathsf{corec}_D\,\mathsf{destr}_C$.
  %   \item $\mathsf{iota\_ok}$ is \texttt{corec-$\beta$} itself.
  %   \item $\mathsf{corec\_ok}$ holds by function-level uniqueness (\texttt{corec-$\eta$}): both $\varphi \circ \mathsf{corec}_C\,\mathit{ec}$ and $\mathsf{corec}_D\,\mathit{ec}$ are coalgebra homomorphisms $(E, \mathit{ec}) \to (D, \mathsf{destr}_D)$ 
  %   --- the former because $\mathsf{corec}_C\,\mathit{ec}$ is a homomorphism into $C$ and $\varphi$ is one by $\mathsf{destr\_ok}$ .
    It lifts the argument of \Cref{prop: coind} one dimension: a two-dimensional bisimulation $R_2$ on parallel paths $p, q : a = b$ yields $p = q$~\agda{coinductive-repair.config}{deriveCoind2D}.
    The coalgebra now lives on $\Sigma\,a\,b\,(p\,q : a = b).\,R_2\,p\,q$, and in place of two projections there is, for each $k : \mathbb{I}$, a pair of homomorphisms sending $(a, b, p, q, r)$ to $p\,k$ and $q\,k$.
    The $\Sigma$-level uniqueness of corecursors ($\mathsf{corec\text{-}uniq\text{-}\Sigma}$) identifies all of these with the corecursor, coherently at $k = 0, 1$, which yields the square $p = q$ and is why $\mathsf{coind\_ok}$ needs the homotopy tier.
    Applying this on $D$ to $\mathsf{ap}\,\varphi\,(\mathsf{coind}_C\,r)$ and $\mathsf{coind}_D\,(\varphi_\ast r)$, which are related by a lift of $R$, gives $\mathsf{coind\_ok}$.
  % \end{enumerate}
\end{proof}

Beyond generalizing Rutten's argument, the two-dimensional principle holds for any carrier satisfying the interface $\mathsf{HasHMRules}$, not just a particular construction of M-types such as the limit construction of Ahrens et al.~\cite{ahrens_nonwellfounded_2015}, and it makes no truncation assumption.
Working at the level of the interface is crucial for a theory of repair, which must apply to the coinductive types a proof assistant provides natively rather than to one chosen construction.
The fourth condition, $\mathsf{coind\_ok}$, is the one condition that demands the homotopy tier.
Streams over a set assemble into an $\mathsf{HCoConfig}$ \agdamod{coinductive-repair.examples.stream}, so $\mathsf{coind\_ok}$ holds in particular for the running \texttt{Stream}/\texttt{Stream'} example whenever the element type is a set.

The $\mathsf{corec\_ok}$ condition is the coinductive form of repair by instantiation (\Cref{prop: fold-swap}), the dual of $\mathsf{fold}_C\,\gamma = \mathsf{fold}_D\,\gamma \circ \varphi$.
It can take the source coalgebra $\mathit{ec}$ verbatim because $\mathit{ec}$ lives on a third type $E$ and never mentions the repaired carriers, just as a fold's target algebra $\gamma$ does not.
By contrast, the data of $\mathsf{elim\_ok}$ (a motive over $C$) and of $\mathsf{coind\_ok}$ (a bisimulation on $C$) mention the carrier, and so must be transported along $\varphi$.

\begin{example}[Stream and Stream' Repair via Configurations \agdamod{coinductive-repair.examples.stream}]
  Now we can phrase our running example through coinductive configurations.
  \begin{lstlisting}
    StreamSig : CoSignature
    StreamSig = MTypeSignature X $\unit$

    streamCoindCoalg  : CoindCoalg StreamSig (Stream  X)
    streamCoindCoalg  = coalgBridgeBwd (outStream  , Stream-isFinal)
    stream'CoindCoalg : CoindCoalg StreamSig (Stream' X)
    stream'CoindCoalg = coalgBridgeBwd (outStream' , Stream'-isFinal)

    streamCoConfig : CoConfig StreamSig (Stream X) (Stream' X)
    streamCoConfig = record { coindCoalgC = streamCoindCoalg; coindCoalgD = stream'CoindCoalg }
  \end{lstlisting}
  Given any source coalgebra, \texttt{unfold ind} corecurses into that component's carrier. 
  Written against the abstract $\mathsf{CoindCoalg}$, it names neither \texttt{Stream} nor \texttt{Stream'}.
  \begin{lstlisting}
    unfold : CoindCoalg StreamSig E -> DestrAlgebra StreamSig S -> S -> E
    unfold ind = CoindCoalg.corec ind
  \end{lstlisting}
  Repair is now executed through a swap of the configuration's components.
  \texttt{consVia} writes \texttt{cons} against an abstract component.
  Instantiating at \texttt{streamCoindCoalg} yields \texttt{cons} on \texttt{Stream}; the repair is the \emph{same program} instantiated at the swapped component, now landing in \texttt{Stream'}.
  \begin{lstlisting}
    consVia : CoindCoalg StreamSig E -> X -> E -> E
    consVia ind x e = unfold ind $\delta$ (inl (x , e))
      where $\delta$ op (inl (y , t)) = y , ($\lambda$ _ -> inr t) , tt
            $\delta$ op (inr t)       = mapOutputs _ inr (destr ind op t)

    consCfg  : X -> Stream  X -> Stream  X
    consCfg  = consVia streamCoindCoalg
    consCfg' : X -> Stream' X -> Stream' X
    consCfg' = consVia stream'CoindCoalg  
  \end{lstlisting}
  The proof \texttt{streamComp} is repaired the same way, through the coinduction principle each component carries (\Cref{def: coind-config}).
  \begin{lstlisting}
    streamCompCfg  : (s : Stream  X) -> cons  (hd s) (tl s) = s
    streamCompCfg  s = deriveCoind streamCoindCoalg  etaR  etaRIsBisim  (refl , refl)

    streamCompCfg' : (s : Stream' X) -> cons' (fst (obs s)) (snd (obs s)) = s
    streamCompCfg' s = deriveCoind stream'CoindCoalg etaR' etaRIsBisim' (refl , refl)
  \end{lstlisting}
  \looseness=-1
  This models the execution of a proof repair algorithm for coinductive types on the running programs and proofs.
  These repaired terms compute as we would expect.
  Observing one reduces the configuration scaffolding away, leaving native \texttt{Stream'} data.
  Each equation below holds definitionally.
  \begin{lstlisting}
    consCfg'-head : (x : X) (s : Stream' X) -> fst (obs (consCfg' x s)) = x
    consCfg'-next : (x : X) (s : Stream' X)$\hspace{-.08cm}$ -> $\hspace{-.07cm}$fst (obs (snd (obs (consCfg' x s)))) = fst (obs s)
  \end{lstlisting}
\end{example}

\subsection{Connecting Coinductive Configurations and Coalgebra}

We now characterize M-types as carriers of final coalgebras, setting the foundation for cementing our slogan that proof repair is an algebra isomorphism in the case of coinductive repair.
We give the coalgebraic dual of Awodey et al.'s~\cite{awodey_homotopy-initial_2017} characterization of W-types as homotopy-initial algebras.
The novelty is the left-hand side of the equivalence below, a rule-based presentation of (homotopy) M-types, whereas prior treatments establish finality for coalgebraic or limit constructions of M-types~\cite{ahrens_nonwellfounded_2015, damato_formalising_2025, gylterud_terminal_2026}, not for such an interface.

\begin{theorem}[(Homotopy) M-Types are (Homotopy) Final Coalgebras]\label{thm: mtypes-final}
  Given a polynomial $(A, B)$ and a carrier $X$ with destructors $(s : X \to A,\ p : (x : X) \to B\,(s\,x) \to X)$, having the (homotopy) M-rules is exactly the induced coalgebra $(X, \langle s, p \rangle)$ being (homotopy-)final.
  \begin{enumerate}
    \item (Final \agda{coinductive-repair.coalg}{HasMRules-\%E2\%89\%83-isFinal})
      $\mathsf{HasMRules}\,X\,s\,p \;\simeq\; \mathsf{isFinal}\,(X, \lambda x.\,(s\,x, p\,x))$
    \item (Homotopy-final \agdamod{coinductive-repair.coalg}) $\mathsf{HasHMRules}\,X\,s\,p \;\simeq\; \mathsf{isHFinal}\,(X, \lambda x.\,(s\,x, p\,x))$
  \end{enumerate}
\end{theorem}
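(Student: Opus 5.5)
The plan is to prove both parts by decomposing each side into matching $\Sigma$-components and composing elementary equivalences. No single clever bijection is needed.

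First, fix a source coalgebra $(C, \gamma)$ and present $\gamma$ as $\langle s', p' \rangle$ with $s' : C \to A$ and $p' : (c : C) \to B\,(s'\,c) \to C$, using the Remark on polynomial coalgebras. This is an equivalence $(C \to P\,A\,B\,C) \simeq (s' : C \to A) \times ((c : C) \to B\,(s'\,c) \to C)$, i.e.\ the $\Sigma$-distributivity (``type-theoretic axiom of choice''). Quantifying over $Y : \mathsf{P\text{-}Coalg}\,A\,B$ is then the same as quantifying over $\{C\,s'\,p'\}$, which is exactly the implicit quantification in $\mathsf{HasMRules}$.

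Second, unfold $\mathsf{isCoalgHom}\,(C, \langle s', p' \rangle)\,(X, \langle s, p \rangle)\,h$. The hypothesis is $\mathsf{P\text{-}map}\,h \circ \gamma = \langle s,p\rangle \circ h$. By function extensionality it becomes a pointwise equation, and by the characterization of paths in $\Sigma$-types it becomes a pair of pointwise paths. The first is $s\,(h\,c) = s'\,c$. The second is a dependent path between $p\,(h\,c)$ and $h \circ p'\,c$ over the first. This is the key lemma:
\[ \mathsf{CoalgHom}\,(C,\langle s',p'\rangle)\,(X,\langle s,p\rangle) \simeq (h : C \to X) \times \big((c:C) \to (\beta^h_s : s\,(h\,c) = s'\,c) \times (p\,(h\,c) =_{\beta^h_s} h \circ p'\,c)\big). \]
In Cubical Agda this should hold essentially by $\mathsf{funExt}$ and $\Sigma\mathsf{PathP}$ being equivalences. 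The orientation of the paths must be matched carefully, but symmetry is an equivalence, so this is only bookkeeping.

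For part (1), $\mathsf{isFinal}$ at $Y$ is $(h : \mathsf{CoalgHom}) \times ((h' : \mathsf{CoalgHom}) \to \mathsf{fst}\,h = \mathsf{fst}\,h')$. Under the key lemma, the first component is the triple $(\mathsf{corec}\,s'\,p', \beta_s, \beta_p)$ after distributing the $\Pi$ over $C$ through the $\Sigma$. The second component, after currying $h'$ into $(h, \beta^h_s, \beta^h_p)$ and applying path symmetry, is exactly $\eta$. Reassociating the $\Sigma$ and commuting the quantifier over $(C,s',p')$ past the $\Sigma$ (again by $\Pi$–$\Sigma$ distributivity) then gives $\mathsf{HasMRules}$.

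For part (2), $\mathsf{isContr}\,T$ is $(t : T) \times ((t' : T) \to t = t')$, so the same decomposition applies. The centre again gives $\mathsf{corec}, \beta_s, \beta_p$, and the contraction is now a path of bundles, which is $\eta_\Sigma$ once $\mathsf{CoalgHom}$ is identified with its unfolded $\Sigma$-form via $\mathsf{ap}$ of the key-lemma equivalence; equivalences induce equivalences on path types.

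The main obstacle is this last step. In $\eta_\Sigma$ the homomorphism witness is phrased in the particular curried or dependent-path form of Figure~\ref{fig: mrules}, and that form must agree with the transported witness. The path between the transported $\beta$ bundle and the stated $(\mathsf{corec}\,s\,p, \beta)$ has to be tracked through the equivalence rather than merely the underlying functions. I would handle this by stating the key lemma as an explicit $\mathsf{Iso}$ whose components are definitional in Cubical Agda, so that the path-type equivalence is $\mathsf{congIso}$ and the witnesses line up up to $\mathsf{symP}$. A secondary concern is universe levels, since $\mathsf{HasMRules}$ quantifies over $C$ at the level of $\mathsf{P\text{-}Coalg}$, but the two quantifications should range over the same universe.
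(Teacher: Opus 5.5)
Your proposal is correct. For part (1) it is the paper's argument: your key lemma plus $\Pi$--$\Sigma$ distributivity reads the corecursor with $\beta_s,\beta_p$ off as the chosen homomorphism out of each source, and $\eta$ is the function-level uniqueness clause up to symmetry. In Cubical Agda the round-trips are definitional, as the paper states.

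For part (2) you close the equivalence by a different route. You build it componentwise: the centre of $\mathsf{isContr}$ is matched with $(\mathsf{corec},\beta_s,\beta_p)$, and the contraction is matched with $\eta_\Sigma$ by reindexing along the key iso and applying $\mathsf{congIso}$ on path types. This is why you must check that the image of the centre is exactly the bundle $(\mathsf{corec}\,s\,p,\beta)$.

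The paper instead only builds maps in both directions, transporting contractibility along the identification of homomorphism spaces. It then observes that $\mathsf{isHFinal}$ and $\mathsf{HasHMRules}$ are both propositions, so the logical equivalence is automatically an equivalence. This sidesteps exactly the path-level coherence you identify as your main obstacle.

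The trade-off runs both ways:
\begin{itemize}
\item The paper's route needs the extra fact that $\mathsf{HasHMRules}$ is a proposition. Unpacked, this is the same currying that exhibits it as a $\Pi$ of contractibility statements.
\item Your route needs no propositionality argument and yields an explicit equivalence. Its round-trips can again be definitional, since $\mathsf{funExt}/\mathsf{happly}$, $\Sigma\mathsf{PathP}$, $\mathsf{sym}$ and $\mathsf{cong}$ all invert definitionally in Cubical Agda.
\end{itemize}
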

\begin{proof}[Proof sketch]
  The corecursor together with its two $\beta$ rules is exactly a coalgebra homomorphism out of any source, and $\eta$ is exactly function-level uniqueness of that homomorphism, and conversely $\mathsf{corec}$, $\beta_s$, $\beta_p$, and $\eta$ are read back off the finality data. 
  Both round-trips hold definitionally.
  The homotopy-final tier strengthens $\eta$ from function-level uniqueness to contractibility of the coalgebra-homomorphism space. Contractibility transports either way and both sides are propositions, so the bi-implication is again an equivalence.
\end{proof}

On the inductive side, the correspondence between W-types and initial algebras requires technical higher-dimensional path algebra because the W-type data has a dependent eliminator that quantifies over all motives. 
For the coinductive story, the correspondence given by \Cref{thm: mtypes-final} is much easier to establish because of the absence of a dependent co-eliminator; instead, the difficulty is relocated to deriving that extra data in the proof of $\mathsf{coind\_ok}$.

Coinductive configurations have a canonical form we call \textit{M-type configurations}. 
These configurations emulate M-types by capturing the coinductive structure of types with a single destructor that exposes a non-recursive and a recursive output.

\begin{definition}[M-type Configuration \agda{coinductive-repair.coconfig-coalg-equiv}{MTypeSignature}]
  We define a one-destructor {M-type signature}: a single observation whose output telescope exposes a shape $a : A$ and then a $B\,a$-indexed family of recursive positions.
  \begin{align*}
    &\mathsf{MTypeArity} : (A : \Type) \to (A \to \Type) \to \unit \to \mathsf{DestrArity}\\
    &\mathsf{MTypeArity}\,A\,B\,\_ \defineeq \mathsf{Nonrec}\,A\,(\lambda a \to \mathsf{Rec}\,(B\,a)\,\mathsf{Done})\\[4pt]
    &\mathsf{MTypeSignature} : (A : \Type) \to (A \to \Type) \to \mathsf{CoSignature}\\
    &\mathsf{MTypeSignature}\,A\,B\, \mathsf{.CoSignature.Op} \defineeq \unit\\
    &\mathsf{MTypeSignature}\,A\,B\, \mathsf{.CoSignature.arity} \defineeq \mathsf{MTypeArity}\,A\,B\\[4pt]
    &\mathsf{MTypeCoindCoalg} : (A : \Type) \to (A \to \Type) \to (C : \Type) \to \Type\\
    &\mathsf{MTypeCoindCoalg}\,A\,B\,C \defineeq \mathsf{CoindCoalg}\,(\mathsf{MTypeSignature}\,A\,B)\,C
  \end{align*}
  An {M-type configuration} is the special coinductive configuration whose signature is an M-type signature, collapsing all per-destructor data into one destructor with one non-recursive output and one recursive output:
  \begin{align*}
    &\mathsf{MTypeCoConfig} : (A : \Type) \to (A \to \Type) \to (C\,D : \Type) \to \Type\\
    &\mathsf{MTypeCoConfig}\,A\,B\,C\,D \defineeq \mathsf{CoConfig}\,(\mathsf{MTypeSignature}\,A\,B)\,C\,D
  \end{align*}
\end{definition}

M-type configurations collapse the detail of a configuration into the canonical one-destructor form over which we identify configurations with coalgebra isomorphisms.
Every coinductive configuration is faithfully represented by an M-type configuration.

\begin{lemma}[Coinductive Configurations Faithfully Embed into M-type Configurations \agdamod{coinductive-repair.mtype-config}]\label{lem: embed-mtype}
  Every coinductive configuration over an arbitrary coinductive signature $\mathit{sig}$ determines an M-type configuration on the same carriers.
  We read off the polynomial from the signature: 
  \[ A \;\defineeq\; ({op : \mathsf{Op}}) \to \mathsf{Outputs}(\mathsf{arity}\,op)\,\unit, \qquad B\,a \;\defineeq\; ({op : \mathsf{Op}}) \times \mathsf{recPos}(\mathsf{arity}\,op)\,(a\,op), \]
  and the configuration converts to an M-type configuration for this $(A, B)$:
  \[ \mathsf{CoConfig}\,\mathit{sig}\,C\,D \to ({A : \Type}) \times (({B : (A \to \Type)}) \times \mathsf{MTypeCoConfig}\,A\,B\,C\,D). \]
  The embedding is {faithful}: the forward and backward maps of the repair equivalence induced by the M-type configuration are equal to those induced by the original configuration.
\end{lemma}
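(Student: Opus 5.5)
The plan is to build the conversion one carrier at a time: a function $\mathsf{CoindCoalg}\,\mathit{sig}\,X \to \mathsf{MTypeCoindCoalg}\,A\,B\,X$ for the polynomial $(A,B)$ read off from $\mathit{sig}$, applied to both components of the configuration. The core is an equivalence, for every carrier $Y$, between destructor algebras and coalgebra structures:
\[ e_Y : \mathsf{DestrAlgebra}\,\mathit{sig}\,Y \;\simeq\; (Y \to P\,A\,B\,Y). \]
To get it, I first prove by induction on the arity $\mathit{cs}$ that $\mathsf{Outputs}\,\mathit{cs}\,Y \simeq (s : \mathsf{Outputs}\,\mathit{cs}\,\unit) \times (\mathsf{recPos}\,\mathit{cs}\,s \to Y)$. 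The $\mathsf{Nonrec}$ case reassociates a $\Sigma$. The $\mathsf{Rec}\,I$ case combines $I \to Y$ with the tail, using $\mathsf{recPos}\,(\mathsf{Rec}\,I\,\mathit{cs})\,(\unitt, s) = I + \mathsf{recPos}\,\mathit{cs}\,s$ and the universal property of coproducts. This equivalence must be natural in $Y$: it intertwines $\mathsf{mapOutputs}\,\mathit{cs}\,g$ with postcomposition by $g$ on the position family.

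I then use the distributivity of $\Pi$ over $\Sigma$ (the type-theoretic axiom of choice) on $(op : \mathsf{Op}) \to (s : \ldots) \times (\ldots \to Y)$. This gives $(a : A) \times ((op : \mathsf{Op}) \to \mathsf{recPos}(\mathsf{arity}\,op)\,(a\,op) \to Y)$, and currying yields $(a : A) \times (B\,a \to Y)$, which is exactly $P\,A\,B\,Y$. Precomposing with $Y$ gives $e_Y$, and naturality carries over to $\mathsf{P\text{-}map}$. The single-label M-type signature satisfies $\mathsf{Outputs}(\mathsf{MTypeArity}\,A\,B\,\unitt)\,Y \simeq P\,A\,B\,Y$ up to the trailing $\unit$, so $e_Y$ also identifies $\mathsf{DestrAlgebra}\,(\mathsf{MTypeSignature}\,A\,B)\,Y$ with $\mathsf{DestrAlgebra}\,\mathit{sig}\,Y$.

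The four fields of the new $\mathsf{CoindCoalg}$ on $X$ are defined as follows.
\begin{itemize}
\item The new $\mathsf{destr}$ is $e_X(\mathsf{destr})$ transported to the M-type signature.
\item The new corecursor sends a one-destructor coalgebra $\delta$ on $E$ to $\mathsf{corec}\,(e_E^{-1}\,\delta)$.
\item The $\beta$ rule follows from the old per-label $\mathsf{corec\text{-}\beta}$ for every $op$, combined by function extensionality and pushed through $e$ using its naturality.
\item $\mathsf{corec\text{-}\eta}$ is proved the same way: a map $h$ whose single destructor agrees with $\delta$ has, after applying $e^{-1}$, destructors agreeing with $e_E^{-1}\,\delta$ label by label, so the old $\eta$ applies. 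The counit of $e_E$ then rewrites $e_E(e_E^{-1}\delta)$ back to $\delta$.
\end{itemize}

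For faithfulness, the induced forward map on the M-type side is by construction $\mathsf{corec}^W_D\,\mathsf{destr}^W_C$. This unfolds to $\mathsf{corec}_D\,(e_C^{-1}(e_C\,\mathsf{destr}_C))$, which equals $\mathsf{corec}_D\,\mathsf{destr}_C = \varphi$ by the unit of the equivalence $e_C$. The backward map is handled symmetrically. Alternatively, I could show both maps are homomorphisms into the final $D$ and invoke $\mathsf{corec\text{-}\eta}$, which avoids computing through transports.

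I expect the main obstacle to be the naturality and coherence of $e$. The $\beta$ law for the new corecursor is a dependent equation, and transporting it through $e$ requires that the round-trip paths of $e$ interact correctly with $\mathsf{mapOutputs}$. In the $\mathsf{Rec}$ case of the arity induction, postcomposition must commute with the coproduct split of positions. I would make $e$ definitionally natural by choosing the $\mathsf{Outputs}$-side encoding so that $\mathsf{mapOutputs}$ computes by pattern matching, and keep the unit and counit as $\mathsf{refl}$ wherever possible. Only $\mathsf{corec\text{-}\eta}$ needs a genuine transport.
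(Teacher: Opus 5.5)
Your proposal is correct and matches the construction the paper describes. The polynomial $(A,B)$ is the product of the per-destructor polynomials. A carrier-natural equivalence $\mathsf{DestrAlgebra}\,\mathit{sig}\,Y \simeq (Y \to P\,A\,B\,Y)$, obtained from the arity-wise splitting of $\mathsf{Outputs}$, $\Pi$--$\Sigma$ distributivity and currying, transfers $\mathsf{destr}$, $\mathsf{corec}$, $\mathsf{corec\text{-}\beta}$ and $\mathsf{corec\text{-}\eta}$. Faithfulness then reduces to the unit of that equivalence, because $\varphi = \mathsf{corec}_D\,\mathsf{destr}_C$.

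One small slip: in the $\mathsf{Rec}\,I$ case the shape component is a map $r : I \to \unit$, not $\unitt$. You therefore also need contractibility of $I \to \unit$, which function extensionality gives, so nothing breaks.
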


We now state our main theorem over M-type configurations, using the coinductive dual of \Cref{def: alg-equiv} in two tiers matching the two notions of finality.
A \emph{coalgebra equivalence} $\mathsf{CoalgEquiv}\,X\,Y$ \agdamod{coinductive-repair.coconfig-coalg-equiv} is a type equivalence of the carriers whose forward and backward maps both carry coalgebra-homomorphism witnesses.
Its homotopy counterpart, $\mathsf{CoalgEquiv^H}$ \agdamod{coinductive-repair.coconfig-coalg-equiv}, is the exact dual of \Cref{def: alg-equiv}.

\begin{theorem}[Coinductive Configurations are Coalgebra Isomorphisms \agdamod{coinductive-repair.coconfig-coalg-equiv}]
  Over the M-type signature, a coinductive configuration is a coalgebra isomorphism whose source is final.
  \begin{enumerate}
    \item (Final) A configuration is logically equivalent to a coalgebra isomorphism between final coalgebras:
      \begin{align*}
        \mathsf{CoConfig}\,(\mathsf{MTypeSignature}\,A\,B)\,C\,D \;\longleftrightarrow\;
        & (\gamma_C : C \to P\,A\,B\,C) \times (\gamma_D : D \to P\,A\,B\,D)\\
        &\times \mathsf{CoalgEquiv}\,(C, \gamma_C)\,(D, \gamma_D) \times \mathsf{isFinal}\,(C, \gamma_C)
      \end{align*}
    \item (Homotopy-final) A {homotopy} configuration is a coalgebra isomorphism between homotopy-final coalgebras:
      \begin{align*}  
        \mathsf{HCoConfig}\,(\mathsf{MTypeSignature}\,A\,B)\,C\,D \;\simeq\;
        & (\gamma_C : C \to P\,A\,B\,C) \times (\gamma_D : D \to P\,A\,B\,D)\\
        &\times \mathsf{CoalgEquiv^H}\,(C, \gamma_C)\,(D, \gamma_D) \times \mathsf{isHFinal}\,(C, \gamma_C)
      \end{align*}
  \end{enumerate}
\end{theorem}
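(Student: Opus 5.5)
The plan mirrors the proof of \Cref{thm: wtype-algequiv}, with \Cref{thm: mtypes-final} in place of the Awodey--Gambino--Sojakova bridge. There are three steps.

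First, I would establish a carrier-level bridge between an M-type coinductive coalgebra and a final coalgebra structure. For a fixed carrier $X$, an element of $\mathsf{MTypeCoindCoalg}\,A\,B\,X$ unpacks to a destructor algebra, i.e.\ a map $X \to (a : A) \times (B\,a \to X) \times \unit$, together with a corecursor, its $\beta$ law and $\mathsf{corec\text{-}\eta}$. Stripping the trailing $\unit$ (and the $\unit$-indexed label) turns the destructor algebra into a structure map $\gamma : X \to P\,A\,B\,X$, and turns the remaining fields into $\mathsf{HasMRules}\,X\,(\mathsf{fst}\circ\gamma)\,(\mathsf{snd}\circ\gamma)$. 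Two points need care here. The corecursor in $\mathsf{CoindCoalg}$ quantifies over source destructor algebras, whereas $\mathsf{HasMRules}$ quantifies over pairs $(s,p)$; these are related by the same $\unit$-repackaging, which is an equivalence. The single $\beta$ law $\mathsf{corec\text{-}\beta}$ splits into $\beta_s$ and a dependent $\beta_p$ by the characterization of paths in $\Sigma$-types. \Cref{thm: mtypes-final} then yields
\[ \mathsf{MTypeCoindCoalg}\,A\,B\,X \;\simeq\; \mathsf{FinalCoalgOn}\,A\,B\,X, \]
and the same argument with $\eta_\Sigma$ and part (2) yields the homotopy version with $\mathsf{isHFinal}$. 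The generalization of $\eta_\Sigma$ to $\mathsf{corec\text{-}uniq\text{-}}\Sigma$ must be matched field for field.

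Second, I would take products of this bridge on $C$ and $D$, so that a (homotopy) configuration becomes a pair of (homotopy-)final coalgebra structures. This pair is then compared with a final source together with a coalgebra equivalence. In the forward direction, the map $\varphi : C \to D$ is the homomorphism given by finality of $D$ and $\psi$ is the one given by finality of $C$. The composites $\psi\circ\varphi$ and $\mathsf{id}$ are both homomorphisms into the final $C$, so their underlying functions agree, and symmetrically for $D$. This builds $\mathsf{CoalgEquiv}$ and gives $\mathsf{isFinal}$ for $C$. In the backward direction, finality of $D$ is transported from $C$ along the coalgebra equivalence. Given $Y \to D$, compose with $\psi$ to get a homomorphism into $C$, use finality there, and compose back with $\varphi$. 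Uniqueness follows because for any $h : Y \to D$ the composite $\psi \circ h$ is unique, and $h = \varphi\circ\psi\circ h$ by the right-inverse path.

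For part (1) I would stop at a logical equivalence. Function-level $\mathsf{isFinal}$ is not a proposition, since its existence component carries homomorphism witnesses that need not be unique. The round trip therefore need not return the original $\mathsf{isFinal}$ data for $D$, and this is exactly why the statement only claims $\longleftrightarrow$.

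The main obstacle is part (2): upgrading the comparison to a genuine equivalence. Here I would use that $\mathsf{isHFinal}$ is a proposition, being a family of $\mathsf{isContr}$. I would also show that, given $\mathsf{isHFinal}\,(C,\gamma_C)$, the type $\mathsf{CoalgEquiv^H}\,(C,\gamma_C)\,(D,\gamma_D)$ is contractible whenever it is inhabited. Its homomorphism component is determined by contractibility of hom-spaces, and the inverse data with their $\Sigma$-level identity proofs live in contractible spaces of homomorphisms out of or into $C$, namely $\mathsf{CoalgHom}\,(C,\gamma_C)\,(C,\gamma_C)$. The delicate point is the inverse going out of $D$. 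I would first transport homotopy-finality to $D$ as above and then use contractibility of $\mathsf{CoalgHom}\,Y\,(D,\gamma_D)$ for the relevant $Y$. With this, both sides are, for fixed $\gamma_C,\gamma_D$, propositions that are logically equivalent, hence equivalent, and $\Sigma$-congruence over $(\gamma_C,\gamma_D)$ finishes the proof. If the direct contractibility argument gets tangled in path algebra on composite homomorphisms, the fallback is to prove $\mathsf{isHFinal}(C)\times\mathsf{isHFinal}(D) \simeq \mathsf{isHFinal}(C)\times\mathsf{CoalgEquiv^H}$ by exhibiting both round trips. This uses that every component except the two structure maps is a proposition or is contractible under the hypotheses.
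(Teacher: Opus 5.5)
Your proposal is correct and takes essentially the paper's route. The paper proceeds the same way: a carrier-level bridge via \Cref{thm: mtypes-final}, then identifying a pair of (homotopy-)final coalgebras with one final source plus a coalgebra isomorphism. The comparison is an equivalence at the homotopy tier because all matched data is propositional, and only a logical equivalence at the function-level tier because $\mathsf{isFinal}$ does not pin down the homomorphism witness. Your argument that $\mathsf{CoalgEquiv^H}$ is contractible once inhabited (after transporting homotopy-finality to $D$) supplies exactly the propositionality that the paper's sketch invokes.
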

\begin{proof}[Proof sketch]
  The first link is a carrier-level bridge given by our characterization of (homotopy) M-types as (homotopy-)final coalgebras (\Cref{thm: mtypes-final}).
  We then identify a pair of final coalgebras with one final coalgebra plus a coalgebra isomorphism to the other. 
  The two directions are mutually inverse exactly when the matched data is propositional. 
  This holds at the homotopy-final tier, giving an equivalence. 
  It fails at the final tier, where $\mathsf{isFinal}$ does not pin the homomorphism witness.
\end{proof}

\looseness=-1
This theorem is the coinductive completion of our slogan; a coinductive configuration witnesses that its carriers are final for a common polynomial functor.
Repair between coinductive types is therefore natural precisely when the types have natural presentations as final coalgebras for the same functor, and the correspondence used for repair is the coalgebra isomorphism that finality forces to exist uniquely between them.
At the homotopy-final tier a coalgebra isomorphism out of a final source is a proposition (as the proof above observes), so a homotopy coinductive configuration, too, determines a unique repair map.
The changes handled by coinductive configurations are exactly those representable by a coalgebra isomorphism, the changes of destructor presentation that preserve the observable behavior of the type.

This development largely rests upon the discovery that proof repair for inductive types corresponds to algebra isomorphisms.
Before the connection between repair and P-algebras was established, it was unclear how to extend proof repair to coinductive types, since the question was muddled by the many different implementation approaches taken by various proof assistants, none of which isolates what a coinductive type is.
By focusing on the universal properties that coinductive types have, we are able to construct a high-level foundation for proof repair algorithms in a general manner, independent of any specific implementation approach.

\section{Related Work}
\label{sec: related}

\hspace{\parindent}\looseness=-1\textbf{Proof Repair.} Our work gives the first foundational treatment of proof repair.
Proof repair was first introduced in 2018~\cite{ringer2018cpp, robert2018}.
It is the proof engineering~\cite{qedatlarge} analogue of program repair~\cite{Monperrus2015, Anwar2026}.
Some of the early examples of proof repair~\cite{ringer2019orn} were based on the theory of algebraic ornaments~\cite{ornaments2011, dagand2012, dagand2013, ko2016}, which express a relation between indexed datatypes where one index type has richer information than the other.
The class of proof repair algorithms that our foundations support comes from a line of work on automatically
repairing proofs in Rocq in response to breaking changes in datatypes~\cite{ringer2021pldi, viola2025oopsla, ringer_proof_2021, ringer2019orn}. That line of work focuses on algorithmic approaches, but leaves open questions about what a configuration represents and what makes a configuration ``natural.''
The thesis work in particular~\cite{ringer_proof_2021} presents a diagram that gives intuition that inspired our foundations, but goes no further in developing that intuition.
We formalize this line of work and answer these open questions. 
Supplementary automation beyond the core repair functionality, like decompilation, is out of the scope of our foundations.

Other proof repair algorithms exist that are out of the scope of our foundations.
For example, \textsc{Sisyphus}~\cite{gopinathan_mostly_2023} repairs separation logic proofs about imperative programs written in OCaml. 
Proof Mate~\cite{proofmate} repairs proofs written in PVS.
Preliminary work explores proof repair for JML~\cite{arr-repair}.
All of these tools concern specification and proof languages that are quite different from the dependent type theory that we concern ourselves with.

\looseness=-1
Recently, neural methods for proof repair have begun to emerge, both as standalone tools~\cite{reichel2023, patel2026proofrepairbench, tahat2024proof, yang2026exverus, ospanov2026apollo, wang2026learningrepairleanproofs, reichel2024neural} and as a way of augmenting other models for proofs~\cite{First2023, r2026lampleanbasedagenticframework, jana2026proofbridge}. The scope of these methods is defined by their training data, rather than by any foundational system. In the long run, foundations like ours may help in defining kinds of data that these tools might wish to support, or in building well-scoped and general neurosymbolic tools that integrate symbolic repair into other neural models (in the fashion of PALM~\cite{palm}).

\textbf{Proof Reuse and Transfer.} Proof repair is one way of reusing proofs. The key conditions that yield repair rather than the more general reuse are that (1) proof repair is primarily concerned with adapting proofs to changes rather than with reusing proofs across developments, and (2) proof repair is most useful in the presence of imperfect foresight or breaking changes in dependencies that are outside of one's control. Giving up the first condition 
yields tools for proof reuse, like refinement frameworks~\cite{refinementsforfree} and transfer methods~\cite{isabelletransfer, marriage, equivalencesforfree, cohen2024trocq}; giving up the second condition yields design principles that aim to make proofs reusable or robust to changes to begin with~\cite{cpdt, Delaware2011, Woos2016, qedatlarge}. 

\looseness=-1
The relationships between these tools and technologies have been discussed extensively in prose~\cite{viola2025oopsla, ringer_proof_2021, qedatlarge, cohen2024trocq}. Our foundations 
add a formal dimension to this discussion. By formalizing the connection between equivalences, configurations, and algebras, we draw an explicit bridge between proof repair and transfer methods whose inner workings use automatic transport. Moreover, with \Cref{prop: fold-swap}, we learn that proof repair
satisfies its distinguishing conditions by exploiting a kind of representation independence~\cite{angiuli_representation_2021}. This property may also hold of other ``white-box'' approaches to automatic transfer~\cite{marriage}, which can be used for repair.

\textbf{(Co)algebras and (Co)inductive Types.}
We briefly cover previous work on (co)algebras and (co)inductive types from which our development greatly benefited, with no claim of completeness.

\looseness=-1
Inductive types are a well-researched area within programming languages and type theory. The seminal work of Dybjer~\cite{dybjer_inductive} gives a schema for inductive and recursive definitions within Martin-L\"of type theory (MLTT).
The work of Chapman et al.~\cite{chapman_levitation} presents a dependent type theory which gives inductive definitions via an encoding in a universe.
The connection between inductive datatypes, algebras, and polynomial endofunctors has received significant attention. 
The primary inspiration for our work is the discovery by Awodey et al.~\cite{awodey_homotopy-initial_2017} that the rules for W-types are equivalent to the existence of homotopy-initial algebras for polynomial endofunctors.
Abbott et al.~\cite{ABBOTT2005} show that the notion of containers (similar to polynomial functors) can be used to represent all strictly positive types and the polymorphic functions between them.
Damato et al.~\cite{damato_formalising_2025} generalize the container development to a setting without the uniqueness of identity proofs axiom, giving a formulation in intensional type theory and formalization in Cubical Agda.

\looseness=-1
Coinductive types and coalgebras are a prominent and growing research area.
The work of Hagino~\cite{hagino} uses categorical logic to develop a simply typed lambda calculus with algebraic and coalgebraic datatypes.
Basold and Geuvers~\cite{basold} extend Hagino's categorical datatypes to a dependent type theory constructed solely from inductive and coinductive types, while leaving the (co)induction principles out.
Ahrens et al.~\cite{ahrens_nonwellfounded_2015} establish that the existence of M-types (in the formulation of homotopy-final coalgebras) is derivable in HoTT in the presence of inductive types. 
In the course of developing models of non-wellfounded sets in HoTT, Gylterud et al.~\cite{gylterud_terminal_2026} characterize the identity type of the M-type (formulated as a homotopy-final coalgebra) as an indexed M-type.
Most modern proof assistants feature some form of coinductive types, made possible by various research programs~\cite{gimenez, abel_copattern,laarakker_constructing_2026}.

\section{Conclusion}
\label{sec: conclusion}

We have given the first foundational treatment of type-theoretic proof repair.
Its fundamental construct, the configuration, is exactly an algebra isomorphism between initial algebras for a common polynomial functor (\Cref{thm: wtype-algequiv}). 
This discovery greatly clarifies the foundational understanding of repair: the coherence conditions imposed by prior tools are consequences of initiality rather than side obligations, the repair equivalence a configuration determines is unique, and programs written against the universal property are repaired by re-instantiation alone (\Cref{prop: fold-swap}).
Dualizing this account yielded the foundations of repair for a class of types no existing tool handles---coinductive types---together with new correctness criteria for coinductive repair and a novel characterization of (homotopy) M-types as (homotopy-)final coalgebras (\Cref{thm: mtypes-final}).

Two aspects of the development deserve emphasis.
First, none of it uses univalence: the inductive side lives in intensional MLTT with function extensionality, and the coinductive side uses Cubical Agda only for its path type, not for univalent principles.
Since prior repair work was framed univalently, this broadens the class of type theories in which repair is known to be justified.
Second, the algebraic framing is extensible.
Higher inductive types, quotient inductive types, inductive-inductive types, and inductive-recursive types all admit algebraic presentations over richer signatures, and we expect configurations, their coherence conditions, and the uniqueness of repair to transfer along the same lines.
We hope this foundation serves both as an explanation of why existing repair tools work and as a blueprint for the repair tools that do not yet exist.

\section*{Acknowledgments}
This material is based upon work supported by the Defense Advanced Research Projects Agency (DARPA)
under Contract No. HR0011262E019. 
Any opinions, findings, and conclusions or recommendations expressed in this material are those of the authors and do not necessarily reflect the views of DARPA.
This work is also supported by the UK Advanced Research + Invention Agency (ARIA).

\section*{Disclosure of Use of AI in Research}
We used AI tools (specifically, Claude) in our research for two tasks: 
(1) the standard refactoring of Agda proofs, and 
(2) making sense of the implications of \Cref{prop: fold-swap}.
For the second task, after we had most of the main results, we asked Claude
whether any of them could explain why repair can eliminate old types.
This is a property of repair that had captured our curiosity,
but for which we had thus far failed to find a crisp, formal explanation.
Claude responded by drawing a connection between our result of \Cref{prop: fold-swap} and 
this property.

\section*{Data Availability Statement}
The artifact for our paper is available on Zenodo.\footnote{\url{https://doi.org/10.5281/zenodo.23326858}}
It includes the Agda formalization of all of our results, as well as the Cubical Agda examples
from \Cref{sec: coinductive}.
An explanation of the artifact and how to install it can be found in \texttt{artifact/README.md}.
The explanation also includes a centralized table relating our informal and formal results
(as an alternative presentation of the blue checkmarks throughout the paper that link to the formal results directly).
Our data is additionally available in a public GitHub repository.\footnote{\url{https://github.com/trey3p/proof-repair-coalgebra}}

\bibliographystyle{ACM-Reference-Format}
\bibliography{references}

%%% -*-BibTeX-*-
%%% Do NOT edit. File created by BibTeX with style
%%% ACM-Reference-Format-Journals [18-Jan-2012].

\begin{thebibliography}{61}

%%% ====================================================================
%%% NOTE TO THE USER: you can override these defaults by providing
%%% customized versions of any of these macros before the \bibliography
%%% command.  Each of them MUST provide its own final punctuation,
%%% except for \shownote{} and \showURL{}.  The latter two
%%% do not use final punctuation, in order to avoid confusing it with
%%% the Web address.
%%%
%%% To suppress output of a particular field, define its macro to expand
%%% to an empty string, or better, \unskip, like this:
%%%
%%% \newcommand{\showURL}[1]{\unskip}   % LaTeX syntax
%%%
%%% \def \showURL #1{\unskip}           % plain TeX syntax
%%%
%%% ====================================================================

\ifx \showCODEN    \undefined \def \showCODEN     #1{\unskip}     \fi
\ifx \showISBNx    \undefined \def \showISBNx     #1{\unskip}     \fi
\ifx \showISBNxiii \undefined \def \showISBNxiii  #1{\unskip}     \fi
\ifx \showISSN     \undefined \def \showISSN      #1{\unskip}     \fi
\ifx \showLCCN     \undefined \def \showLCCN      #1{\unskip}     \fi
\ifx \shownote     \undefined \def \shownote      #1{#1}          \fi
\ifx \showarticletitle \undefined \def \showarticletitle #1{#1}   \fi
\ifx \showURL      \undefined \def \showURL       {\relax}        \fi
% The following commands are used for tagged output and should be
% invisible to TeX
\providecommand\bibfield[2]{#2}
\providecommand\bibinfo[2]{#2}
\providecommand\natexlab[1]{#1}
\providecommand\showeprint[2][]{arXiv:#2}

\bibitem[alt(2020)]%
        {altenkirch_cubicalissue_2020}
 \bibinfo{year}{2020}\natexlab{}.
\newblock \bibinfo{booktitle}{\emph{cubical primitives should preserve
  guardedness · Issue \#4740 · agda/agda}}.
\newblock
\urldef\tempurl%
\url{https://github.com/agda/agda/issues/4740}
\showURL{%
\tempurl}


\bibitem[lea(2026)]%
        {lean_propositional_2026}
 \bibinfo{year}{2026}\natexlab{}.
\newblock \bibinfo{booktitle}{\emph{Propositional Equality}}.
\newblock
\urldef\tempurl%
\url{https://lean-lang.org/doc/reference/latest/Basic-Propositions/Propositional-Equality/#UIP}
\showURL{%
\tempurl}


\bibitem[roc(2026)]%
        {rocq}
 \bibinfo{year}{2026}\natexlab{}.
\newblock \bibinfo{booktitle}{\emph{The Rocq Prover 9.2 documentation}}.
\newblock
\urldef\tempurl%
\url{https://rocq-prover.org/doc/V9.2.0/refman/index.html}
\showURL{%
\tempurl}


\bibitem[Abbott et~al\mbox{.}(2005)]%
        {ABBOTT2005}
\bibfield{author}{\bibinfo{person}{Michael Abbott}, \bibinfo{person}{Thorsten
  Altenkirch}, {and} \bibinfo{person}{Neil Ghani}.}
  \bibinfo{year}{2005}\natexlab{}.
\newblock \showarticletitle{Containers: Constructing strictly positive types}.
\newblock \bibinfo{journal}{\emph{Theoretical Computer Science}}
  \bibinfo{volume}{342}, \bibinfo{number}{1} (\bibinfo{year}{2005}),
  \bibinfo{pages}{3--27}.
\newblock
\showISSN{0304-3975}
\href{https://doi.org/10.1016/j.tcs.2005.06.002}{doi:\nolinkurl{10.1016/j.tcs.2005.06.002}}
\newblock
\shownote{Applied Semantics: Selected Topics}.


\bibitem[Abel et~al\mbox{.}(2013)]%
        {abel_copattern}
\bibfield{author}{\bibinfo{person}{Andreas Abel}, \bibinfo{person}{Brigitte
  Pientka}, \bibinfo{person}{David Thibodeau}, {and} \bibinfo{person}{Anton
  Setzer}.} \bibinfo{year}{2013}\natexlab{}.
\newblock \showarticletitle{Copatterns: programming infinite structures by
  observations}. In \bibinfo{booktitle}{\emph{Proceedings of the 40th Annual
  ACM SIGPLAN-SIGACT Symposium on Principles of Programming Languages}} (Rome,
  Italy) \emph{(\bibinfo{series}{POPL '13})}. \bibinfo{publisher}{Association
  for Computing Machinery}, \bibinfo{address}{New York, NY, USA},
  \bibinfo{pages}{27–38}.
\newblock
\showISBNx{9781450318327}
\href{https://doi.org/10.1145/2429069.2429075}{doi:\nolinkurl{10.1145/2429069.2429075}}


\bibitem[Aczel and Mendler(1989)]%
        {aczel_final_1989}
\bibfield{author}{\bibinfo{person}{Peter Aczel} {and} \bibinfo{person}{Nax
  Mendler}.} \bibinfo{year}{1989}\natexlab{}.
\newblock \showarticletitle{A final coalgebra theorem}. In
  \bibinfo{booktitle}{\emph{Category Theory and Computer Science}} (Berlin,
  Heidelberg, 1989), \bibfield{editor}{\bibinfo{person}{David~H. Pitt},
  \bibinfo{person}{David~E. Rydeheard}, \bibinfo{person}{Peter Dybjer},
  \bibinfo{person}{Andrew~M. Pitts}, {and} \bibinfo{person}{Axel Poigné}}
  (Eds.). \bibinfo{publisher}{Springer}, \bibinfo{pages}{357--365}.
\newblock
\showISBNx{978-3-540-46740-3}
\href{https://doi.org/10.1007/BFb0018361}{doi:\nolinkurl{10.1007/BFb0018361}}


\bibitem[Adámek et~al\mbox{.}(2025)]%
        {Adámek_Milius_Moss_2025}
\bibfield{author}{\bibinfo{person}{Jiří Adámek}, \bibinfo{person}{Stefan
  Milius}, {and} \bibinfo{person}{Lawrence~S. Moss}.}
  \bibinfo{year}{2025}\natexlab{}.
\newblock \bibinfo{booktitle}{\emph{Initial Algebras and Terminal Coalgebras:
  The Theory of Fixed Points of Functors}}.
\newblock \bibinfo{publisher}{Cambridge University Press}.
\newblock


\bibitem[Ahrens et~al\mbox{.}(2015)]%
        {ahrens_nonwellfounded_2015}
\bibfield{author}{\bibinfo{person}{Benedikt Ahrens}, \bibinfo{person}{Paolo
  Capriotti}, {and} \bibinfo{person}{R{\'e}gis Spadotti}.}
  \bibinfo{year}{2015}\natexlab{}.
\newblock \showarticletitle{Non-Wellfounded Trees in Homotopy Type Theory}. In
  \bibinfo{booktitle}{\emph{13th International Conference on Typed Lambda
  Calculi and Applications (TLCA 2015)}} \emph{(\bibinfo{series}{Leibniz
  International Proceedings in Informatics (LIPIcs)},
  Vol.~\bibinfo{volume}{38})}. \bibinfo{publisher}{Schloss
  Dagstuhl--Leibniz-Zentrum f{\"u}r Informatik}, \bibinfo{pages}{17--30}.
\newblock
\href{https://doi.org/10.4230/LIPIcs.TLCA.2015.17}{doi:\nolinkurl{10.4230/LIPIcs.TLCA.2015.17}}


\bibitem[Angiuli et~al\mbox{.}(2021)]%
        {angiuli_representation_2021}
\bibfield{author}{\bibinfo{person}{Carlo Angiuli}, \bibinfo{person}{Evan
  Cavallo}, \bibinfo{person}{Anders M\"ortberg}, {and} \bibinfo{person}{Max
  Zeuner}.} \bibinfo{year}{2021}\natexlab{}.
\newblock \showarticletitle{Internalizing representation independence with
  univalence}.
\newblock \bibinfo{journal}{\emph{Proceedings of the ACM on Programming
  Languages}} \bibinfo{volume}{5}, \bibinfo{number}{POPL}, Article
  \bibinfo{articleno}{12} (\bibinfo{year}{2021}).
\newblock
\href{https://doi.org/10.1145/3434293}{doi:\nolinkurl{10.1145/3434293}}


\bibitem[Anwar(2026)]%
        {Anwar2026}
\bibfield{author}{\bibinfo{person}{Ashif Anwar}.}
  \bibinfo{year}{2026}\natexlab{}.
\newblock \showarticletitle{Neuro Symbolic Automated Program Repair: A
  Systematic Review of LLM-Based and Symbolic Techniques}.
\newblock \bibinfo{journal}{\emph{International Journal of Computer Engineering
  and Data Science (IJCEDS)}} \bibinfo{volume}{5}, \bibinfo{number}{1}
  (\bibinfo{date}{Mar.} \bibinfo{year}{2026}), \bibinfo{pages}{1–16}.
\newblock
\urldef\tempurl%
\url{https://www.ijceds.com/ijceds/article/view/118}
\showURL{%
\tempurl}


\bibitem[Awodey(2010)]%
        {awodey_book}
\bibfield{author}{\bibinfo{person}{Steve Awodey}.}
  \bibinfo{year}{2010}\natexlab{}.
\newblock \bibinfo{booktitle}{\emph{Category Theory} (\bibinfo{edition}{2nd}
  ed.)}.
\newblock \bibinfo{publisher}{Oxford University Press, Inc.},
  \bibinfo{address}{USA}.
\newblock
\showISBNx{0199237182}


\bibitem[Awodey et~al\mbox{.}(2017)]%
        {awodey_homotopy-initial_2017}
\bibfield{author}{\bibinfo{person}{Steve Awodey}, \bibinfo{person}{Nicola
  Gambino}, {and} \bibinfo{person}{Kristina Sojakova}.}
  \bibinfo{year}{2017}\natexlab{}.
\newblock \showarticletitle{Homotopy-Initial Algebras in Type Theory}.
\newblock  \bibinfo{volume}{63}, \bibinfo{number}{6} (\bibinfo{year}{2017}),
  \bibinfo{pages}{1--45}.
\newblock
\showISSN{0004-5411, 1557-735X}
\href{https://doi.org/10.1145/3006383}{doi:\nolinkurl{10.1145/3006383}}


\bibitem[Basold and Geuvers(2016)]%
        {basold}
\bibfield{author}{\bibinfo{person}{Henning Basold} {and}
  \bibinfo{person}{Herman Geuvers}.} \bibinfo{year}{2016}\natexlab{}.
\newblock \showarticletitle{Type Theory based on Dependent Inductive and
  Coinductive Types}. In \bibinfo{booktitle}{\emph{Proceedings of the 31st
  Annual ACM/IEEE Symposium on Logic in Computer Science}} (New York, NY, USA)
  \emph{(\bibinfo{series}{LICS '16})}. \bibinfo{publisher}{Association for
  Computing Machinery}, \bibinfo{address}{New York, NY, USA},
  \bibinfo{pages}{327–336}.
\newblock
\showISBNx{9781450343916}
\href{https://doi.org/10.1145/2933575.2934514}{doi:\nolinkurl{10.1145/2933575.2934514}}


\bibitem[Chapman et~al\mbox{.}(2010)]%
        {chapman_levitation}
\bibfield{author}{\bibinfo{person}{James Chapman},
  \bibinfo{person}{Pierre-\'{E}variste Dagand}, \bibinfo{person}{Conor
  McBride}, {and} \bibinfo{person}{Peter Morris}.}
  \bibinfo{year}{2010}\natexlab{}.
\newblock \showarticletitle{The gentle art of levitation}. In
  \bibinfo{booktitle}{\emph{Proceedings of the 15th ACM SIGPLAN International
  Conference on Functional Programming}} (Baltimore, Maryland, USA)
  \emph{(\bibinfo{series}{ICFP '10})}. \bibinfo{publisher}{Association for
  Computing Machinery}, \bibinfo{address}{New York, NY, USA},
  \bibinfo{pages}{3–14}.
\newblock
\showISBNx{9781605587943}
\href{https://doi.org/10.1145/1863543.1863547}{doi:\nolinkurl{10.1145/1863543.1863547}}


\bibitem[Chlipala(2013)]%
        {cpdt}
\bibfield{author}{\bibinfo{person}{Adam Chlipala}.}
  \bibinfo{year}{2013}\natexlab{}.
\newblock \bibinfo{booktitle}{\emph{Certified Programming with Dependent Types
  - {A} Pragmatic Introduction to the Coq Proof Assistant}}.
\newblock \bibinfo{publisher}{{MIT} Press}.
\newblock
\showISBNx{978-0-262-02665-9}
\urldef\tempurl%
\url{http://mitpress.mit.edu/books/certified-programming-dependent-types}
\showURL{%
\tempurl}


\bibitem[Cohen et~al\mbox{.}(2024)]%
        {cohen2024trocq}
\bibfield{author}{\bibinfo{person}{Cyril Cohen}, \bibinfo{person}{Enzo Crance},
  {and} \bibinfo{person}{Assia Mahboubi}.} \bibinfo{year}{2024}\natexlab{}.
\newblock \showarticletitle{Trocq: proof transfer for free, with or without
  univalence}. In \bibinfo{booktitle}{\emph{European Symposium on
  Programming}}. Springer, \bibinfo{pages}{239--268}.
\newblock
\href{https://doi.org/10.1007/978-3-031-57262-3_10}{doi:\nolinkurl{10.1007/978-3-031-57262-3_10}}


\bibitem[Cohen et~al\mbox{.}(2013)]%
        {refinementsforfree}
\bibfield{author}{\bibinfo{person}{Cyril Cohen}, \bibinfo{person}{Maxime
  D{\'e}n{\`e}s}, {and} \bibinfo{person}{Anders M{\"o}rtberg}.}
  \bibinfo{year}{2013}\natexlab{}.
\newblock \showarticletitle{Refinements for Free!}. In
  \bibinfo{booktitle}{\emph{Certified Programs and Proofs}},
  \bibfield{editor}{\bibinfo{person}{Georges Gonthier} {and}
  \bibinfo{person}{Michael Norrish}} (Eds.). \bibinfo{publisher}{Springer
  International Publishing}, \bibinfo{address}{Cham},
  \bibinfo{pages}{147--162}.
\newblock
\showISBNx{978-3-319-03545-1}
\href{https://doi.org/10.1007/978-3-319-03545-1_10}{doi:\nolinkurl{10.1007/978-3-319-03545-1_10}}


\bibitem[Dagand and McBride(2012)]%
        {dagand2012}
\bibfield{author}{\bibinfo{person}{Pierre-Evariste Dagand} {and}
  \bibinfo{person}{Conor McBride}.} \bibinfo{year}{2012}\natexlab{}.
\newblock \showarticletitle{Transporting functions across ornaments}. In
  \bibinfo{booktitle}{\emph{Proceedings of the 17th ACM SIGPLAN International
  Conference on Functional Programming}} (Copenhagen, Denmark)
  \emph{(\bibinfo{series}{ICFP '12})}. \bibinfo{publisher}{Association for
  Computing Machinery}, \bibinfo{address}{New York, NY, USA},
  \bibinfo{pages}{103–114}.
\newblock
\showISBNx{9781450310543}
\href{https://doi.org/10.1145/2364527.2364544}{doi:\nolinkurl{10.1145/2364527.2364544}}


\bibitem[Dagand and McBride(2013)]%
        {dagand2013}
\bibfield{author}{\bibinfo{person}{Pierre-Evariste Dagand} {and}
  \bibinfo{person}{Conor McBride}.} \bibinfo{year}{2013}\natexlab{}.
\newblock \showarticletitle{A Categorical Treatment of Ornaments}. In
  \bibinfo{booktitle}{\emph{Proceedings of the 2013 28th Annual ACM/IEEE
  Symposium on Logic in Computer Science}} \emph{(\bibinfo{series}{LICS '13})}.
  \bibinfo{publisher}{IEEE Computer Society}, \bibinfo{address}{USA},
  \bibinfo{pages}{530–539}.
\newblock
\showISBNx{9780769550206}


\bibitem[Damato et~al\mbox{.}(2025)]%
        {damato_formalising_2025}
\bibfield{author}{\bibinfo{person}{Stefania Damato}, \bibinfo{person}{Thorsten
  Altenkirch}, {and} \bibinfo{person}{Axel Ljungström}.}
  \bibinfo{year}{2025}\natexlab{}.
\newblock \bibinfo{title}{Formalising Inductive and Coinductive Containers}.
\newblock
\href{https://doi.org/10.48550/arXiv.2409.02603}{doi:\nolinkurl{10.48550/arXiv.2409.02603}}
\showeprint[arxiv]{2409.02603 [cs.LO]}


\bibitem[Delaware et~al\mbox{.}(2011)]%
        {Delaware2011}
\bibfield{author}{\bibinfo{person}{Benjamin Delaware}, \bibinfo{person}{William
  Cook}, {and} \bibinfo{person}{Don Batory}.} \bibinfo{year}{2011}\natexlab{}.
\newblock \showarticletitle{Product Lines of Theorems}. In
  \bibinfo{booktitle}{\emph{Proceedings of the 2011 ACM International
  Conference on Object Oriented Programming Systems Languages and
  Applications}} (Portland, Oregon, USA) \emph{(\bibinfo{series}{OOPSLA
  2011})}. \bibinfo{publisher}{ACM}, \bibinfo{address}{New York, NY, USA},
  \bibinfo{pages}{595--608}.
\newblock
\showISBNx{978-1-4503-0940-0}
\href{https://doi.org/10.1145/2048066.2048113}{doi:\nolinkurl{10.1145/2048066.2048113}}


\bibitem[Dybjer(1994)]%
        {dybjer_inductive}
\bibfield{author}{\bibinfo{person}{Peter Dybjer}.}
  \bibinfo{year}{1994}\natexlab{}.
\newblock \showarticletitle{Inductive families}.
\newblock \bibinfo{journal}{\emph{Form. Asp. Comput.}} \bibinfo{volume}{6},
  \bibinfo{number}{4} (\bibinfo{date}{July} \bibinfo{year}{1994}),
  \bibinfo{pages}{440–465}.
\newblock
\showISSN{0934-5043}
\href{https://doi.org/10.1007/BF01211308}{doi:\nolinkurl{10.1007/BF01211308}}


\bibitem[First et~al\mbox{.}(2023)]%
        {First2023}
\bibfield{author}{\bibinfo{person}{Emily First}, \bibinfo{person}{Markus~N.
  Rabe}, \bibinfo{person}{Talia Ringer}, {and} \bibinfo{person}{Yuriy Brun}.}
  \bibinfo{year}{2023}\natexlab{}.
\newblock \showarticletitle{Baldur: {{Whole-Proof Generation}} and {{Repair}}
  with {{Large Language Models}}}. In \bibinfo{booktitle}{\emph{Proceedings of
  the 31st {{ACM Joint European Software Engineering Conference}} and
  {{Symposium}} on the {{Foundations}} of {{Software Engineering}}}}
  \emph{(\bibinfo{series}{{{ESEC}}/{{FSE}} 2023})}.
  \bibinfo{publisher}{Association for Computing Machinery},
  \bibinfo{address}{New York, NY, USA}, \bibinfo{pages}{1229--1241}.
\newblock
\showISBNx{9798400703270}
\href{https://doi.org/10.1145/3611643.3616243}{doi:\nolinkurl{10.1145/3611643.3616243}}


\bibitem[Gim\'{e}nez(1994)]%
        {gimenez}
\bibfield{author}{\bibinfo{person}{Eduardo Gim\'{e}nez}.}
  \bibinfo{year}{1994}\natexlab{}.
\newblock \showarticletitle{Codifying Guarded Definitions with Recursive
  Schemes}. In \bibinfo{booktitle}{\emph{Selected Papers from the International
  Workshop on Types for Proofs and Programs}} \emph{(\bibinfo{series}{TYPES
  '94})}. \bibinfo{publisher}{Springer-Verlag}, \bibinfo{address}{Berlin,
  Heidelberg}, \bibinfo{pages}{39–59}.
\newblock
\showISBNx{3540605797}


\bibitem[Gopinathan et~al\mbox{.}(2023)]%
        {gopinathan_mostly_2023}
\bibfield{author}{\bibinfo{person}{Kiran Gopinathan}, \bibinfo{person}{Mayank
  Keoliya}, {and} \bibinfo{person}{Ilya Sergey}.}
  \bibinfo{year}{2023}\natexlab{}.
\newblock \showarticletitle{Mostly Automated Proof Repair for Verified
  Libraries}.
\newblock \bibinfo{journal}{\emph{Proc. ACM Program. Lang.}}
  \bibinfo{volume}{7}, \bibinfo{number}{PLDI}, Article \bibinfo{articleno}{107}
  (\bibinfo{date}{June} \bibinfo{year}{2023}), \bibinfo{numpages}{25}~pages.
\newblock
\href{https://doi.org/10.1145/3591221}{doi:\nolinkurl{10.1145/3591221}}


\bibitem[Gylterud et~al\mbox{.}(2026)]%
        {gylterud_terminal_2026}
\bibfield{author}{\bibinfo{person}{Hakon~Robbestad Gylterud},
  \bibinfo{person}{Elisabeth Stenholm}, {and} \bibinfo{person}{Niccolò
  Veltri}.} \bibinfo{year}{2026}\natexlab{}.
\newblock \showarticletitle{Terminal Coalgebras and Non-wellfounded Sets in
  Homotopy Type Theory}.
\newblock   \bibinfo{volume}{Volume 22, Issue 2} (\bibinfo{year}{2026}),
  \bibinfo{pages}{14325}.
\newblock
\showISSN{1860-5974}
\href{https://doi.org/10.46298/lmcs-22(2:35)2026}{doi:\nolinkurl{10.46298/lmcs-22(2:35)2026}}
\showeprint[arxiv]{2001.06696 [math.LO]}


\bibitem[Hagino(1987)]%
        {hagino}
\bibfield{author}{\bibinfo{person}{Tatsuya Hagino}.}
  \bibinfo{year}{1987}\natexlab{}.
\newblock \showarticletitle{A Typed Lambda Calculus with Categorical Type
  Constructors}. In \bibinfo{booktitle}{\emph{Category Theory and Computer
  Science}}. \bibinfo{publisher}{Springer-Verlag}, \bibinfo{address}{Berlin,
  Heidelberg}, \bibinfo{pages}{140–157}.
\newblock
\showISBNx{3540185089}


\bibitem[Huffman and Kun{\v{c}}ar(2013)]%
        {isabelletransfer}
\bibfield{author}{\bibinfo{person}{Brian Huffman} {and}
  \bibinfo{person}{Ond{\v{r}}ej Kun{\v{c}}ar}.}
  \bibinfo{year}{2013}\natexlab{}.
\newblock \showarticletitle{Lifting and Transfer: A Modular Design for
  Quotients in Isabelle/HOL}. In \bibinfo{booktitle}{\emph{Certified Programs
  and Proofs}}, \bibfield{editor}{\bibinfo{person}{Georges Gonthier} {and}
  \bibinfo{person}{Michael Norrish}} (Eds.). \bibinfo{publisher}{Springer
  International Publishing}, \bibinfo{address}{Cham},
  \bibinfo{pages}{131--146}.
\newblock
\showISBNx{978-3-319-03545-1}
\href{https://doi.org/10.1007/978-3-319-03545-1_9}{doi:\nolinkurl{10.1007/978-3-319-03545-1_9}}


\bibitem[Jacobs and Rutten(2011)]%
        {Jacobs_Rutten_2011}
\bibfield{author}{\bibinfo{person}{Bart Jacobs} {and} \bibinfo{person}{Jan
  Rutten}.} \bibinfo{year}{2011}\natexlab{}.
\newblock \bibinfo{booktitle}{\emph{An introduction to (co)algebra and
  (co)induction}}.
\newblock \bibinfo{publisher}{Cambridge University Press},
  \bibinfo{pages}{38–99}.
\newblock


\bibitem[Jana et~al\mbox{.}(2026)]%
        {jana2026proofbridge}
\bibfield{author}{\bibinfo{person}{Prithwish Jana}, \bibinfo{person}{Kaan
  Kale}, \bibinfo{person}{Ahmet~Ege Tanriverdi}, \bibinfo{person}{Cruise Song},
  \bibinfo{person}{Sriram Vishwanath}, {and} \bibinfo{person}{Vijay Ganesh}.}
  \bibinfo{year}{2026}\natexlab{}.
\newblock \bibinfo{title}{ProofBridge: Auto-Formalization of Natural Language
  Proofs in Lean via Joint Embeddings}.
\newblock
\showeprint[arxiv]{2510.15681}~[cs.LO]
\urldef\tempurl%
\url{https://arxiv.org/abs/2510.15681}
\showURL{%
\tempurl}


\bibitem[KO and GIBBONS(2016)]%
        {ko2016}
\bibfield{author}{\bibinfo{person}{HSIANG-SHANG KO} {and}
  \bibinfo{person}{JEREMY GIBBONS}.} \bibinfo{year}{2016}\natexlab{}.
\newblock \showarticletitle{Programming with ornaments}.
\newblock \bibinfo{journal}{\emph{Journal of Functional Programming}}
  \bibinfo{volume}{27} (\bibinfo{year}{2016}), \bibinfo{pages}{e2}.
\newblock
\href{https://doi.org/10.1017/S0956796816000307}{doi:\nolinkurl{10.1017/S0956796816000307}}


\bibitem[KOZEN and SILVA(2017)]%
        {KOZEN_SILVA_2017}
\bibfield{author}{\bibinfo{person}{DEXTER KOZEN} {and}
  \bibinfo{person}{ALEXANDRA SILVA}.} \bibinfo{year}{2017}\natexlab{}.
\newblock \showarticletitle{Practical coinduction}.
\newblock \bibinfo{journal}{\emph{Mathematical Structures in Computer Science}}
  \bibinfo{volume}{27}, \bibinfo{number}{7} (\bibinfo{year}{2017}),
  \bibinfo{pages}{1132–1152}.
\newblock
\href{https://doi.org/10.1017/S0960129515000493}{doi:\nolinkurl{10.1017/S0960129515000493}}


\bibitem[Laarakker et~al\mbox{.}(2026)]%
        {laarakker_constructing_2026}
\bibfield{author}{\bibinfo{person}{Bastiaan Laarakker},
  \bibinfo{person}{Daniël Otten}, {and} \bibinfo{person}{Benno van~den Berg}.}
  \bibinfo{year}{2026}\natexlab{}.
\newblock \bibinfo{title}{Constructing (Co)inductive Types via Large Sizes}.
\newblock
\href{https://doi.org/10.48550/arXiv.2602.18921}{doi:\nolinkurl{10.48550/arXiv.2602.18921}}
\showeprint[arxiv]{2602.18921 [cs.LO]}


\bibitem[Lu et~al\mbox{.}(2024)]%
        {palm}
\bibfield{author}{\bibinfo{person}{Minghai Lu}, \bibinfo{person}{Benjamin
  Delaware}, {and} \bibinfo{person}{Tianyi Zhang}.}
  \bibinfo{year}{2024}\natexlab{}.
\newblock \showarticletitle{Proof Automation with Large Language Models}. In
  \bibinfo{booktitle}{\emph{Proceedings of the 39th IEEE/ACM International
  Conference on Automated Software Engineering}} (Sacramento, CA, USA)
  \emph{(\bibinfo{series}{ASE '24})}. \bibinfo{publisher}{Association for
  Computing Machinery}, \bibinfo{address}{New York, NY, USA},
  \bibinfo{pages}{1509–1520}.
\newblock
\showISBNx{9798400712487}
\href{https://doi.org/10.1145/3691620.3695521}{doi:\nolinkurl{10.1145/3691620.3695521}}


\bibitem[Masci and Dutle(2022)]%
        {proofmate}
\bibfield{author}{\bibinfo{person}{Paolo Masci} {and} \bibinfo{person}{Aaron
  Dutle}.} \bibinfo{year}{2022}\natexlab{}.
\newblock \showarticletitle{Proof Mate: An Interactive Proof Helper for PVS
  (Tool Paper)}. In \bibinfo{booktitle}{\emph{NASA Formal Methods: 14th
  International Symposium, NFM 2022, Pasadena, CA, USA, May 24–27, 2022,
  Proceedings}} (Pasadena, CA, USA). \bibinfo{publisher}{Springer-Verlag},
  \bibinfo{address}{Berlin, Heidelberg}, \bibinfo{pages}{809–815}.
\newblock
\showISBNx{978-3-031-06772-3}
\href{https://doi.org/10.1007/978-3-031-06773-0_44}{doi:\nolinkurl{10.1007/978-3-031-06773-0_44}}


\bibitem[McBride(2011)]%
        {ornaments2011}
\bibfield{author}{\bibinfo{person}{Conor McBride}.}
  \bibinfo{year}{2011}\natexlab{}.
\newblock \bibinfo{title}{Ornamental algebras, algebraic ornaments}.
\newblock
\urldef\tempurl%
\url{https://plv.mpi-sws.org/plerg/papers/mcbride-ornaments-2up.pdf}
\showURL{%
\tempurl}


\bibitem[Monperrus(2018)]%
        {Monperrus2015}
\bibfield{author}{\bibinfo{person}{Martin Monperrus}.}
  \bibinfo{year}{2018}\natexlab{}.
\newblock \showarticletitle{Automatic Software Repair: A Bibliography}.
\newblock \bibinfo{journal}{\emph{ACM Comput. Surv.}} \bibinfo{volume}{51},
  \bibinfo{number}{1}, Article \bibinfo{articleno}{17} (\bibinfo{date}{Jan.}
  \bibinfo{year}{2018}), \bibinfo{numpages}{24}~pages.
\newblock
\showISSN{0360-0300}
\href{https://doi.org/10.1145/3105906}{doi:\nolinkurl{10.1145/3105906}}


\bibitem[Moura and Ullrich(2021)]%
        {lean}
\bibfield{author}{\bibinfo{person}{Leonardo~de Moura} {and}
  \bibinfo{person}{Sebastian Ullrich}.} \bibinfo{year}{2021}\natexlab{}.
\newblock \showarticletitle{The Lean 4 Theorem Prover and Programming
  Language}. In \bibinfo{booktitle}{\emph{Automated Deduction – CADE 28: 28th
  International Conference on Automated Deduction, Virtual Event, July 12–15,
  2021, Proceedings}}. \bibinfo{publisher}{Springer-Verlag},
  \bibinfo{address}{Berlin, Heidelberg}, \bibinfo{pages}{625–635}.
\newblock
\showISBNx{978-3-030-79875-8}
\href{https://doi.org/10.1007/978-3-030-79876-5_37}{doi:\nolinkurl{10.1007/978-3-030-79876-5_37}}


\bibitem[Nilizadeh et~al\mbox{.}(2023)]%
        {arr-repair}
\bibfield{author}{\bibinfo{person}{Amirfarhad Nilizadeh},
  \bibinfo{person}{Gary~T. Leavens}, {and} \bibinfo{person}{David~R. Cok}.}
  \bibinfo{year}{2023}\natexlab{}.
\newblock \showarticletitle{Automated Reasoning Repair}. In
  \bibinfo{booktitle}{\emph{Proceedings of the 24th ACM International Workshop
  on Formal Techniques for Java-like Programs}} (Berlin, Germany)
  \emph{(\bibinfo{series}{FTfJP '22})}. \bibinfo{publisher}{Association for
  Computing Machinery}, \bibinfo{address}{New York, NY, USA},
  \bibinfo{pages}{11–14}.
\newblock
\showISBNx{9798400707841}
\href{https://doi.org/10.1145/3611096.3611099}{doi:\nolinkurl{10.1145/3611096.3611099}}


\bibitem[Norell(2009)]%
        {agda}
\bibfield{author}{\bibinfo{person}{Ulf Norell}.}
  \bibinfo{year}{2009}\natexlab{}.
\newblock \showarticletitle{Dependently typed programming in Agda}. In
  \bibinfo{booktitle}{\emph{Proceedings of the 4th International Workshop on
  Types in Language Design and Implementation}} (Savannah, GA, USA)
  \emph{(\bibinfo{series}{TLDI '09})}. \bibinfo{publisher}{Association for
  Computing Machinery}, \bibinfo{address}{New York, NY, USA},
  \bibinfo{pages}{1–2}.
\newblock
\showISBNx{9781605584201}
\href{https://doi.org/10.1145/1481861.1481862}{doi:\nolinkurl{10.1145/1481861.1481862}}


\bibitem[Ospanov et~al\mbox{.}(2026)]%
        {ospanov2026apollo}
\bibfield{author}{\bibinfo{person}{Azim Ospanov}, \bibinfo{person}{Farzan
  Farnia}, {and} \bibinfo{person}{Roozbeh Yousefzadeh}.}
  \bibinfo{year}{2026}\natexlab{}.
\newblock \showarticletitle{{APOLLO}: Automated {LLM} and Lean Collaboration
  for Advanced Formal Reasoning}. In \bibinfo{booktitle}{\emph{The Thirty-ninth
  Annual Conference on Neural Information Processing Systems}}.
\newblock
\urldef\tempurl%
\url{https://openreview.net/forum?id=fxDCgOruk0}
\showURL{%
\tempurl}


\bibitem[Patel et~al\mbox{.}(2026)]%
        {patel2026proofrepairbench}
\bibfield{author}{\bibinfo{person}{Manooshree Patel}, \bibinfo{person}{Bartosz
  Piotrowski}, \bibinfo{person}{Leopold Haller}, {and}
  \bibinfo{person}{Hugh~James Leather}.} \bibinfo{year}{2026}\natexlab{}.
\newblock \showarticletitle{ProofRepairBench: Exploring Proof Repair in Lean}.
  In \bibinfo{booktitle}{\emph{ICLR 2026 Workshop: VerifAI-2: The Second
  Workshop on AI Verification in the Wild}}.
\newblock
\urldef\tempurl%
\url{https://openreview.net/forum?id=6SwWVNwEJK}
\showURL{%
\tempurl}


\bibitem[R and Patawar(2026)]%
        {r2026lampleanbasedagenticframework}
\bibfield{author}{\bibinfo{person}{Santhana~Srinivasan R} {and}
  \bibinfo{person}{Maithilee Patawar}.} \bibinfo{year}{2026}\natexlab{}.
\newblock \bibinfo{title}{LAMP: Lean-based Agentic framework with MCP and Proof
  Repair}.
\newblock
\showeprint[arxiv]{2606.28841}~[cs.LO]
\urldef\tempurl%
\url{https://arxiv.org/abs/2606.28841}
\showURL{%
\tempurl}


\bibitem[Reichel(2024)]%
        {reichel2024neural}
\bibfield{author}{\bibinfo{person}{Thomas Reichel}.}
  \bibinfo{year}{2024}\natexlab{}.
\newblock \emph{\bibinfo{title}{Neural approaches to theorem search \& proof
  repair}}.
\newblock \bibinfo{thesistype}{Master's\ thesis}. \bibinfo{school}{University
  of Illinois at Urbana-Champaign}.
\newblock


\bibitem[Reichel et~al\mbox{.}(2023)]%
        {reichel2023}
\bibfield{author}{\bibinfo{person}{Tom Reichel}, \bibinfo{person}{R.~Wesley
  Henderson}, \bibinfo{person}{Andrew Touchet}, \bibinfo{person}{Andrew
  Gardner}, {and} \bibinfo{person}{Talia Ringer}.}
  \bibinfo{year}{2023}\natexlab{}.
\newblock \showarticletitle{{Proof Repair Infrastructure for Supervised Models:
  Building a Large Proof Repair Dataset}}. In \bibinfo{booktitle}{\emph{14th
  International Conference on Interactive Theorem Proving (ITP 2023)}}
  \emph{(\bibinfo{series}{Leibniz International Proceedings in Informatics
  (LIPIcs)}, Vol.~\bibinfo{volume}{268})},
  \bibfield{editor}{\bibinfo{person}{Adam Naumowicz} {and}
  \bibinfo{person}{Ren\'{e} Thiemann}} (Eds.). \bibinfo{publisher}{Schloss
  Dagstuhl -- Leibniz-Zentrum f{\"u}r Informatik}, \bibinfo{address}{Dagstuhl,
  Germany}, \bibinfo{pages}{26:1--26:20}.
\newblock
\showISBNx{978-3-95977-284-6}
\showISSN{1868-8969}
\href{https://doi.org/10.4230/LIPIcs.ITP.2023.26}{doi:\nolinkurl{10.4230/LIPIcs.ITP.2023.26}}


\bibitem[Ringer(2021)]%
        {ringer_proof_2021}
\bibfield{author}{\bibinfo{person}{Talia Ringer}.}
  \bibinfo{year}{2021}\natexlab{}.
\newblock \emph{\bibinfo{title}{Proof Repair}}.
\newblock \bibinfo{thesistype}{Ph.\,D. Dissertation}.
  \bibinfo{school}{University of Washington}.
\newblock
\urldef\tempurl%
\url{https://digital.lib.washington.edu/researchworks/handle/1773/47429}
\showURL{%
\tempurl}


\bibitem[Ringer et~al\mbox{.}(2019a)]%
        {qedatlarge}
\bibfield{author}{\bibinfo{person}{Talia Ringer}, \bibinfo{person}{Karl
  Palmskog}, \bibinfo{person}{Ilya Sergey}, \bibinfo{person}{Milos Gligoric},
  {and} \bibinfo{person}{Zachary Tatlock}.} \bibinfo{year}{2019}\natexlab{a}.
\newblock \showarticletitle{QED at Large: A Survey of Engineering of Formally
  Verified Software}.
\newblock \bibinfo{journal}{\emph{Found. Trends Program. Lang.}}
  \bibinfo{volume}{5}, \bibinfo{number}{2–3} (\bibinfo{date}{Sept.}
  \bibinfo{year}{2019}), \bibinfo{pages}{102–281}.
\newblock
\showISSN{2325-1107}
\href{https://doi.org/10.1561/2500000045}{doi:\nolinkurl{10.1561/2500000045}}


\bibitem[Ringer et~al\mbox{.}(2021)]%
        {ringer2021pldi}
\bibfield{author}{\bibinfo{person}{Talia Ringer}, \bibinfo{person}{RanDair
  Porter}, \bibinfo{person}{Nathaniel Yazdani}, \bibinfo{person}{John Leo},
  {and} \bibinfo{person}{Dan Grossman}.} \bibinfo{year}{2021}\natexlab{}.
\newblock \showarticletitle{Proof repair across type equivalences}. In
  \bibinfo{booktitle}{\emph{Proceedings of the 42nd ACM SIGPLAN International
  Conference on Programming Language Design and Implementation}} (Virtual,
  Canada) \emph{(\bibinfo{series}{PLDI 2021})}. \bibinfo{publisher}{Association
  for Computing Machinery}, \bibinfo{address}{New York, NY, USA},
  \bibinfo{pages}{112–127}.
\newblock
\showISBNx{9781450383912}
\href{https://doi.org/10.1145/3453483.3454033}{doi:\nolinkurl{10.1145/3453483.3454033}}


\bibitem[Ringer et~al\mbox{.}(2018)]%
        {ringer2018cpp}
\bibfield{author}{\bibinfo{person}{Talia Ringer}, \bibinfo{person}{Nathaniel
  Yazdani}, \bibinfo{person}{John Leo}, {and} \bibinfo{person}{Dan Grossman}.}
  \bibinfo{year}{2018}\natexlab{}.
\newblock \showarticletitle{Adapting Proof Automation to Adapt Proofs}. In
  \bibinfo{booktitle}{\emph{Proceedings of the 7th ACM SIGPLAN International
  Conference on Certified Programs and Proofs}} (Los Angeles, CA, USA)
  \emph{(\bibinfo{series}{CPP 2018})}. \bibinfo{publisher}{Association for
  Computing Machinery}, \bibinfo{address}{New York, NY, USA},
  \bibinfo{pages}{115–129}.
\newblock
\showISBNx{9781450355865}
\href{https://doi.org/10.1145/3167094}{doi:\nolinkurl{10.1145/3167094}}


\bibitem[Ringer et~al\mbox{.}(2019b)]%
        {ringer2019orn}
\bibfield{author}{\bibinfo{person}{Talia Ringer}, \bibinfo{person}{Nathaniel
  Yazdani}, \bibinfo{person}{John Leo}, {and} \bibinfo{person}{Dan Grossman}.}
  \bibinfo{year}{2019}\natexlab{b}.
\newblock \showarticletitle{{Ornaments for Proof Reuse in Coq}}. In
  \bibinfo{booktitle}{\emph{10th International Conference on Interactive
  Theorem Proving (ITP 2019)}} \emph{(\bibinfo{series}{Leibniz International
  Proceedings in Informatics (LIPIcs)}, Vol.~\bibinfo{volume}{141})},
  \bibfield{editor}{\bibinfo{person}{John Harrison}, \bibinfo{person}{John
  O'Leary}, {and} \bibinfo{person}{Andrew Tolmach}} (Eds.).
  \bibinfo{publisher}{Schloss Dagstuhl -- Leibniz-Zentrum f{\"u}r Informatik},
  \bibinfo{address}{Dagstuhl, Germany}, \bibinfo{pages}{26:1--26:19}.
\newblock
\showISBNx{978-3-95977-122-1}
\showISSN{1868-8969}
\href{https://doi.org/10.4230/LIPIcs.ITP.2019.26}{doi:\nolinkurl{10.4230/LIPIcs.ITP.2019.26}}


\bibitem[Robert(2018)]%
        {robert2018}
\bibfield{author}{\bibinfo{person}{Valentin Robert}.}
  \bibinfo{year}{2018}\natexlab{}.
\newblock \emph{\bibinfo{title}{Front-end tooling for building and maintaining
  dependently-typed functional programs}}.
\newblock \bibinfo{thesistype}{Ph.\,D. Dissertation}. \bibinfo{school}{UC San
  Diego}.
\newblock


\bibitem[Rutten(2000)]%
        {rutten_universal_2000}
\bibfield{author}{\bibinfo{person}{J.~J. M.~M. Rutten}.}
  \bibinfo{year}{2000}\natexlab{}.
\newblock \showarticletitle{Universal coalgebra: a theory of systems}.
\newblock  \bibinfo{volume}{249}, \bibinfo{number}{1} (\bibinfo{year}{2000}),
  \bibinfo{pages}{3--80}.
\newblock
\showISSN{0304-3975}
\href{https://doi.org/10.1016/S0304-3975(00)00056-6}{doi:\nolinkurl{10.1016/S0304-3975(00)00056-6}}


\bibitem[Tabareau et~al\mbox{.}(2018)]%
        {equivalencesforfree}
\bibfield{author}{\bibinfo{person}{Nicolas Tabareau}, \bibinfo{person}{\'{E}ric
  Tanter}, {and} \bibinfo{person}{Matthieu Sozeau}.}
  \bibinfo{year}{2018}\natexlab{}.
\newblock \showarticletitle{Equivalences for Free: Univalent Parametricity for
  Effective Transport}.
\newblock \bibinfo{journal}{\emph{Proc. ACM Program. Lang.}}
  \bibinfo{volume}{2}, \bibinfo{number}{ICFP}, Article \bibinfo{articleno}{92}
  (\bibinfo{date}{jul} \bibinfo{year}{2018}), \bibinfo{numpages}{29}~pages.
\newblock
\href{https://doi.org/10.1145/3236787}{doi:\nolinkurl{10.1145/3236787}}


\bibitem[Tabareau et~al\mbox{.}(2021)]%
        {marriage}
\bibfield{author}{\bibinfo{person}{Nicolas Tabareau}, \bibinfo{person}{\'{E}ric
  Tanter}, {and} \bibinfo{person}{Matthieu Sozeau}.}
  \bibinfo{year}{2021}\natexlab{}.
\newblock \showarticletitle{The Marriage of Univalence and Parametricity}.
\newblock \bibinfo{journal}{\emph{J. ACM}} \bibinfo{volume}{68},
  \bibinfo{number}{1}, Article \bibinfo{articleno}{5} (\bibinfo{date}{jan}
  \bibinfo{year}{2021}), \bibinfo{numpages}{44}~pages.
\newblock
\showISSN{0004-5411}
\href{https://doi.org/10.1145/3429979}{doi:\nolinkurl{10.1145/3429979}}


\bibitem[Tahat et~al\mbox{.}(2024)]%
        {tahat2024proof}
\bibfield{author}{\bibinfo{person}{Amer Tahat}, \bibinfo{person}{David Hardin},
  \bibinfo{person}{Adam Petz}, {and} \bibinfo{person}{Perry Alexander}.}
  \bibinfo{year}{2024}\natexlab{}.
\newblock \showarticletitle{Proof repair utilizing large language models: A
  case study on the copland remote attestation proofbase}. In
  \bibinfo{booktitle}{\emph{International Conference on Bridging the Gap
  between AI and Reality}}. Springer, \bibinfo{pages}{145--166}.
\newblock


\bibitem[{Univalent Foundations Program}(2013)]%
        {hottbook}
\bibfield{author}{\bibinfo{person}{The {Univalent Foundations Program}}.}
  \bibinfo{year}{2013}\natexlab{}.
\newblock \bibinfo{booktitle}{\emph{Homotopy Type Theory: Univalent Foundations
  of Mathematics}}.
\newblock \bibinfo{publisher}{\url{https://homotopytypetheory.org/book}},
  \bibinfo{address}{Institute for Advanced Study}.
\newblock


\bibitem[Vezzosi et~al\mbox{.}(2019)]%
        {vezzosi_cubical_agda}
\bibfield{author}{\bibinfo{person}{Andrea Vezzosi}, \bibinfo{person}{Anders
  M\"{o}rtberg}, {and} \bibinfo{person}{Andreas Abel}.}
  \bibinfo{year}{2019}\natexlab{}.
\newblock \showarticletitle{Cubical agda: a dependently typed programming
  language with univalence and higher inductive types}.
\newblock \bibinfo{journal}{\emph{Proc. ACM Program. Lang.}}
  \bibinfo{volume}{3}, \bibinfo{number}{ICFP}, Article \bibinfo{articleno}{87}
  (\bibinfo{date}{July} \bibinfo{year}{2019}), \bibinfo{numpages}{29}~pages.
\newblock
\href{https://doi.org/10.1145/3341691}{doi:\nolinkurl{10.1145/3341691}}


\bibitem[Viola et~al\mbox{.}(2025)]%
        {viola2025oopsla}
\bibfield{author}{\bibinfo{person}{Cosmo Viola}, \bibinfo{person}{Max Fan},
  {and} \bibinfo{person}{Talia Ringer}.} \bibinfo{year}{2025}\natexlab{}.
\newblock \showarticletitle{Proof Repair across Quotient Type Equivalences}.
\newblock \bibinfo{journal}{\emph{Proc. ACM Program. Lang.}}
  \bibinfo{volume}{9}, \bibinfo{number}{OOPSLA2}, Article
  \bibinfo{articleno}{386} (\bibinfo{date}{Oct.} \bibinfo{year}{2025}),
  \bibinfo{numpages}{26}~pages.
\newblock
\href{https://doi.org/10.1145/3763164}{doi:\nolinkurl{10.1145/3763164}}


\bibitem[Wang et~al\mbox{.}(2026)]%
        {wang2026learningrepairleanproofs}
\bibfield{author}{\bibinfo{person}{Evan Wang}, \bibinfo{person}{Simon Chess},
  \bibinfo{person}{Daniel Lee}, \bibinfo{person}{Siyuan Ge},
  \bibinfo{person}{Ajit Mallavarapu}, \bibinfo{person}{Jarod Alper}, {and}
  \bibinfo{person}{Vasily Ilin}.} \bibinfo{year}{2026}\natexlab{}.
\newblock \bibinfo{title}{Learning to Repair Lean Proofs from Compiler
  Feedback}.
\newblock
\showeprint[arxiv]{2602.02990}~[cs.LG]
\urldef\tempurl%
\url{https://arxiv.org/abs/2602.02990}
\showURL{%
\tempurl}


\bibitem[Woos et~al\mbox{.}(2016)]%
        {Woos2016}
\bibfield{author}{\bibinfo{person}{Doug Woos}, \bibinfo{person}{James~R.
  Wilcox}, \bibinfo{person}{Steve Anton}, \bibinfo{person}{Zachary Tatlock},
  \bibinfo{person}{Michael~D. Ernst}, {and} \bibinfo{person}{Thomas Anderson}.}
  \bibinfo{year}{2016}\natexlab{}.
\newblock \showarticletitle{Planning for Change in a Formal Verification of the
  Raft Consensus Protocol}. In \bibinfo{booktitle}{\emph{Proceedings of the 5th
  ACM SIGPLAN Conference on Certified Programs and Proofs}} (St. Petersburg,
  FL, USA) \emph{(\bibinfo{series}{CPP 2016})}. \bibinfo{publisher}{ACM},
  \bibinfo{address}{New York, NY, USA}, \bibinfo{pages}{154--165}.
\newblock
\showISBNx{978-1-4503-4127-1}
\href{https://doi.org/10.1145/2854065.2854081}{doi:\nolinkurl{10.1145/2854065.2854081}}


\bibitem[Yang et~al\mbox{.}(2026)]%
        {yang2026exverus}
\bibfield{author}{\bibinfo{person}{Jun Yang}, \bibinfo{person}{Yuechun Sun},
  \bibinfo{person}{Yi Wu}, \bibinfo{person}{Rodrigo Caridad},
  \bibinfo{person}{Yongwei Yuan}, \bibinfo{person}{Jianan Yao},
  \bibinfo{person}{Shan Lu}, {and} \bibinfo{person}{Kexin Pei}.}
  \bibinfo{year}{2026}\natexlab{}.
\newblock \showarticletitle{ExVerus: Verus Proof Repair via Counterexample
  Reasoning}. In \bibinfo{booktitle}{\emph{Forty-third International Conference
  on Machine Learning}}.
\newblock
\urldef\tempurl%
\url{https://openreview.net/forum?id=FNDkXA0OUJ}
\showURL{%
\tempurl}


\end{thebibliography}

\end{document}